\newtheorem{lemma}{Lemma}[section]
\newtheorem{theorem}{Theorem}[section]
\newcommand\fig[1] {{\rm Figure}~\ref{fig:#1}}
\newcommand\labfig[1] {\label{fig:#1}}
\newcommand\sect[1] {\ref{sect:#1}}
\newcommand\labsect[1] {\label{sect:#1}}
\newcommand{\bfm}[1]{\mbox{\boldmath ${#1}$}}
\newcommand{\nonum}{\nonumber \\}
\newcommand\eq[1] {(\ref{#1})}
\newcommand{\beqa}{\begin{eqnarray}}
\newcommand{\eeqa}[1]{\label{#1}\end{eqnarray}}
\newcommand{\beq}{\begin{equation}}
\newcommand{\eeq}[1]{\label{#1}\end{equation}}
\newcommand{\R}{\mathbb{R}}
\newcommand{\Grad}{\nabla}
\newcommand{\Div}{\nabla \cdot}
\newcommand{\Tr}{\mathop{\rm Tr}\nolimits}
\newcommand{\lang}{\langle}
\newcommand{\rang}{\rangle}
\newcommand{\Md}{\partial}
\newcommand{\dund}[1]{\underline{\underline{#1}}}
\newcommand{\und}[1]{\smash{\underline{#1}}}
\newcommand{\Ga}{\alpha}
\newcommand{\Gb}{\beta}
\newcommand{\Gd}{\delta}
\newcommand{\Ge}{\epsilon}
\newcommand{\Gg}{\gamma}
\newcommand{\Gk}{\kappa}
\newcommand{\Gl}{\lambda}
\newcommand{\Gn}{\eta}
\newcommand{\Gm}{\mu}
\newcommand{\Gv}{\nu}
\newcommand{\Gt}{\theta}
\newcommand{\Gz}{\zeta}
\newcommand{\GP}{\Pi}
\newcommand{\GO}{\Omega}
\newcommand{\BGe}{\bfm\epsilon}
\newcommand{\BGn}{\bfm\eta}
\newcommand{\BGs}{\bfm\sigma}
\newcommand{\BGx}{\bfm\xi}
\newcommand{\BGG}{\bfm\Gamma}
\newcommand{\BGL}{\bfm\Lambda}
\newcommand{\CA}{{\cal A}}
\newcommand{\CV}{{\cal V}}
\newcommand{\bpm}{\begin{pmatrix}}
\newcommand{\epm}{\end{pmatrix}}
\def\b0{\bf 0}
\def\Ba{{\bf a}}
\def\Bb{{\bf b}}
\def\Bc{{\bf c}}
\def\Bd{{\bf d}}
\def\Be{{\bf e}}
\def\Bk{{\bf k}}
\def\Bm{{\bf m}}
\def\Bn{{\bf n}}
\def\Bt{{\bf t}}
\def\Bu{{\bf u}}
\def\Bv{{\bf v}}
\def\Bw{{\bf w}}
\def\Bx{{\bf x}}
\def\By{{\bf y}}
\def\Bz{{\bf z}}
\def\BA{{\bf A}}
\def\BB{{\bf B}}
\def\BC{{\bf C}}
\def\BI{{\bf I}}
\def\BM{{\bf M}}
\def\BN{{\bf N}} 
\def\BO{{\bf O}}
\def\BP{{\bf P}}
\def\BS{{\bf S}}
\title{On the possible effective elasticity tensors of 2-dimensional and 3-dimensional printed materials}
\date{}
\begin{document}
\maketitle
\vskip -.5cm
\centerline{\large
Graeme W. Milton\footnote{Department of Mathematics, University of Utah, USA -- milton@math.utah.edu,},
\quad
Marc Briane\footnote{Institut de Recherche Math\'ematique de Rennes, INSA de Rennes, FRANCE -- mbriane@insa-rennes.fr,}
\,\, and \,\,
Davit Harutyunyan \footnote{Department of Mathematics, University of Utah, USA -- davith@math.utah.edu,},
}
\vskip 1.cm
\begin{abstract}
The set $GU_f$ of possible effective elastic tensors of composites built from two materials with elasticity tensors $\BC_1>0$ and $\BC_2=0$
comprising the set $U=\{\BC_1,\BC_2\}$ and mixed in proportions $f$ and $1-f$ is 
partly characterized. The material with tensor $\BC_2=0$ corresponds to
a material which is void. (For technical reasons $\BC_2$ is actually taken to be nonzero and we take the limit $\BC_2\to 0$).
Specifically, recalling that $GU_f$ is completely characterized through minimums of sums of energies, involving a set of
applied strains, and complementary energies, involving a set of applied stresses, we provide descriptions of 
microgeometries that in appropriate limits achieve the minimums in many cases. In these cases the calculation of the minimum is reduced to a finite dimensional
minimization problem that can be done numerically. Each microgeometry consists of a union of walls in appropriate directions, where the material
in the wall is an appropriate $p$-mode material, that is easily compliant to $p\leq 5$ independent applied strains, yet supports any stress
in the orthogonal space. Thus the material can easily slip in certain directions along the walls. The region outside the walls contains ``complementary Avellaneda material'' which
is a hierarchical laminate which minimizes the sum of complementary energies. 
\end{abstract}

%%%%%%%%%%%%%%%%%%%%%%%%%%%%%%%%%%%%%%%%%%%%%%%%%%%%%%%%%%%%%%%%%%%%%%%%
\section{Introduction}
\setcounter{equation}{0}
%%%%%%%%%%%%%%%%%%%%%%%%%%%%%%%%%%%%%%%%%%%%%%%%%%%%%%%%%%%%%%%55
Here we consider what effective elasticity tensors can be produced in the limit $\Gd\to 0$ if we mix in prescribed proportions two materials with
positive definite and bounded elasticity tensors $\BC_1$ and $\BC_2=\Gd\BC_0$. In the limit $\Gd\to 0$ this represents a mixture of an elastic phase and
an extremely compliant phase. Thus we are given a set $U=\{\BC_1,\Gd\BC_0\}$ and we are aiming to characterize as best
we can the set $GU_f$ representing the set of all possible effective tensors of composites having a volume fraction $f$ of phase 1. The elasticity tensor
 $\BC_1$ need not be isotropic but if it is anisotropic we require that it has a fixed orientation throughout the composite.   Our results are summarized by the theorem in the conclusion section.

 To get an idea of the enormity
of the problem one has to recognize that in three-dimensions elasticity tensors can be represented by $6\times 6$ matrices and these have $21$ independent elements. 
The set of possible elasticity tensors is thus represented as a set in a $21$ dimensional space. Even a distorted multidimensional cube in a $21$-dimensional space needs about
44 million real numbers to represent it (specifying the position in $21$-dimensional space of each of the $2^{21}$ vertices). In the case the two phases are isotropic, one is free to rotate the material to obtain an equivalent
structure. Thus the set of possible elasticity tensors is invariant under rotations. As rotations involve $3$ parameters
(the Euler angles) this reduces the number of constants needed to describe the elasticity tensor from $21$ to $21-3=18$, and thus the elasticity tensor can be represented in an $18$-dimensional space of tensor invariants. For example,
in the generic case, one can take these 18 invariants as follows: the 6 eigenvalues of the elasticity tensor; the 2 
independent elements of the normalized eigenstrain associated with the lowest eigenvalue, that can be assumed to be diagonal by an appropriate choice of the coordinate axes (which then fixes these axes); the 4 independent elements
of the normalized eigenstrain associated with the second lowest eigenvalue that is orthogonal to the first 
eigenstrain; the 3 independent elements of the normalized eigenstrain associated with the third lowest eigenvalue that is orthogonal to the first two eigenstrains; the 2 independent elements of the normalized eigenstrain associated with the third lowest eigenvalue that is orthogonal to the first three eigenstrains; and the one independent elements of the normalized eigenstrain associated with the third lowest eigenvalue that is orthogonal to the first four eigenstrains. This brings the total to 6+2+4+3+2+1=18. In the same way that it takes two parameters (the bulk and shear moduli) to specify the elastic behavior of an isotropic material, it takes 18 parameters to specify the elastic behavior of a fully anisotropic material.

A distorted cube in this $18$- dimensional space still requires about 4.7 million numbers to represent it. This makes exploring the range of possible elasticity tensors a daunting, if not impossible, numerical task. Some numerical exploration of this space
has been done by Sigmund \cite{Sigmund:1994:MPC,Sigmund:1995:TMP}, but we emphasize that this exploration covers only a tiny fraction of the number of possibilities. 

Furthermore, the microstructures we found that lie near the boundary of $GU_f$ have quite complicated multiscale architectures and thus would be difficult to find numerically. Also, it is not clear whether there are signifiicanty simpler microstructures that can do the job. The numerical route of Sigmind, should provide some simpler alternatives 
for the strut configurations in the multimode structures in the walls, although even then one needs to make subtle multiscale replacements (such as those appearing later in \fig{7} and \fig{8}) to achieve the desired performance.
Numerical tests need to be made to see whether or not one can achieve the same performance with simpler structures.
While strut configurations might be suitable at low volume fractions they are unlikely to be ideal at high volume fractions.
Work by Allaire and Aubry \cite{Allaire:1999:OMP} shows that sometimes optimal microstructures necessarily have structure on multiple length scales. Even if one could numerically explore the question, it is not clear how one could summarize the results in any concise way.

From the applied side there is growing interest in trying to characterize the effective elasticity tensors of microstructures that can be produced by three-dimensional or two-dimensional printing. A dramatic example of such a microstructure is given in \fig{0}. Our results have obvious relevance to this problem 
in the case where the $3-d$ printed material uses only one isotropic material plus void. Although our microstructures are somewhat extreme, they provide benchmarks that show what is theoretically possible. What is possible in practice will be a subset of this. 

The microstructures we consider involve taking three limits. First, as they have structure on multiple length scales, the homogenization limit where the ratio between
length scales goes to infinity needs to be taken. Second, the limit $\Gd\to 0$ needs to be taken. Third, as the structure involves this walls of width $\Ge$, along which the material can "slip", the limit $\Ge\to 0$ needs to be taken so the contribution to the complementary energy of these walls goes to zero, when the structure supports an applied stress. 
(Here $\Ge$ should not be confused with the size of the unit cell, as common in homogenization theory).
The limits should be taken in this order, as, for example, standard homogenization theory is justified
only if $\Gd\ne 0$, so we need to take the homogenization limit before taking the limit $\Gd\to 0$. In the walled 
structures the material may only occupy a small volume fraction, but this is ulimately irrelevant as the thin walled
structures themselves only occupy only a very small volume fraction in the final material (that goes to zero as
$\Ge\to 0$).

The case, applicable to printed materials, when phase 2 is actually void, rather than almost void, requires special care.
To justify the homogenization steps taken here one has to first replace the void phase 2 with a composite foam having a small amount of phase1 as the matrix phase, so that its effective elastcity tensor is nonzero, but approaches zero as the proportion of phase 1 in it tends to zero. The microgeometry in this composite needs to be much smaller than the scales in the geometries discussed here, which would involve mixtures of it and phase 1.

We emphasize, too, that our analysis is valid only for linear elasticity, and ignores nonlinear effects  such as buckling.
In reality the structures will easily buckle under compression. This buckling will occur, for example, in the walled pentamode structure of \fig{8}. Additionally, some of the multimode materials are constructed via a superposition of 
appropriately shifted and deformed pentamode materials, and these substructures will interact under finite deformations.
Also, in practice it would be difficult to realize the delicate multiscale materials that come close to attaining the bounds.
Thus what is practically realizable will just be a subset, dependent of the current state of technology, of the set
 $GU_f$.

We also remark that while the title refers only to printed materials, the results are also applicable to 
any periodic, or statistically homogeneous, material containing voids or pores in a homogeneous material. 
Printed materials are more interesting than typical porous materials as they allow one to explore a wider range of 
interesting structures. 

In a companion paper \cite{Milton:2016:TCC} we consider the opposite limit $\Gd\to \infty$ corresponding to a mixture of an elastic material and an almost rigid material.

%%%%%%%%%%%%%%%%%%%%%%%%%%%%%%%%%%%%%%%%%%%%%%%%%%%%%%%%%%%%%%%%%%%%%%%%%%
\section{A review of some bounds on the elastic moduli of two-phase composites and geometries that attain them.}
\labsect{review}
\setcounter{equation}{0}
%%%%%%%%%%%%%%%%%%%%%%%%%%%%%%%%%%%%%%%%%%%%%%%%%%%%%%%%%%%%%%%%%%%%%%%%
Here we review a selection of results on sharp bounds on the elastic response of two-phase composites and the associated problem of identifying
optimal geometries that attain them. The interested reader is also encouraged to look at the books of  
\cite{Nemat-Nasser:1999:MOP}, \cite{Cherkaev:2000:VMS}, \cite{Milton:2002:TOC}, \cite{Allaire:2002:SOH}, \cite{Torquato:2002:RHM}, \cite{Tartar:2009:GTH}
which provide a much more comprehensive survey.

The most elementary bounds on the elastic properties of composites, are the classical bounds of Hill 
\cite{Hill:1952:EBC}
who implicitly showed that
\beq \lang[\BC(\Bx)]^{-1}\rang^{-1} \leq \BC_* \leq \lang\BC(\Bx)\rang.
\eeq{0.1}
Here the angular brackets $\lang\cdot\rang$ denote a volume average, and the inequality holds in the sense 
of quadratic forms, i.e., for fourth order tensors $\BA$ and $\BB$ satisfying the symmetries of elasticity
tensors we say that $\BA\geq\BB$ if $\BGe:\BA\BGe\geq\BGe:\BB\BGe$ for all matrices $\BGe$. While
these bounds were not explicitly stated by Hill in his 1952 paper they are an immediate and obvious consequence
of his equation (2). 
If the two phases are isotropic 
the spectral decomposition of the elasticity tensors $\BC_1$ and $\BC_2$ of the two phases is
\beq \BC_1=3\Gk_1\BGL_h+2\Gm_1\BGL_s~~{\rm and}~~
\BC_2=3\Gk_2\BGL_h+2\Gm_2\BGL_2 \eeq{0.3}
where $\Gk_1$ and $\Gk_2$ are the bulk-moduli of the two phases, and $\Gm_1$ and $\Gm_2$ are the bulk-moduli,
and 
\beq \{\BGL_h\}_{ijk\ell}={\textstyle {1\over3}}\Gd_{ij}\Gd_{k\ell},~~~
\{\BGL_s\}_{ijk\ell}={\textstyle {1\over2}}[\Gd_{ik}\Gd_{j\ell}+\Gd_{i\ell}\Gd_{kj}]
-{\textstyle {1\over3}}\Gd_{ij}\Gd_{k\ell} \eeq{0.2}
act as projections. The tensor $\BGL_h$ projects onto
the one-dimensional space of matrices proportional
to the second-order identity matrix, while $\BGL_s$ projects onto
the five-dimensional space of trace free matrices. 
Similarly if the effective elasticity tensor $\BC_*$ is isotropic we have that $\BC_*=3\Gk_1\BGL_h+2\Gm_1\BGL_s$,
where $\Gk_*$ and $\Gm_*$ are the effective bulk and shear moduli of the composite. 
In this paper we are interested in the case where the two phases are well-ordered in the sense that
\beq \BC_1\geq\BC_2, \eeq{0.3a}
and we will take the limit as $\BC_2\to 0$, meaning that all the eigenvalues of $\BC_2$ approach zero.
In the case of isotropic components this well-ordering assumption is satisfied if  $\Gk_1 \geq \Gk_2$ and $\Gm_1 \geq \Gm_2$, and we will take the limit as both $\Gk_2$ and $\Gm_2$ approach zero.

For isotropic composites of two well-ordered materials Hashin and Shtrikman \cite{Hashin:1963:VAT} and Hill \cite{Hill:1963:EPR}
obtained the celebrated bounds 
\beqa
 \Gk_* & \geq & f\Gk_1+(1-f)\Gk_2
-\frac{f(1-f)(\Gk_1-\Gk_2)^2}{(1-f)\Gk_1+ f\Gk_2+4\Gm_2/3}, \nonum
\Gk_* & \leq & f\Gk_1+(1-f)\Gk_2
-\frac{f(1-f)(\Gk_1-\Gk_2)^2}{(1-f)\Gk_1+ f\Gk_2+4\Gm_1/3}, \nonum
\Gm_* & \geq &  f\Gm_1+(1-f)\Gm_2
-\frac{f(1-f)(\Gm_1-\Gm_2)^2}{(1-f)\Gm_1+ f\Gm_2+\Gm_2(9\Gk_2+8\Gm_2)/[6(\Gk_2+2\Gm_2)]},\nonum
\Gm_* & \leq &  f\Gm_1+(1-f)\Gm_2
-\frac{f(1-f)(\Gm_1-\Gm_2)^2}{(1-f)\Gm_1+ f\Gm_2+\Gm_1(9\Gk_1+8\Gm_1)/[6(\Gk_1+2\Gm_1)]}.
\eeqa{0.4}
In fact these bounds (and the variational principles they derive from) hold even if one component has a negative 
bulk modulus, so long as the composite is stable \cite{Kochmann:2015:RBE}.
For two-dimensional
composites (fiber reinforced materials) analogous
bounds on the effective elastic moduli were found by
Hill \cite{Hill:1964:TMP} and Hashin \cite{Hashin:1965:EBF}. Bounds on the complex effective bulk and shear moduli of isotropic two-phase, two or three-dimensional
composites were also obtained \cite{Gibiansky:1993:EVM,Milton:1997:EVM,Gibiansky:1999:EVM,Gibiansky:1993:BCB,Gibiansky:1993:VCE,Gibiansky:1997:BCB}: these are appropriate to the propagation of fixed frequency elastic waves in composites when
one or both of the phases is viscoelastic, and when the wavelength is much larger than the microstructure. 

An important ``attainability principle'' is that bounds obtained by substituting a trial field in a variational principle
will be attained when the geometry is such that the actual field matches this trial field. This
principle was used, for example, in \cite{Milton:1981:CBT} to find geometries that attain the Hashin-Shtrikman bounds on the
effective bulk modulus of composites with three or more phases (see also \cite{Gibiansky:2000:MCE}). The Hashin-Shtrikman 
variational principles involve a minimization over trial polarization fields, and the actual polarization field
depends on the choice of the elasticity tensor $\BC_0$ of a ``reference medium'' (typically chosen to be positive definite) and is defined by
\beq \BP(\Bx)=(\BC(\Bx)-\BC_0)\BGe(\Bx)=\BGs(\Bx)-\BC_0\BGe(\Bx). \eeq{0.5}
The variational principles require that $\BC(\Bx)-\BC_0$ be either positive semidefinite or negative semidefinite,
so in the case of a well-ordered material natural choices of $\BC_0$ are $\BC_1$ or $\BC_2$ and correspondingly 
the field will be zero respectively in phase 1 or phase 2. The bounds are obtained by assuming it is constant in the other phase
(proportional to the identity in case of the bulk modulus bounds, and trace free for the shear modulus bounds). 
Hashin and Shtrikman recognized \cite{Hashin:1963:VAT} that the effective bulk modulus would be attained by the Hashin assemblage of coated spheres \cite{Hashin:1962:EMH}. A single coated
sphere can be a neutral inclusion: if the surrounding ``matrix'' material has an appropriate bulk modulus (with a specific value between $\Gk_1$ and $\Gk_2$)
one can insert it in the ``matrix'' material without disturbing a surrounding hydrostatic field (this is the principle behind the unfeelability 
cloak of B{\"u}ckmann, Thiel, Kadic, Schittny and Wegener \cite{Buckmann:2014:EMU}). The inclusion is invisible to the surrounding field and one can 
continue to insert similar inclusions, scaled to sizes ranging to the very small,
until one essentially obtains a two--phase composite with effective bulk modulus the same as the original ``matrix'' material. Due to radial symmetry
the forces acting on the spherical inner core will be equally distributed around the boundary and directed radially: thus the field inside the core material
is hydrostatic and constant, and hence by the ``attainability principle'', and due to their neutrality, sphere assemblages must attain the
effective bulk modulus bounds in \eq{0.4}.

One very important class of microgeometries for which the field is constant in one phase are the 
sequentially layered laminates (first introduced by Maxwell \cite{Maxwell:1954:TEM}) built by layering phase 2 with phase 1 in a direction $\Bn_1$ (by which we mean $\Bn_1$ is perpendicular to the layers), 
then taking this laminate and layering it again on a much larger length scale with phase 1 in a direction $\Bn_2$ to obtain a ``rank 2'' laminate, and continuing 
this process until one obtains a rank $m$-laminate, containing in a sense a ``core'' of phase 2 surrounded by layers of phase 1. The field is then constant in the core material
of phase 2. An explicit formula for the
effective elasticity tensor of such sequentially layered laminates was obtained by Francfort and Murat \cite{Francfort:1986:HOB} generalizing the analogous formulas obtained by 
Tartar \cite{Tartar:1985:EFC} for conductivity.
Of course one can switch the roles of the phases in this construction and thus obtain a material where the field is constant in phase 1. It then immediately follows
from the ``attainability principle''(without requiring any calculation!) that one can attain the Hashin-Shtrikman shear modulus bounds \eq{0.4} 
(and simultaneously the bulk modulus bounds)
if one can find a sequentially layered laminate
that has an isotropic elasticity tensor, and the easiest way to do this is to do the lamination sequentially by adding infinitesimal layers in random directions. This established the
attainability of the Hashin-Shtrikman shear modulus bound \cite{Milton:1986:MPC}, also established independently and at the same time by Norris \cite{Norris:1985:DSE},
using the differential scheme that was known to be realizable \cite{Milton:1985:SEM, Avellaneda:1987:IHD}--in fact Roscoe \cite{Roscoe:1973:ICE} had earlier realized the differential
approximation scheme could produce the desired shear modulus--and at the same time elegantly by Francfort and Murat \cite{Francfort:1986:HOB}, using sequentially layered laminates with 
just five directions of lamination (in the case of three dimensional composites). 

Hill \cite{Hill:1963:EPR} proved that the bulk modulus bounds are valid also in the non-well-ordered case
where $\Gm_1\geq\Gm_2$ but $\Gk_1\leq\Gk_2$.
As far as we are aware, the tightest bounds on the effective shear modulus of three-dimensional composites 
in the non-well-ordered case where $\Gm_1\geq\Gm_2$ but $\Gk_1\leq\Gk_2$
are those of Milton and Phan-Thien \cite{Milton:1982:NBE}
\beqa
&~&\min_{\substack{\Gz \\ 0 \leq \Gz \leq 1}}
\frac{8\lang 6/\Gm+7/\Gk\rang_{\Gz}+15/\Gm_2}
{2(\lang 21/\Gm+ 2/\Gk \rang_{\Gz}/\Gm_2
+ 40\lang 1/\Gm\rang_{\Gz}\lang 1/\Gk\rang_{\Gz})} \nonum
&~&\quad\quad\quad\quad\leq \frac{f(1-f)(\Gm_1-\Gm_2)^2}
{f\Gm_1+ (1-f)\Gm_2-\Gm_*}-(1-f)\Gm_1-f\Gm_2
\nonum
&~&\quad\quad\quad\quad\quad\quad\quad\quad\leq \max_{\substack{\Gz \cr \\ 0 \leq \Gz \leq 1 \cr}}
\frac{8 \Gm_1\lang 6\Gk+7\Gm\rang_{\Gz}
+15\lang\Gm\rang_{\Gz}\lang\Gk\rang_{\Gz}}
{2(\lang 21\Gk+2\Gm\rang_{\Gz}+40\Gm_1)}, \eeqa{0.6}
where for any quantity $a$ taking values $a_1$ and $a_2$ in phase $1$ and phase $2$, respectively, we define
$\lang a \rang_{\Gz}\equiv \Gz a_1+(1-\Gz)a_2$.
These bounds are obtained by eliminating the geometric parameters from the bounds of Milton and Phan-Thien \cite{Milton:1982:NBE}
and are tighter than the more well-known Walpole bounds \cite{Walpole:1966:BOE}, and in fact sharp (as they coincide with the Hashin-Shtrikman formula,
which correspond to particular geometries as we have discussed)
when the moduli are slightly non-well-ordered. Specifically, the first bound in \eq{0.6} is sharp when the minimum over $\Gz$ is attained at $\Gz=0$, which occurs when
\beq \Gk_1-\Gk_2 \geq -\frac{(3\Gk_2+8\Gm_2)^2}{42\Gk_2^2}\frac{\Gk_1\Gk_2}{\Gm_1\Gm_2}(\Gm_1-\Gm_2)
\eeq{0.7}
and the second bound in \eq{0.6} is sharp when the maximum over $\Gz$ is attained at $\Gz=1$, which occurs when
\beq \Gk_1-\Gk_2 \geq -\frac{(3\Gk_1+8\Gm_1)^2}{42\Gm_1^2}(\Gm_1-\Gm_2).
\eeq{0.8}
The bounds \eq{0.4} and \eq{0.6} constrain the pair $(\Gk_*,\Gm_*)$ 
to lie in a rectangular box. Berryman and Milton \cite{Berryman:1988:MRC} 
obtained tighter coupled bounds which slice off two opposing corner regions of the box by eliminating the geometric parameters from the 
bulk modulus bounds of Beran \cite{Beran:1966:UCV} (as simplified by Milton \cite{Milton:1981:BEE}) and from the
shear modulus bounds of Milton and Phan-Thien \cite{Milton:1982:NBE}. There is good reason to believe these bounds can be improved as the 
analogous two-dimensional bounds are not as tight as the
bounds of Cherkaev and Gibiansky \cite{Cherkaev:1993:CEB} coupling $\Gk_*$ and $\Gm_*$ which were derived using the translation method.

For anisotropic composites with an effective tensor $\BC_*$, the microstructure independent bounds that are directly analogous
to the Hashin-Shtrikman-Hill bounds, given by \eq{0.4}, are the ``Trace bounds'' 
\beqa  
f\Tr[\BGL_h(\BC_*-\BC_2)^{-1}]
& \leq & \frac{1}{3(\Gk_1-\Gk_2)}+\frac{(1-f)}{3\Gk_2+4\Gm_2},\nonum
(1-f)\Tr[\BGL_h(\BC_1-\BC_*)^{-1}]
& \leq & \frac{1}{3(\Gk_1-\Gk_2)}-\frac{f}{3\Gk_1+4\Gm_1},\nonum
f\Tr[\BGL_s(\BC_*-\BC_2)^{-1}] 
&\leq & \frac{5}{2(\Gm_1-\Gm_2)}+\frac{3(\Gk_2+2\Gm_2)(1-f)}{\Gm_2(3\Gk_2+4\Gm_2)},\nonum
 (1-f)\Tr[\BGL_s(\BC_1-\BC_*)^{-1}] 
&\leq & \frac{5}{2(\Gm_1-\Gm_2)}-\frac{3(\Gk_1+2\Gm_1)f}{\Gm_1(3\Gk_1+4\Gm_1)},
\eeqa{0.9}
obtained independently by Milton and Kohn \cite{Milton:1988:VBE} and Zhikov \cite{Zhikov:1988:ETA,Zhikov:1991:ETA,Zhikov:1991:EHM}. In these
expressions the fourth order tensors $\BGL_h$ multiply the fourth order tensors on their right, and
\beq \Tr[\BA]=A_{ijij} \eeq{0.10}
defines the ``Trace'' of a fourth order tensor (see also Francfort and Murat \cite{Francfort:1986:HOB} and Nemat-Nasser and Hori \cite{Nemat-Nasser:1993:MOP} for related bounds). 
From the ``attainability principle'' it follows that these bounds will be achieved whenever the composite 
is a sequentially layered laminate, with a ``core'' of one phase, surrounded by layers (on widely separated length scales) of the other phase. 
When $\BC_*$ is isotropic these bounds \eq{0.9} reduce to the Hashin-Shtrikman-Hill bounds \eq{0.4}. In the case the two phases, and hence the composite, are
incompressible we can define the five effective shear moduli $\Gm_1^*, \Gm_2^*, \Gm_3^*, \Gm_4^*, \Gm_5^*$ to be the five finite eigenvalues of $\BC_*/2$,
and the second pair of bounds in \eq{0.9} reduce to  
\beqa
\sum_{i=1}^5\frac{f}{2(\Gm_{*i}-\Gm_2)} & \leq &
\frac{5}{2(\Gm_1-\Gm_2)}+\frac{3(\Gk_2+2\Gm_2)(1-f)}{\Gm_2(3\Gk_2+4\Gm_2)}, \nonum
\sum_{i=1}^5\frac{(1-f)}{2(\Gm_1-\Gm_{*i})} & \leq &
\frac{5}{2(\Gm_1-\Gm_2)}-\frac{3(\Gk_1+2\Gm_1)f}{\Gm_1(3\Gk_1+4\Gm_1)}.
\eeqa{0.11}
Lipton \cite{Lipton:1988:EET} established that the analogous bounds for the two effective 
shear moduli $\Gm_1^*$ and $\Gm_2^*$ of two-dimensional composites of two incompressible isotropic phases
completely characterize $GU_f$.

Earlier, Willis \cite{Willis:1977:BSC} had considered anisotropic composites and
used the Hashin-Shtrikman variational principle with a trial polarization that was zero in one phase and constant in the other to
obtain bounds on the elastic energy of a two-phase composite. He found that these bounds are not microgeometry independent, but rather
involve the two-point correlation function, i.e., the probability that a rod with fixed orientation lands with both ends in phase $1$ 
when thrown randomly in a composite. It follows from the ``attainability principle'' that the Willis bounds will be achieved when the composite 
is a sequentially layered laminate, with a ``core'' of one phase, surrounded by layers (on widely separated length scales) of the other phase. 

In a major advance, Avellaneda \cite{Avellaneda:1987:OBM} recognized that for any composite of two phases with well-ordered tensors 
not all the information contained in the two-point correlation function
was relevant to determining the bounds: what was relevant was a ``reduced two-point correlation function'' that could be represented as a positive
measure $\mu(\BGx)$ (with unit integral) on the sphere $|\BGx|$=1.
Roughly speaking one takes the Fourier transform of the two-point correlation function and integrates it over
rays $\Bk=k\BGx$ in ``Fourier space'' keeping $\BGx$ fixed and integrating over $k$ from $0$ to infinity. Most importantly, every such a measure could be realized to an 
arbitrarily high degree of approximation by the measure of a suitable sequentially layered laminate. For example, a measure with weighted delta functions in directions 
$\BGx_1$ and $\BGx_2$ would be realized by a second rank sequentially layered laminate with layers normal to $\BGx_1$ and $\BGx_2$. (We note in passing that these 
``reduced two-point correlation functions''
of Avellaneda, are a special case of the $H$-measures introduced at the same time by Tartar \cite{Tartar:1989:MSA,Tartar:1990:MNA}, in terms of which he
could calculate second order corrections to the effective tensor of a nearly homogeneous composite. $H$-measures were also introduced independently by G{\'e}rard
\cite{Gerard:1989:CPC,Gerard:1994:MAC}
under the name of microlocal defect measures. For composites of two isotropic phases the Hashin-Shtrikman conductivity bounds, and indeed variational 
conductivity bounds at any order,
can be naturally expressed in terms of the series expansion coefficients of the effective tensor up to a corresponding order for a nearly homogeneous composite as shown by 
Milton and McPhedran \cite{Milton:1982:CTM}).

The fantastic implication was that 
by summing the Willis bounds \cite{Willis:1977:BSC}, and then minimizing over all positive measures on the sphere, one would get sharp bounds on the sum of elastic complementary energies
\beq W_f^0(\BGs^0_1,\BGs^0_2,\BGs^0_3,\BGs^0_4,\BGs^0_5,\BGs^0_6)
 = \min_{\BC_*\in GU_f}\sum_{j=1}^6\BGs^0_j:\BC_*^{-1}\BGs^0_j, \eeq{0.12}
and similarly one could get sharp bounds on the sum of elastic energies
\beq
 W_f^6(\BGe^0_1,\BGe^0_2,\BGe^0_3,\BGe^0_4,\BGe^0_5,\BGe^0_6)
 =  \min_{\BC_*\in GU_f}\sum_{i=1}^6\BGe^0_i:\BC_*\BGe^0_i.
\eeq{0.13}
Here some of the applied
stresses $\BGs_j^0$ or applied strains $\BGe^0_i$ could be zero. 
Thus the evaluation of the functions 
$W_f^0(\BGs^0_1,\BGs^0_2,\BGs^0_3,\BGs^0_4,\BGs^0_5,\BGs^0_6)$ and  $W_f^6(\BGe^0_1,\BGe^0_2,\BGe^0_3,\BGe^0_4,\BGe^0_5,\BGe^0_6)$
reduces to a finite dimensional minimization problem which can be done numerically. Hence
we will treat the functions $W_f^0(\BGs^0_1,\BGs^0_2,\BGs^0_3,\BGs^0_4,\BGs^0_5,\BGs^0_6)$ as being known, and we will call an ``Avellaneda material'' an
associated sequentially layered laminate material with effective tensor $\BC_*=\BC_f^A(\BGe^0_1,\BGe^0_2,\BGe^0_3,\BGe^0_4,\BGe^0_5,\BGe^0_6)$ 
that attains the minimum in \eq{0.13}, and similarly we call a ``complementary Avellaneda material'' an 
associated sequentially layered laminate material with effective tensor $\BC_*=\widetilde{\BC}_f^A(\BGs^0_1,\BGs^0_2,\BGs^0_3,\BGs^0_4,\BGs^0_5,\BGs^0_6)\in GU_f$
that attains the minimum in \eq{0.12}. Explicit analytical formulas for the tensors $\BC_f^A(\BGe^0_1,\BGe^0_2,\BGe^0_3,\BGe^0_4,\BGe^0_5,\BGe^0_6)$ and
$\widetilde{\BC}_f^A(\BGs^0_1,\BGs^0_2,\BGs^0_3,\BGs^0_4,\BGs^0_5,\BGs^0_6)$ are not generally available, but rather have to be found by numerical computation. When $\BC_1\geq\BC_2$ one needs to take the minimum in \eq{0.12} over the $\BC_*$ of sequentially layered laminates with a ``core material'' of phase 2. Similarly, when $\BC_1^{-1}\geq\BC_2^{-1}$ the minimum in \eq{0.13} also can be taken over the $\BC_*$ of sequentially layered laminates with a ``core material'' of phase 2. We remark that although Avellaneda assumed the tensors
$\BC_1$ and $\BC_2$ were isotropic, his analysis easily extends to the case where the tensors are anisotropic
but well-ordered (either with $\BC_1\geq\BC_2$ or $\BC_2\geq\BC_1$) and with constant orientation throughout the composite: see, for example, Section 23.3 in \cite{Milton:2002:TOC}. 

These $\BC_*$ of sequentially layered laminates are
given by the formula of Francfort and Murat \cite{Francfort:1986:HOB} and Gibiansky and Cherkaev \cite{Gibiansky:1987:MCE}:
\beq (1-f)(\BC_1-\BC_*)^{-1}=(\BC_1-\BC_2)^{-1}-f\sum_{j=1}^r c_j\BGG(\Bn_j), \eeq{0.13.0}
where $r$ is the rank of the sequential laminate, the positive weights $c_j$ sum to $1$, the $\Bn_i$ are the lamination directions, and $\BGG(\Bn)$ is the fourth order tensor with
elements given by 
\beq
\begin{array}{l}
\{\BGG(\Bn)\}_{hik\ell}=
\\*[.2em]
\displaystyle \frac{1}{4}\left(n_h\{\BC(\Bn)^{-1}\}_{ik}n_\ell+n_h\{\BC(\Bn)^{-1}\}_{i\ell}n_k+n_i\{\BC(\Bn)^{-1}\}_{hk}n_\ell+n_i\{\BC(\Bn)^{-1}\}_{h\ell}n_k\right),
\end{array}
\eeq{0.13.1}
in which $\BC(\Bn)=\Bn\cdot\BC_1\Bn$ is the $3\times 3$ matrix, known as the acoustic tensor, with elements 
\beq \{\BC(\Bn)\}_{ik}=\{\Bn\cdot\BC_1\Bn\}_{ik}=n_h\{\BC_1\}_{hik\ell}n_\ell. \eeq{0.13.2}
Thus the minimum needs to be taken over the rank $r$ of the sequential laminate, over the positive weights $c_j$, that sum to $1$, and over the lamination directions $\Bn_j$.
In the case phase $1$ is isotropic, with bulk modulus $\Gk_1$ and shear modulus $\Gm_1$, $\BC(\Bn)$ can be easily calculated and one obtains
\beq \{\BGG(\Bn_j)\}_{hik\ell}=\frac{3n_h n_i n_k n_\ell}{3\Gk_1+4\Gm_1}+\frac{1}{4\Gm_1}\left(n_h\Gd_{ik}n_\ell+n_h\Gd_{i\ell}n_k+n_i\Gd_{hk}n_\ell+n_i\Gd_{h\ell}n_k-4n_h n_i n_k n_\ell\right).
\eeq{0.13.3}
Francfort, Murat, and Tartar  \cite{Francfort:1995:FOM} prove that when $\BC_1$ is isotropic it suffices to limit attention to laminates of rank $r\leq 6$. When
$\BC_1$ is anisotropic we extend an argument due to Avellaneda \cite{Avellaneda:1987:OBM}. Consider the set $\CA$ consisting of all fourth-order tensors $\BA$ of the 
form
\beq \BA=\int_{|\Bn|=1}\BGG(\Bn)\,m(d\Bn), \eeq{0.13.3a}
where $m(d\Bn)$ is a non-negative measure on the unit sphere having an integral of $1$ over the sphere. Since $\BA$ satisfies
\beq  \{\BA\}_{hik\ell}\{\BC_1\}_{hik\ell}=\int_{|\Bn|=1}\{\BC(\Bn)^{-1}\}_{ik}\{\BC(\Bn)\}_{ik}\,m(d\Bn)=3 \eeq{0.13.3b}
it follows that $\CA$ is a convex set in a space of dimension $\Gv=20$ (with 20 of the 21 independent matrix elements of $\BA$ as coordinates, and the remaining
element being determined by \eq{0.13.3b}). The extreme points correspond to point masses on the unit sphere. Hence any tensor of the form \eq{0.13.3a} is
a convex combination of  at most $\Gv+1$ extreme points. Thus the sum \eq{0.13.0} can be limited to $r\leq 21$, i.e., it suffices to consider laminates up to rank $21$.
Lipton \cite{Lipton:1991:BEC,Lipton:1992:BPS,Lipton:1994:OBE}
obtained a complete algebraic characterization of the possible 
sequentially layered laminates having transverse or orthotropic symmetry and derived explicit expressions for many of the associated bounds. 
The Avellaneda materials are of course difficult to build in practice since they have structure on multiple length
scales. However if $f$ is small and one phases is void, Bourdin and Kohn \cite{Bourdin:2008:OST} show that it suffices to use a walled structure (similar
to the structure in \fig{2}(b), but with walls in many directions, not just two, and with the wall thickness depending on their orientation). 

As observed by Avellaneda \cite{Avellaneda:1987:OBM}, the implications of course also apply to two-dimensional elasticity. Defining
\beq W_f^0(\BGs^0_1,\BGs^0_2,\BGs^0_3)
 = \min_{\BC_*\in GU_f}\sum_{j=1}^3\BGs^0_j:\BC_*^{-1}\BGs^0_j, \eeq{0.13a}
and 
\beq
 W_f^3(\BGe^0_1,\BGe^0_2,\BGe^0_3)
 =  \min_{\BC_*\in GU_f}\sum_{i=1}^3\BGe^0_i:\BC_*\BGe^0_i,
\eeq{0.13b}
then there is an Avellaneda material with effective tensor $\BC_*=\BC_f^A(\BGe^0_1,\BGe^0_2,\BGe^0_3)$ that attains the minimum in \eq{0.13b},
and a complementary Avellaneda material with effective tensor $\BC_*=\widetilde{\BC}_f^A(\BGs^0_1,\BGs^0_2,\BGs^0_3)\in GU_f$ that attains the minimum in \eq{0.13a}. In two-dimensional elasticity, sequentially 
layered laminates have elasticity tensors given by \eq{0.13.0} -\eq{0.13.2} when the tensor $\BC_1$ is anisotropic. When the elasticity tensor $\BC_1$ of phase 1 is isotropic, the sequentially
layered laminates of rank $r$ have effective compliance tensors $\BS_*=(\BC_*)^{-1}$ given by 
the Gibiansky-Cherkaev formula \cite{Gibiansky:1987:MCE}
\beq (1-f)(\BS_1-\BS_*)^{-1}=(\BS_1-\BS_2)^{-1}-f[(4\Gk_2)^{-1}+(4\Gm_2)^{-1}]\BM, \eeq{0.13c}
[see their equations (2.37) and (2.38) and see also Lurie, Cherkaev, and Fedorov \cite{Lurie:1982:RODb}
who derive an equivalent, but less simple, formula], where  $\BS_1=(\BC_1)^{-1}$ and $\BS_2=(\BC_2)^{-1}$ are the compliance tensors of the two phases, occupying respectively volume fractions $f$ and $1-f$,
and $\BM$ has elements
\beq \{\BM\}_{hik\ell}=\sum_{j=1}^r c_j{\Bt_j}_h{\Bt_j}_i{\Bt_j}_k{\Bt_j}_\ell, \eeq{0.13ca}
in which the $\Bt_j$ are unit vectors perpendicular to the directions of lamination (i.e., parallel to the layer boundaries), and the $c_j$ are any set of positive weights, summing to 1,
giving the proportions of phase 1 laminated in the various directions.  The tensor $\BM$ is clearly positive semidefinite and has the properties that
\beq  \{\BM\}_{hkhk}= \{\BM\}_{hhkk}=1. \eeq{0.13d}
Conversely, Avellaneda and Milton \cite{Avellaneda:1989:BEE} have shown that given positive semidefinite fourth order tensor $\BM$ satisfying \eq{0.13d} there is a sequential layered laminate 
of rank $r\leq 3$ that corresponds to it,
i.e., such that \eq{0.13c} holds for some choice of unit vectors $\Bt_j$ and weights $c_j$ (see also Theorem 2.2 of  \cite{Francfort:1995:FOM}).
Thus when $\BC_1$ is isotropic, the computation of the complementary Avellaneda tensor $\widetilde{\BC}_f^A(\BGs^0_1,\BGs^0_2,\BGs^0_3)$
reduces to a minimization over positive semidefinite fourth order tensors $\BM$ satisfying \eq{0.13d}. When $\BC_1$ is anisotropic, by the same argument as in the three dimensional case, it suffices to consider sequential layered  laminates of rank at most $6$.

We also remark that aside from hierarchical laminates there are many other structures that have a uniform field in one phase, sometimes only for
certain applied fields. These include assemblages of confocal ellipses and ellipsoids \cite{Milton:1980:BCD, Milton:1981:BCP, Grabovsky:1995:MMEa}, the periodic Vigdergauz geometries
\cite{Vigdergauz:1986:EEP, Vigdergauz:1994:TDG, Grabovsky:1995:MMEb, Vigdergauz:1996:RLE, Vigdergauz:1999:EMI}, the Sigmund structures \cite{Sigmund:2000:NCE}, and the
periodic $E$-inclusions of Liu, James, and Leo \cite{Liu:2007:PIM} (see also section 23.9 of \cite{Milton:2002:TOC}). Usually these attain the bounds when the measure $\mu(\BGx)$
minimizing the sum of Willis bounds is not required to be a discrete measure. Allaire and Aubry \cite{Allaire:1999:OMP} have shown that sometimes the best microstructure
necessarily has structure on multiple length scales (like sequentially layered laminates).

For single energies for anisotropic two-phase composites, the Hill bounds \eq{0.1} imply
\beqa
\BGe_0:[f\BC_1^{-1}+(1-f)\BC_2^{-1}]^{-1}\BGe_0 & \leq & \BGe_0:\BC_*\BGe_0 \leq \BGe_0:[f\BC_1+(1-f)\BC_2]\BGe_0, \nonum
\BGs_0:[f\BC_1+(1-f)\BC_2]^{-1}\BGs_0 &  \leq &\BGs_0:\BC_*^{-1}\BGs_0 \leq \BGs_0:[f\BC_1^{-1}+(1-f)\BC_2^{-1}]\BGs_0.\nonum
&~ &
\eeqa{0.14}
Improved, and in fact sharp, upper and lower bounds on the elastic energy $\BGe_0:\BC_*\BGe_0$ in terms of the given 
applied strain $\BGe_0$, and 
sharp upper and lower bounds on the complimentary elastic energy $\BGs_0:\BC_*^{-1}\BGs_0$ in terms of the  given applied stress $\BGs_0$ were obtained 
for isotropic component materials by Gibiansky and Cherkaev \cite{Gibiansky:1984:DCPa}, Kohn and Lipton \cite{Kohn:1988:OBE}, and
Allaire and Kohn \cite{Allaire:1993:EOB, Allaire:1993:OBE, Allaire:1994:OLB}. The paper of Gibiansky and Cherkaev 
\cite{Gibiansky:1984:DCPa} was for the fourth-order plate equation, but this can mapped to the equivalent two-dimensional 
elasticity problem considered by Allaire and Kohn \cite{Allaire:1993:EOB}. Their lower bounds on $\BGs^0:\BC_*^{-1}\BGs^0$
are equivalent to the bounds that for any 
tensor $\BC_*\in GU_f$,
\beq \BGs^0:\BC_*^{-1}\BGs^0\geq \BGs^0:[\widetilde{\BC}_f^A(\BGs^0,0,0)]^{-1}\BGs^0, \eeq{0.15}
and they provided an explicit formula for the right hand side for any $2\time 2$ symmetric matrix $\BGs^0$ representing the applied stress. This bound can be viewed
in two ways: in the way originally interpreted, i.e., as a bound on the possible (elastic energy, average stress, volume fraction) triplets;
or as a bound 
\beq \BGs^0:\BGe^0\geq \BGs^0:[\widetilde{\BC}_f^A(\BGs^0,0,0)]^{-1}\BGs^0, \eeq{0.16}
on the possible (average stress, average strain, volume fraction) triplets. Here $\BGe^0=\BC_*^{-1}\BGs^0$ is the strain associated with
$\BGs^0$. Significantly, Milton, Serkov, and Movchan \cite{Milton:2003:RAS} found
that the inequality $\eq{0.16}$ completely characterizes the possible (average stress, average strain, volume fraction) triplets in the limit
in which one phase becomes void, when the other phase is isotropic. Specifically, given any triplet $(\BGs^0,\BGe^0,f)$ satisfying
\eq{0.16} as an inequality, they give a recipe for constructing a two-dimensional microstructure with effective tensor $\BC_*$ and 
having phase 1 occupy a volume fraction $f$ such that $\BGs^0=\BC_*\BGe^0$. 

For three dimensional composites explicit expressions for the optimal upper energy bound have been found 
by Gibiansky and Cherkaev  \cite{Gibiansky:1987:MCE} and Allaire \cite{Allaire:1994:ELP} for the
case of a two-phase composite where one of the phases is void or rigid \cite{Gibiansky:1987:MCE}.
Grabovsky \cite{Grabovsky:1996:BEM} obtained energy bounds for two-phase composites containing anisotropic phases, each with a constant orientation.

Another major advance was made by Milton and Cherkaev \cite{Milton:1995:WET} 
who showed that any desired positive definite fourth order tensor which has the symmetries of an elasticity tensor
could be realized as the effective elasticity tensor $\BC_*$ of a composite of a sufficiently stiff isotropic material and a sufficiently compliant isotropic material. One key
to this advance was the realization that certain structures called pentamode materials could be (arbitrarily) stiff to one applied stress $\BGs^0_1$ and yet have five mutually
orthogonal strains $\BGe^0_1$, $\BGe^0_2$, $\BGe^0_3$, $\BGe^0_4$, $\BGe^0_5$, each orthogonal to $\BGs^0_1$ as five (arbitrarily compliant) easy modes of deformations (hence the name pentamode). For such a pentamode $W_f^5(\BGs^0_1,\BGe^0_1,\BGe^0_2,\BGe^0_3,\BGe^0_4,\BGe^0_5)$ 
\beq W_f^5(\BGs^0_1,\BGe^0_1,\BGe^0_2,\BGe^0_3,\BGe^0_4,\BGe^0_5)
 =  \min_{\BC_*\in GU_f}\left[\left(\sum_{i=1}^5\BGe^0_i:\BC_*\BGe^0_i\right)+\BGs^0_1:\BC_*^{-1}\BGs^0_1 \right],
\eeq{0.17}
approaches zero as the constituent stiff isotropic material becomes increasingly stiff and 
the constituent compliant isotropic material becomes increasingly compliant.
The lattice structure of a pentamode is similar to that of diamond with a stiff double cone structure replacing each carbon bond. 
This structure ensures that the tips of four double cone structures
meet at each vertex. This is the essential feature: treating the double cone structures as strut, the tension in one determines uniquely the tension in the
other three. This is simply balance of forces. Thus the structure as a whole can essentially only support one stress.
Pentamode structures were experimentally realized by Kadic, B{\"u}ckmann, Stenger, Thiel and Wegener \cite{Kadic:2012:PPM}. Pentamode structures were also independently discovered in 1995 by Sigmund, although he
did not find the complete span of pentamode structures needed here: one needs pentamodes that can support
any chosen stress, not just a hydrostatic one. It is this aspect of pentamodes that makes them more interesting
than, for example, a gel. 
in an incredible feat of precision three dimensional lithography. One of their electron micrographs of the structure is shown in \fig{0}.
Gels are examples of pentamodes as they are easy to shear, but difficult to compress under a hydrostatic loading $\BGs_1=\BI$. By contrast the pentamodes of 
Milton and Cherkaev could be stiff to any desired stress $\BGs^0_1$: this desired stress may be a mixture of shear and compression, and may have eigenvalues of mixed signs.
A simple argument for seeing that these pentamodes can achieve any desired elasticity tensor was given in the foreword of the book edited by Phani and Hussein \cite{Phani:2016:DLM}. 
To recapitulate that argument, one expresses the desired $\BC_*$ in terms of its eigenvectors and eigenvalues,
\beq \BC_*=\sum_{i=1}^6\Gl_i\Bv_i\otimes\Bv_i. \eeq{2.10}
The idea, roughly speaking, is to find 6 pentamode structures each supporting a stress represented by the vector $\Bv_i$, $i=1,2,\ldots, 6$. The stiffness
of the material and the necks of the junction regions at the vertices need to be adjusted so each pentamode structure has an effective elasticity tensor
close to
\beq \BC_*^{(i)}=\Gl_i\Bv_i\otimes\Bv_i. \eeq{2.11}
Then one successively superimposes all these 6 pentamode structures, with their lattice structures being offset to avoid collisions. Additionally
one may need to deform the structures appropriately to avoid these collisions as described in \cite{Milton:1995:WET}, and when one does this it is necessary
to readjust the stiffness of the material in the structure to maintain the value of $\Gl_i$.
Then the remaining void in the structure is replaced by an extremely compliant
material. (Its presence is just needed for technical reasons, to ensure that the assumptions of homogenization theory are valid so that the elastic
properties can be described by an effective tensor.)  But it is so compliant that essentially  the effective elasticity tensor is just a sum of the
effective elasticity tensors of the 6 pentamodes, i.e., the elastic interaction between the 6 pentamodes is negligible. In this way we arrive at a material with (approximately) 
the desired elasticity tensor $\BC_*$. 

\begin{figure}[!ht]
\centering
\includegraphics[width=0.7\textwidth]{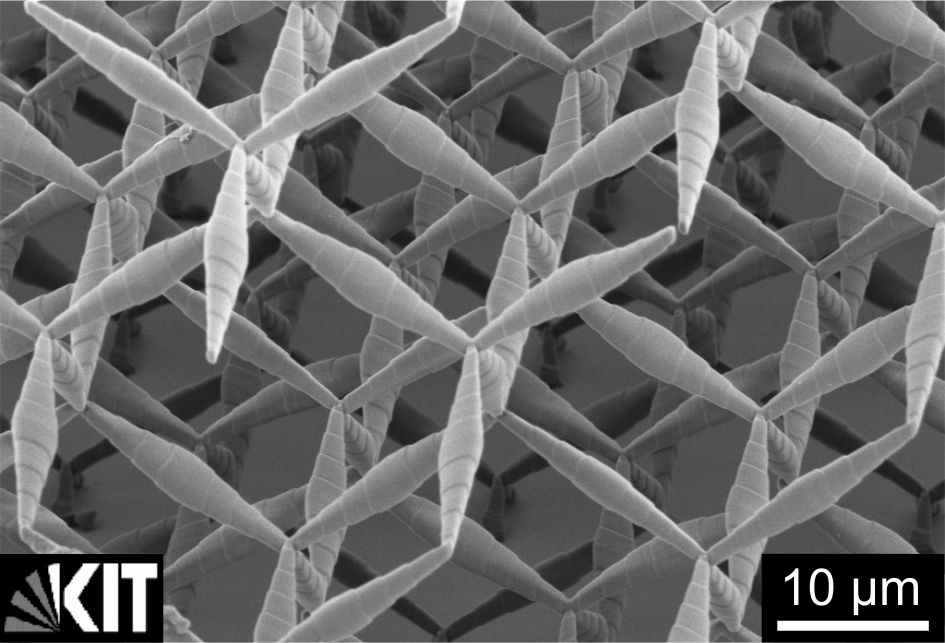}
\caption{An electron micrograph of the pentamode structure created by Kadic, B{\"u}ckmann, Stenger, Thiel and Wegener \protect\cite{Kadic:2012:PPM} using a three dimensional lithography technique. Used with the kind permission of Martin Wegener.}
\labfig{0}
\end{figure}

It is worth mentioning that with extremely high contrast materials the homogenized equations are not necessarily the usual linear elasticity equations, but can also include
nonlocal terms. Nonlocal interactions can be obtained for example with an extremely stiff dumbbell shaped inclusion with the balls arbitrarily distant. If the bar joining them
is not only extremely stiff but also extremely thin, then it does directly not couple with the surrounding elastic material (except in the very near vicinity of the bar, where it is
obviously deformed by it), but provides a nonlocal interaction between the balls. In fact, amazingly, Camar-Eddine and Seppecher \cite{Camar:2003:DCS} have completely characterized all possible
linear macroscopic behaviors of any high contrast composite: they show that any energetically stable behavior can be obtained
using materials with such dumbbell shaped inclusions interacting at many length scales. Some interesting examples of high contrast materials
with exotic effective behaviors have been given by Seppecher,  Alibert, and Isola \cite{Seppecher:2011:LET}.
%%%%%%%%%%%%%%%%%%%%%%%%%%%%%%%%%%%%%%%%%%%%%%%%%%%%%%%%%%%%%%%%%%
\section{Characterizing Convex Sets and $G$-closures for elasticity}
\setcounter{equation}{0}
%%%%%%%%%%%%%%%%%%%%%%%%%%%%%%%%%%%%%%%%%%%%%%%%%%%%%%%%%%%%%%%%%%%
Let $G$ be a convex set of real $d$-dimensional vectors, meaning that if $\Bc_1,\Bc_2\in G$ then $\Gt\Bc_1+(1-\Gt)\Bc_2\in G$ for all $\Gt\in[0,1]$.  
As shown in \fig{1}(a)
for $d=2$ such a convex set can be completely characterized by its Legendre transform,
\beq f(\Bn)=\min_{\Bc\in G}\Bn\cdot\Bc. \eeq{1.1}
Clearly this function satisfies the homogeneity property that
\beq f(\Gl\Bn)=\Gl f(\Bn)\quad {\rm for~all~}\Gl>0, \eeq{1.2}
and consequently it suffices to know $f(\Bn)$ for all unit vectors $\Bn$ to recover the function $f(\Bn)$ for any vector $\Bn$. The values of $f(\Bn)$ and $f(-\Bn)$
give the position of the two planes with normals $\pm\Bn$ that are tangent to $G$: specifically $|f(\Bn)|$ and $|f(-\Bn)|$ give the distances from these tangent
planes to the origin.
By varying $\Bn$ and taking the intersection of the regions between the planes one
recovers $G$: the set $G$ is the envelope of its tangent planes as illustrated in \fig{1}(a). 
Thus the Legendre transform function $f(\Bn)$, with $|\Bn|=1$ 
completely characterizes $G$. 

\begin{figure}[!ht]
\centering
\includegraphics[width=0.7\textwidth]{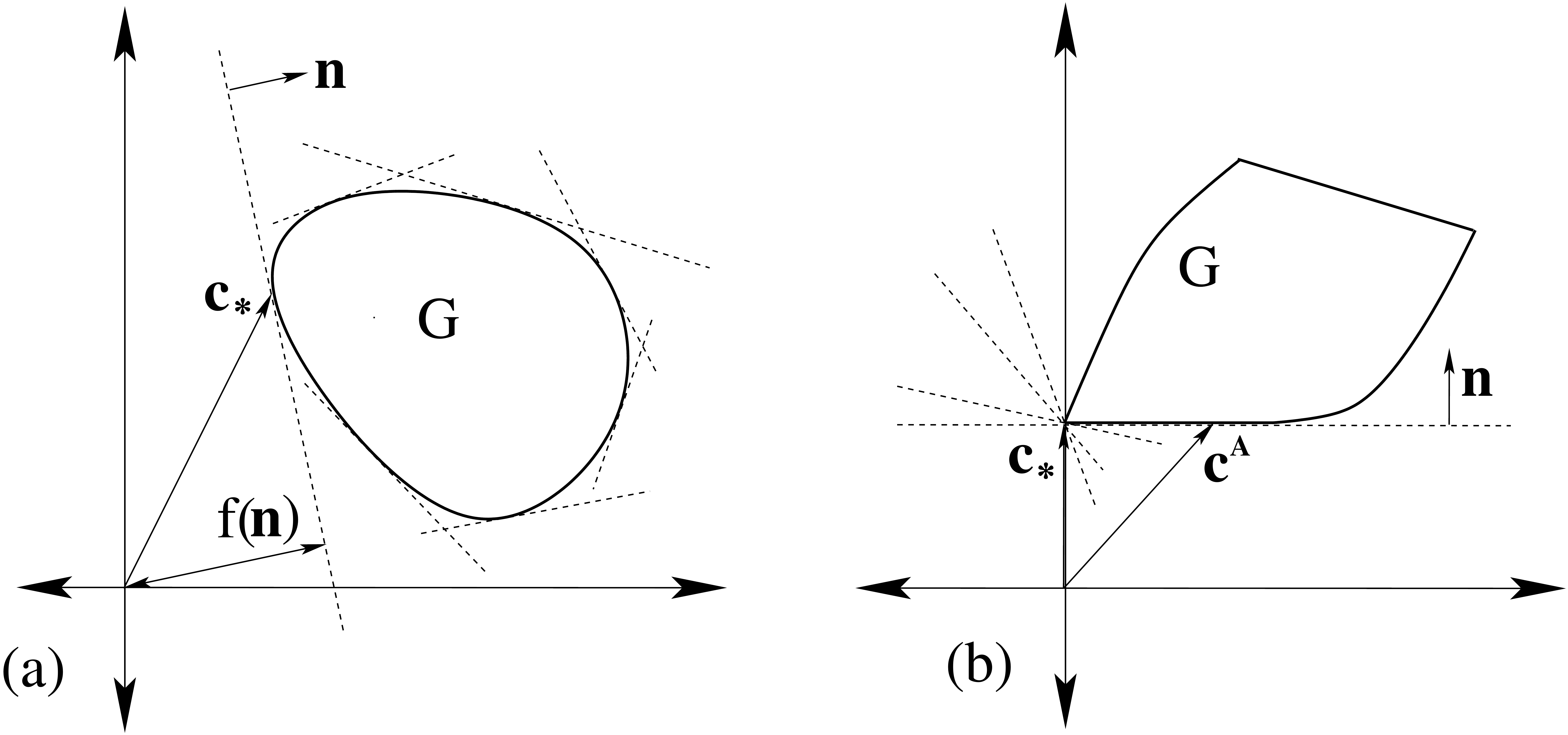}
\caption{As shown in (a) a convex set is the envelope of its tangent planes. The position of the two tangent planes with normal $\Bn$ are determined by the
Legendre transform $f(\Bn)$ and $f(-\Bn)$ defined by \eq{1.1}. Specifically $f(\Bn)$ and $f(-\Bn)$ give the distances of the tangent planes from the origin.
The example of (b) highlights an interesting case discussed in the text that helps give a geometrical interpretation of the results of the paper.}
\labfig{1}
\end{figure}

The example of \fig{1}(b) is also illuminating for the purposes of this paper. Let $\Bn$ and $\Bm$ be the vectors
\beq \Bn=\bpm 0 \\ 1 \epm,\quad\quad \Bm=\bpm 1 \\ 0\epm, \eeq{1.3}
and consider $f(\Bn+\Ga\Bm)$ for $\Ga\geq 0$ in the context of this example. (Of course $\Bn+\Ga\Bm$ is only a unit vector when $\Ga=0$). As the boundary of $G$
contains a flat section orthogonal to $\Bn$, the vector $\Bc$ which attains the minimum in \eq{1.1} is nonunique. In the diagram both $\Bc^A$ and $\Bc_*$ are minimizers.
However for an infinitesimal value of $\Ga>0$, $\Bc_*$ is selected as the unique minimizer and remains the minimizer no matter how large $\Ga>0$ becomes. Furthermore,
since $\Bc_*$ is orthogonal to $\Bm$ the value of $f(\Bn+\Ga\Bm)$ remains constant for all $\Ga\geq 0$.

If $G$ is a convex set of say real $d\times d$ matrices it can be similarly characterized by its Legendre transform,
\beq f(\BN)=\min_{\BC\in G}(\BN,\BC), \eeq{1.4}
defined for all $d\times d$ matrices $\BN$, where $(\BN,\BC)$ is an inner product on the space of matrices which we may take to be
\beq (\BN,\BC)=N_{ij}C_{ij}\equiv \BN:\BC, \eeq{1.5}
where we have adopted the Einstein summation convention that sums over repeated indices are assumed, and the double dot ``:'' denotes a double contraction of indices. 
This is exactly equivalent to \eq{1.1} if we think of the
matrix $\BC$ being represented by the vector $\Bc$ of its matrix elements. Note that if $G$ only contains symmetric matrices, then it suffices to take $\BN$ as
a symmetric matrix since $(\BA,\BC)=0$ if $C$ is symmetric and $A$ is antisymmetric.

Similarly if $G$ is a convex set of fourth order elasticity tensors $\BC$ satisfying the usual symmetries
\beq C_{ijk\ell}=C_{jik\ell}=C_{k\ell ij}, \eeq{1.6}
then it can be characterized by the Legendre transform \eq{1.4} with an inner product
\beq (\BN,\BC)=N_{ijk\ell}C_{ijk\ell}, \eeq{1.7}
and again it suffices to assume $\BN$ has the same symmetries as $\BC$, i.e., those given by \eq{1.6}.

However, $G$-closures (i.e., sets of all possible effective tensors) are not generally convex sets. Nevertheless, they do have some convexity properties
as a consequence of their stability under lamination. In the case
of the set $GU_f$ where $U=\{\BC_1,\Gd\BC_2\}$, we can take two materials with effective tensors $\BC_1^*,\BC_2^*\in GU_f$ and laminate them together in a direction $\Bn$
(representing the vector perpendicular to the layers) in proportions $\Gt$ and $1-\Gt$ to obtain an effective tensor $\BC_*(\Bn,\Gt)$ which necessarily lies in the
set $GU_f$ for all $\Gt\in[0,1]$. While $\BC_*(\Bn,\Gt)$ is not a linear average of  $\BC_1^*$ and $\BC_2^*$, there exist fractional linear transformations $T_{\Bn}$ 
of fourth order tensors such that lamination in direction $\Bn$ reduces to a linear average \cite{Backus:1962:LWE, Milton:1990:CSP} (see also \cite{Tartar:1979:ECH}),
\beq T_{\Bn}(\BC_*(\Bn,\Gt))=\Gt T_{\Bn}(\BC_1^*)+(1-\Gt)T_{\Bn}(\BC_2^*)\quad{\rm for~all~}\Gt\in[0,1]. \eeq{1.8}
Thus $T_{\Bn}(GU_f)$ must be a convex set of fourth order tensors. In the particular case where a set of effective tensors has no interior, i.e., is constrained 
to lie on a manifold of dimension $m$ smaller than the dimension of the space of  fourth order tensors satisfying the symmetries of elasticity tensors (i.e., $m<21$ , 
for three-dimensional composites and $m<6$ for two-dimensional composites)
then as recognized by Grabovsky \cite{Grabovsky:1998:EREa} (see also \cite{Grabovsky:1998:EREb}) $T_{\Bn}$ must map this manifold to a subset
of a hyperplane of dimension $m$ for any value of $\Bn$. 
This places rather severe constraints on the form of such
manifolds. Identifying such manifolds is important as they represent exact relations satisfied by effective tensors, no matter what the geometry of the composite happens to be.
Thus these constraints provide necessary conditions for an exact relation. Later, sufficient conditions for an exact relation to hold were obtained \cite{Grabovsky:2000:ERE}.

Unfortunately, the use of Legendre transforms of the convex set $T_{\Bn}(GU_f)$ is not useful to us as we are unaware of any direct variational principles for $T_{\Bn}(\BC_*)$.
An alternative approach was prompted by work of Cherkaev and Gibiansky \cite{Cherkaev:1992:ECB, Cherkaev:1993:CEB} who found that bounding sums of energies and complementary energies could lead to very useful
bounds on $G$-closures. It was proved by Francfort and Milton \cite{Francfort:1994:SCE, Milton:1994:LBS} 
that minimums over $\BC_*\in GU_f$ of such sums of energies and complementary energies completely characterize
$GU_f$ in much the same way that Legendre transforms characterize convex sets: the stability under lamination of $GU_f$ is what allows one to recover $GU_f$ from the values
of these minimums (see also Chapter 30 in \cite{Milton:2002:TOC}). \fig{30.3} captures the idea of this characterization.

\begin{figure}[!ht]
\centering
\includegraphics[width=0.7\textwidth]{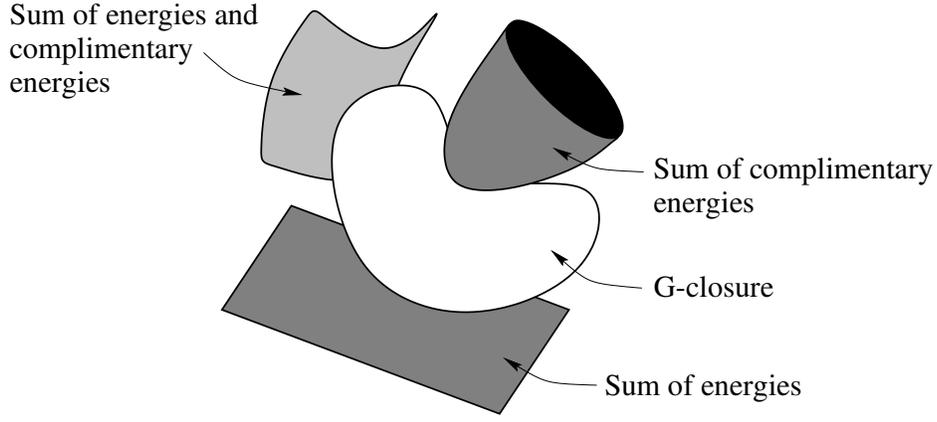}
 \caption{$G$-closures are characterized by minimums of sums
of energies and complementary energies. The coordinates here represent
the elements of the effective elasticity tensor $\protect\BC_*$. Then a plane represents a
surface where a sum of energies is constant, and when this sum takes its
minimum value the plane is tangent to the $G$-closure. The convexity
properties of the $G$-closure guarantee that the surfaces corresponding
to the minimums of sums of energies and complementary energies wrap around the $G$-closure
and touch each point on its boundary. Reproduction of figure 30.1 in \protect\cite{Milton:2002:TOC}}
\labfig{30.3}
\end{figure}

Specifically, in the case of three-dimensional elasticity, the set $GU_f$ is completely characterized if we know the $7$ ``energy functions'',
\beqa W_f^0(\BGs^0_1,\BGs^0_2,\BGs^0_3,\BGs^0_4,\BGs^0_5,\BGs^0_6)
& = &\min_{\BC_*\in GU_f}\sum_{j=1}^6\BGs^0_j:\BC_*^{-1}\BGs^0_j, \nonum
 W_f^1(\BGs^0_1,\BGs^0_2,\BGs^0_3,\BGs^0_4,\BGs^0_5,\BGe^0_1)
& = &\min_{\BC_*\in GU_f}\left[\BGe^0_1:\BC_*\BGe^0_1+\sum_{j=1}^5\BGs^0_j:\BC_*^{-1}\BGs^0_j\right],\nonum
 W_f^2(\BGs^0_1,\BGs^0_2,\BGs^0_3,\BGs^0_4,\BGe^0_1,\BGe^0_2)
& = &\min_{\BC_*\in GU_f}\left[\sum_{i=1}^2\BGe^0_i:\BC_*\BGe^0_i+\sum_{j=1}^4\BGs^0_j:\BC_*^{-1}\BGs^0_j\right],\nonum
 W_f^3(\BGs^0_1,\BGs^0_2,\BGs^0_3,\BGe^0_1,\BGe^0_2,\BGe^0_3)
& = &\min_{\BC_*\in GU_f}\left[\sum_{i=1}^3\BGe^0_i:\BC_*\BGe^0_i+\sum_{j=1}^3\BGs^0_j:\BC_*^{-1}\BGs^0_j\right],\nonum
 W_f^4(\BGs^0_1,\BGs^0_2,\BGe^0_1,\BGe^0_2,\BGe^0_3,\BGe^0_4)
& = &\min_{\BC_*\in GU_f}\left[\sum_{i=1}^4\BGe^0_i:\BC_*\BGe^0_i+\sum_{j=1}^2\BGs^0_j:\BC_*^{-1}\BGs^0_j\right],\nonum
 W_f^5(\BGs^0_1,\BGe^0_1,\BGe^0_2,\BGe^0_3,\BGe^0_4,\BGe^0_5)
& = & \min_{\BC_*\in GU_f}\left[\left(\sum_{i=1}^5\BGe^0_i:\BC_*\BGe^0_i\right)+\BGs^0_1:\BC_*^{-1}\BGs^0_1 \right],\nonum
 W_f^6(\BGe^0_1,\BGe^0_2,\BGe^0_3,\BGe^0_4,\BGe^0_5,\BGe^0_6)
& = & \min_{\BC_*\in GU_f}\sum_{i=1}^6\BGe^0_i:\BC_*\BGe^0_i.
\eeqa{2.1}
In fact, it suffices \cite{Milton:1995:WET}
to know these functions for sets of applied strains $\BGe^0_i$ and applied stresses $\BGs^0_j$ that are mutually orthogonal:
\beqa (\BGe^0_i,\BGs^0_j)& = & 0,\quad (\BGe^0_i,\BGe^0_k)=0,\quad (\BGs^0_j,\BGs^0_\ell)=0 \quad{\rm for~all~}i, j, k,\ell~{\rm with}~i\ne j, ~
i\ne k, ~j\ne \ell.\nonum &~&
\eeqa{2.1aa}
Each of these terms in the minimums has a physical significance. For example, in the expression for $W_f^2$,
\beq \sum_{i=1}^2\BGe^0_i:\BC_*\BGe^0_i+\sum_{j=1}^4\BGs^0_j:\BC_*^{-1}\BGs^0_j \eeq{2.1a}
has the physical interpretation as being the sum of energies per unit volume stored in the composite with effective elasticity tensor $\BC_*$
when it is subjected to successively the two applied strains $\BGe^0_1$ and $\BGe^0_2$ and then to the four applied stresses $\BGs^0_1$, $\BGs^0_2$, $\BGs^0_3$,
$\BGs^0_4$. To distinguish the terms $\BGe^0_i:\BC_*\BGe^0_i$ and $\BGs^0_j:\BC_*^{-1}\BGs^0_j$, the first is called an energy (it is really an energy per unit volume
associated with the applied strain $\BGe^0_i$) and the second is called a complementary energy, although it too physically represents an energy per unit volume
associated with the applied stress $\BGs^0_j$. Note that the quantity \eq{2.1a} can be equivalently written as 
\beq (\BC_*,\BN)+(\BC_*^{-1},\BN'), \eeq{2.1b}
where
\beq \BN=\sum_{i=1}^2\BGe^0_i\otimes\BGe^0_i,\quad \BN'=\sum_{j=1}^4\BGs^0_j\otimes\BGs^0_j,
\eeq{2.1c}
in which for any $d\times d$ symmetric matrix $\BA$, the tensor $\BA\otimes\BA$ is defined to be fourth order tensor with elements
\beq  \{\BA\otimes\BA\}_{ijk\ell}=\{\BA\}_{ij}\{\BA\}_{k\ell}. \eeq{2.1d}
If we decompose the positive semidefinite tensors $\BN$ and $\BN'$ into their spectral decompositions
\beq \BN=\sum_{i=1}^2\Gl_i\Bv_i\otimes\Bv_i,\quad\quad \BN'=\sum_{j=1}^4\Gl_j'\Bv_j'\otimes\Bv_j', \eeq{2.1e}
with eigenmatrices $\Bv_i$ and $\Bv_j'$ and corresponding nonnegative eigenvalues $\Gl_i$ and $\Gl_j'$
then, with the orthogonality constraints \eq{2.1aa}, we can make the identifications
\beq \BGe^0_i=\sqrt{\Gl_i}\Bv_i,\quad \BGs^0_j=\sqrt{\Gl_j'}\Bv_j. \eeq{2.1f}
Note that due to the orthogonality conditions \eq{2.1aa} the fourth-order tensors $\BN$ and $\BN'$ have the property that the product $\BN\BN'$ is zero. Here the
product of two fourth order tensors $\BC$ and $\BC'$ is given by
\beq \{\BC\BC'\}_{ijk\ell}=\{\BC\}_{ijmn}\{\BC'\}_{mn  k\ell}.
\eeq{2.1g}
Thus in the same way that convex sets are the envelope of planes, the $G$-closure $GU_f$ is the envelope of special surfaces parameterized by positive
semidefinite fourth order tensors $\BN$ and $\BN'$ satisfying the symmetries of elasticity tensors, and having zero product $\BN\BN'=\BN'\BN=0$ (i.e.,
the range of $\BN'$ is in the null space of $\BN$, and conversely the range of $\BN$ is in the null space of $\BN'$). These special surfaces consist of all 
positive definite fourth order
tensors $\BC$ satisfying
\beq (\BC,\BN)+(\BC^{-1},\BN')=c, \eeq{2.1h}
where $c$ is a positive real constant. In the case $\BN'=0$ this does represent a hyperplane, but its orientation is restricted by
the fact that the outward normal to the surface $\BN$ is restricted to be a positive definite fourth order tensor (by outward normal we mean the normal pointing 
away from the origin). Knowledge of the seven functions $W_f^i$ given by \eq{2.1} is clearly equivalent to knowledge
of the function
\beq W_f(\BN,\BN')=\min_{\BC_*\in GU_f}(\BC_*,\BN) + (\BC_*^{-1},\BN'),
\eeq{2.1i}
for all positive semidefinite fourth order tensors $\BN$ and $\BN'$ satisfying the symmetries of elasticity tensors, and having zero product $\BN\BN'=0$.
The formula for recovering $GU_f$ from $W_f(\BN,\BN')$ is then
\beq \bigcap_{\substack{\BN,\BN'\geq 0 \\ \BN\BN'=0}}\{\BC: (\BC,\BN)+(\BC^{-1},\BN')\geq W_f(\BN,\BN')\}=GU_f.
\eeq{2.1j}
More generally if we replace $GU_f$ in \eq{2.1i} by another set $G$ of positive definite matrices, and if the left hand side of \eq{2.1j} is again $G$, then we may say $G$ is ``W-convex''.

An explicit definition of ``W-convexity'', is as follows: a set $G$ of positive definite
symmetric matrices is said to be ``strictly W-convex" if $G$ is simply connected and if for every pair of positive semidefinite symmetric matrices $\BN$ and $\BN'$, not both zero, the minimum in
\beq \min_{\BC\in G}(\BC,\BN)+(\BC^{-1},\BN') \eeq{2.1ja}
is uniquely attained by only one matrix $\BC\in G$. Geometrically, $G$ is ``strictly W-convex" if for all positive semidefinite symmetric matrices $\BN$ and $\BN'$, not both zero, the surface
that consists of all positive definite matrices $\BC$ satisfying  
\beq (\BC,\BN)+(\BC^{-1},\BN')=k \eeq{2.1jb}
where $k$ is chosen as the smallest value for which this surface touches $G$, has the property that it touches $G$ at only one point. A set $G$ is ``W-convex" if it is a limit of ``strictly W-convex" sets.
If the set $G$ has a smooth boundary, then the condition for `W-convexity'' can be expressed in terms of the
curvature of the boundary of $G$: when $G$ is a set of matrices, this curvature at each point on the surface
of $G$ is a fourth-order tensor;
when $G$ is a set of fourth order elasticity tensors, this curvature is an eighth order tensor. 
(See equation (3.51) in \cite{Milton:1994:LBS}, or equation (30.11) in \cite{Milton:2002:TOC}, for the explicit inequalities that the curvature must satisfy).

The stability of $GU_f$ under lamination implies it is $W$-convex, but $W$-convexity probably does not imply stability under lamination, as stability under lamination
depends on the underlying partial differential equations. Associated with any set $G$ of symmetric positive-definite matrices $\BC$ is its $W$-transform, defined as
\beq  W(\BN,\BN')=\min_{\BC\in G}(\BC,\BN)+(\BC^{-1},\BN'),
\eeq{2.1k}
where $\BN$ and $\BN'$ are symmetric positive semidefinite matrices, satisfying $\BN\BN'=0$, and the inner product of two symmetric matrices $\BA$ and $\BB$ can be taken as
$(\BA,\BB)=\Tr(\BA\BB)$ where $\Tr$ denotes the trace (sum of diagonal elements) of a matrix. To see some of the properties of $W$-transforms it is helpful to extend the definition of the transform to allow for matrices $\BN$ and $BN'$ that have a non-zero product, $\BN\BN'\ne 0$. The defining equation, \eq{2.1k}, remains the same. 
Then consider a weighted average of $(\BN_1,\BN_1')$ and  $(\BN_2,\BN_2')$, with weights $\Gt$ and $1-\Gt$, where the four matrices
$\BN_1, \BN_1', \BN_2,\BN_2'$ are positive semidefinite. Then for and $\Gt\in(0,1)$, we have
\beqa &~& W(\Gt\BN_1+(1-\Gt)\BN_2,\Gt\BN'_1+(1-\Gt)\BN'_2) \nonum
& ~ & \quad = \min_{\BC\in G}\left\{\Gt[(\BC,\BN_1)+(\BC^{-1},\BN'_1)]+(1-\Gt)[(\BC,\BN_2)+(\BC^{-1},\BN'_2)]\right\},\nonum
& ~ & \quad \geq \Gt\left\{\min_{\BC\in G}(\BC,\BN_1)+(\BC^{-1},\BN'_1)\right\}+(1-\Gt)\left\{\min_{\BC\in G}(\BC,\BN_2)+(\BC^{-1},\BN'_2)\right\} \nonum
& ~ & \quad \geq \Gt W_f(\BN_1,\BN'_1)+(1-\Gt) W_f(\BN_2,\BN'_2),
\eeqa{2.1m}
which (by definition) implies $W(\BN,\BN')$ is a jointly concave function of $\BN$ and $\BN'$. This concavity is a well known property
of Legendre transforms.

%%%%%%%%%%%%%%%%%%%%%%%%%%%%%%%%%%%%
\section{Variational Principles}
\setcounter{equation}{0}
%%%%%%%%%%%%%%%%%%%%%%%%%%%%%%%%%

Upper bounds on the sums of energies and complementary energies can easily be obtained from classic energy minimization variational principles. For example, in
the case of the sum \eq{2.1a}, we have
\beqa &~&\sum_{i=1}^2\BGe^0_i:\BC_*\BGe^0_i+\sum_{j=1}^4\BGs^0_j:\BC_*^{-1}\BGs^0_j=\nonum
&~&~\min_{\und{\BGe}_1,\und{\BGe}_2,\und{\BGs}_1,\und{\BGs}_2,\und{\BGs}_3,\und{\BGs}_4}
\Big\langle\sum_{i=1}^2\und{\BGe}_i(\Bx):\BC(\Bx)\und{\BGe}_i(\Bx)+\sum_{j=1}^4\und{\BGs}_j(\Bx):[\BC(\Bx)]^{-1}\und{\BGs}_j(\Bx)\Big\rangle \nonum
&~&
\eeqa{2.2}
where the minimum is over a set of two trial strain fields $\und{\BGe}_1(\Bx)$ and $\und{\BGe}_2(\Bx)$,
and a set of four trial strain fields $\und{\BGs}_1(\Bx)$, $\und{\BGs}_2(\Bx)$, $\und{\BGs}_3(\Bx)$, and $\und{\BGs}_4(\Bx)$
that have the prescribed average values,
\beq \lang\und{\BGe}_i\rang=\BGe^0_i~~{\rm for}~i=1,2,\quad\lang\und{\BGs}_j\rang=\BGs^0_j~~{\rm for}~j=1,2,3,4,
\eeq{2.3}
and are subject to the differential constraints that
\beqa &~&\und{\BGe}_i(\Bx)=[\Grad\und{\Bu}_i(\Bx)+(\Grad\und{\Bu}_i(\Bx))^T]/2~~{\rm for}~i=1,2,\nonum
&~&\Div\und{\BGs}_j(\Bx)=0~~{\rm for}~j=1,2,3,4, \eeqa{2.4}
where $T$ denotes the transpose (reflecting the matrix about its diagonal) and
$\und{\Bu}_i(\Bx)$ is the trial displacement field associated with the trial stress field $\und{\BGe}_i(\Bx)$. The trial strain fields $\und{\BGe}_i(\Bx)$ and the trial
stress fields $\und{\BGs}_j(\Bx)$ (but not the trial displacement fields) should be chosen to be periodic (if the composite is periodic), quasiperiodic 
(if the composite is quasiperiodic), or statistically homogeneous (if the composite is statistically homogeneous). 
It may be the case that the material has structure on widely separated length scales. Maybe it can be viewed as a mixture of two composites, one with effective tensor $\BC_*^{1}$
and a second with effective tensor $\BC_*^{2}$ so that at the mesoscale it has a geometry described by a characteristic function $\chi_*(\Bx)$, where $\chi_*(\Bx)$ is one in the
composite with effective tensor $\BC_*^{1}$ and zero in the material with effective tensor $\BC_*^{2}$. Naturally the length scale, or length scales, or variations in $\chi_*(\Bx)$
should be much larger than the variations in the microstructure of the materials that have the effective tensors $\BC_*^{1}$ and $\BC_*^{2}$. Then we can treat the material
having effective tensor as a composite of the materials $\BC_*^{1}$ and $\BC_*^{2}$ and we have the variational principle
\beqa &~&\sum_{i=1}^2\BGe^0_i:\BC_*\BGe^0_i+\sum_{j=1}^4\BGs^0_j:\BC_*^{-1}\BGs^0_j=\nonum
&~&\min_{\und{\BGe}_1,\und{\BGe}_2,\und{\BGs}_1,\und{\BGs}_2,\und{\BGs}_3,\und{\BGs}_4}
\Big\langle\sum_{i=1}^2\und{\BGe}_i(\Bx):[\chi_*(\Bx)\BC_*^{1}+(1-\chi_*(\Bx))\BC_*^{2}]\und{\BGe}_i(\Bx)\nonum
&~&+\sum_{j=1}^4\und{\BGs}_j(\Bx):[\chi_*(\Bx)\BC_*^{1}+(1-\chi_*(\Bx))\BC_*^{2}]^{-1}\und{\BGs}_j(\Bx)\Big\rangle,
\eeqa{2.5}
where again the minimum is over fields subject to the appropriate average values and differential constraints. Particular choices of trial fields 
will then lead to an upper bound on this
sum of energies and complementary energies. To bound the quantities on the right one may again use variational principles. When $\Bx$ is in the material $\BC_*^{k}$, $k=1$ or $2$, 
one has the variational principles
\beqa &~& \und{\BGe}_i(\Bx):\BC_*^{k}\und{\BGe}_i(\Bx)=\min_{\dund{\BGe}_i}\lang\dund{\BGe}_i(\Bx,\By):\BC^{k}(\By)\dund{\BGe}_i(\Bx,\By)\rang_{\By}, \nonum
&~& \und{\BGs}_j(\Bx):[\BC_*^{k}]^{-1}\und{\BGs}_j(\Bx)=\min_{\dund{\BGs}_j}\lang\dund{\BGs}_j(\Bx,\By):[\BC^{k}(\By)]^{-1}\dund{\BGs}_j(\Bx,\By)\rang_{\By},
\eeqa{2.6}
where $\lang\cdot\rang_{\By}$ now denotes an average over the $\By$ variable ($\Bx$ is the ``slow variable'' and $\By$ is the ``fast variable'') and 
\beq \BC^{k}(\By)=\chi^k(\By)\BC_1+(1-\chi^k(\By))\BC_2,
\eeq{2.7}
in which $\chi^k(\By)$ is the characteristic function representing the geometry associated with the effective tensor $\BC_*^{k}$, taking a value 1 in the material with tensor $\BC_1$ and
0 in the material with tensor $\BC_2$.  Here the trial fields have the prescribed average values,
\beq \lang\dund{\BGe}_i(\Bx,\By)\rang_{\By}=\und{\BGe}_i(\Bx)~~{\rm for}~i=1,2,\quad\lang\dund{\BGs}_j(\Bx,\By)\rang_{\By}=\und{\BGs}_j(\Bx)~~{\rm for}~j=1,2,3,4,
\eeq{2.8}
and are subject to the differential constraints that
\beqa &~&\dund{\BGe}_i(\Bx,\By)=[\Grad_y\dund{\Bu}_i(\Bx,\By)+(\Grad\dund{\Bu}_i(\Bx,\By))^T]/2~~{\rm for}~i=1,2,\nonum
&~&\Div_y\dund{\BGs}_j(\Bx,\By)=0~~{\rm for}~j=1,2,3,4, \eeqa{2.9}
where $\Grad_y$ and $\Div_y$ are the gradient and divergence with respect to the $\By$ variables. We call the step of replacing the variational principle
\eq{2.2} by the variational principles \eq{2.5} and \eq{2.6} the ``homogenization at intermediate scales step''.

In this paper we will choose trial fields that satisfy the {\it local orthogonality condition} that  
\beq \und{\BGe}_i(\Bx):\und{\BGs}_j(\Bx)=0,\quad {\rm for~all~}\Bx. \eeq{2.9a}
Using the differential constraints satisfied by the trial fields, and integration by parts, one sees that the associated average fields are necessarily orthogonal too:
\beq  \BGe^0_i:\BGs^0_j=\lang\und{\BGe}_i(\Bx)\rang:\lang\und{\BGs}_j(\Bx)\rang=\lang\und{\BGe}_i(\Bx):\und{\BGs}_j(\Bx)\rang=0. \eeq{2.9b}

%%%%%%%%%%%%%%%%%%%%%%%%%%%%%%%%%%%%%%%%%%%%%%%%%
\section{Finding most of the energy functions}
\setcounter{equation}{0}
%%%%%%%%%%%%%%%%%%%%%%%%%%%%%%%%%%%%%%%%%%%%%%

Recall from Section \sect{review} that an complementary Avellaneda material is a sequentially layered laminate material with phase 1 occupying a volume fraction $f$ and with effective tensor
\[
\widetilde{\BC}_f^A(\BGs^0_1,\BGs^0_2,\BGs^0_3,\BGs^0_4,\BGs^0_5,0)
\]
that attains equality in \eq{0.12}. It is found by minimizing the right hand side of \eq{0.12} as $\BC_*$ varies within the class of tensors given by \eq{0.13.0}-\eq{0.13.2} with $\BC_2=0$, as the 
rank $r$, the positive weights $c_j$ which sum to $1$, and the unit vectors $\Bn_i$ are varied. Here some of the applied stresses $\BGs^0_j$ could be zero.
Since the energy $\BGs^0_j:\BC_*^{-1}\BGs^0_j$ associated with any applied stress $\BGs^0_j$  is necessarily nonnegative, we obtain from \eq{2.1} the bounds  
\beqa
\sum_{j=1}^5\BGs^0_j:[\widetilde{\BC}_f^A(\BGs^0_1,\BGs^0_2,\BGs^0_3,\BGs^0_4,\BGs^0_5,0)]^{-1}\BGs^0_j & \leq & W_f^1(\BGs^0_1,\BGs^0_2,\BGs^0_3,\BGs^0_4,\BGs^0_5,\BGe^0_1), \nonum
\sum_{j=1}^4\BGs^0_j:[\widetilde{\BC}_f^A(\BGs^0_1,\BGs^0_2,\BGs^0_3,\BGs^0_4,0,0)]^{-1}\BGs^0_j & \leq & W_f^2(\BGs^0_1,\BGs^0_2,\BGs^0_3,\BGs^0_4,\BGe^0_1,\BGe^0_2), \nonum
\sum_{j=1}^3\BGs^0_j:[\widetilde{\BC}_f^A(\BGs^0_1,\BGs^0_2,\BGs^0_3,0,0,0)]^{-1}\BGs^0_j & \leq & W_f^3(\BGs^0_1,\BGs^0_2,\BGs^0_3,\BGe^0_1,\BGe^0_2,\BGe^0_3), \nonum
\sum_{j=1}^2\BGs^0_j:[\widetilde{\BC}_f^A(\BGs^0_1,\BGs^0_2,0,0,0,0)]^{-1}\BGs^0_j &\leq & W_f^4(\BGs^0_1,\BGs^0_2,\BGe^0_1,\BGe^0_2,\BGe^0_3,\BGe^0_4), \nonum
\BGs^0_1:[\widetilde{\BC}_f^A(\BGs^0_1,0,0,0,0,0)]^{-1}\BGs^0_1 & \leq & W_f^5(\BGs^0_1,\BGe^0_1,\BGe^0_2,\BGe^0_3,\BGe^0_4,\BGe^0_5), \nonum
0&\leq & W_f^6(\BGe^0_1,\BGe^0_2,\BGe^0_3,\BGe^0_4,\BGe^0_5,\BGe^0_6).
\eeqa{2.12}
The last inequality is clearly sharp, being attained when the composite consists of islands of phase 1 surrounded by a phase 2 (so that $\BC_*$ approaches $0$ as $\Gd\to 0$).
The objective of this paper is to show that many of the other inequalities are sharp too in the limit $\Gd\to 0$ at least when the spaces spanned by the applied strains
$\BGe^0_j$, $j=1,2,\ldots,p$ satisfy certain properties. This space of applied strains $\CV_p$, associated with $W_f^p$, has dimension $p$ and is spanned by $\BGe^0_1,\BGe^0_2,\ldots,\BGe^0_p$.

The recipe for doing this is to simply insert into a relevant complementary Avellaneda material a microstructure occupying a thin walled region, such that the material can slip along the walls when the 
applied strain lies in appropriate spaces $\CV_p$, yet which is such that the combination of Avellaneda material and walled material can support without slip any applied stress in the subspace orthogonal to $\CV_p$.
This will be possible only when $\CV_p$ is spanned by symmetrized rank one matrices, taking the form
\beq \BGe^{(k)}=(\Ba_{k}\Bn^T_k+\Bn_k\Ba_{k}^T)/2,\quad {\rm for}~k=1,\ldots,p. \eeq{2.12a}
The existence of such matrices $\BGe^{(k)}$ is proved in Section~\sect{symr1mat}. The proof uses small perturbations of the applied stresses and strains.
But, due the continuity of the energy functions $W_f^k$ established in Section~\sect{continuity}, the small perturbations do not modify the generic result.
The vectors $\Bn_k$ determine the orientation of the walls in the structure. For each $\Bn_k$ there is a set of parallel walls perpendicular to $\Bn_k$ that allow
slip given by the strain $\BGe^{(k)}$. We say slip but it should be recognized that $\BGe^{(k)}$ is not generally a pure shear, but rather a combination
of dilation and shear, since it does not generally have zero trace. 

To define the thin walled structure, introduce the periodic function $H_c(x)$ with period $1$ which takes the value $1$ 
if $x-[x]\leq c$ where $[x]$ is the greatest integer less than $x$, and $c\in[0,1]$ gives the thickness of each wall relative
to the spacing between walls (which is unity). Then for the unit vectors
$\Bn_1,\Bn_2,\ldots,\Bn_p$ appearing in \eq{2.12a}, and for a small relative wall thickness $c=\Ge$ define the characteristic functions 
\beq \Gn_k(\Bx)=H_\Ge(\Bx\cdot\Bn_k+k/p). \eeq{2.13}
This characteristic function defines a series of parallel walls, as shown in \fig{2}(a),
each perpendicular to the vector $\Bn_j$, where $\Gn_j(\Bx)=1$ in the wall material.  The additional shift term $k/p$ in \eq{2.13} ensures the walls
associated with $k_1$ and $k_2$ do not intersect when it happens that $\Bn_{k_1}=\Bn_{k_2}$, at least when $\Ge$ is small. Note that $\Ge$ is a volume fraction, not a homogenization parameter. We will be taking the limit $\Ge\to 0$
{\it after} taking the homogenization limit.

%We first replace the parallel walls described by the characteristic
%function $\Gn_k(\Bx)$ by a set of $p$ parallel walls each described by the 
%characteristic function $\Gn_{i,k}(\Bx)=\Gn_k(\Bx+i\Bn_{k}/p)$, $i=1,2,\ldots,p$, and

\begin{figure}[!ht]
\centering
\includegraphics[width=0.7\textwidth]{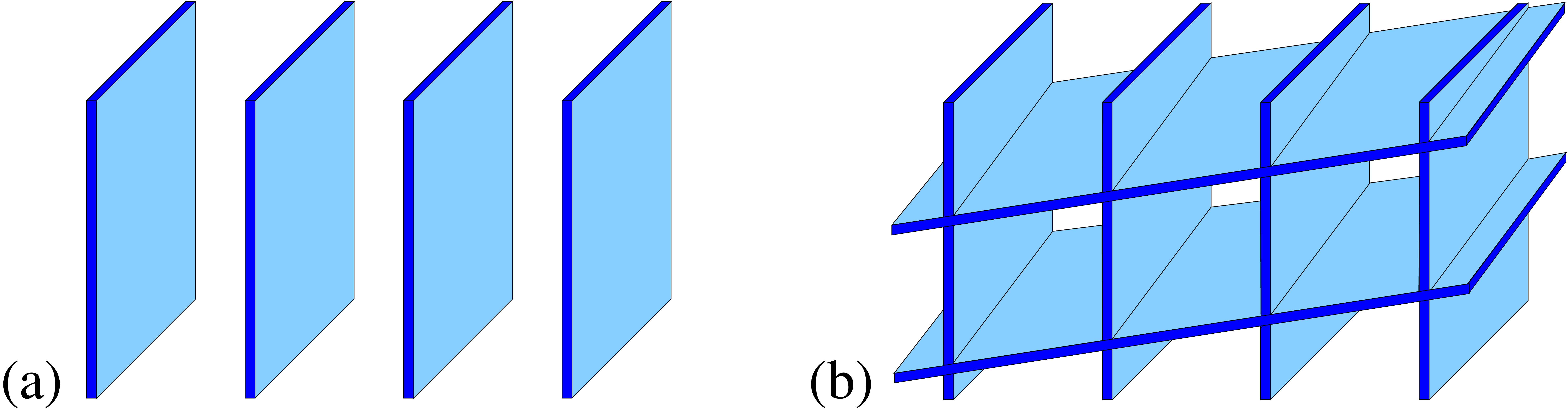}
\caption{Example of walled structures. In (a) we have a ``rank 1'' walled structure and in (b) a ``rank 2'' walled structure. The generalization to walled
structures of any rank is obvious, and precisely defined by the characteristic function \eq{2.14} that is 0 in the walls, and 1 in the remaining material.}
\labfig{2}
\end{figure}

Now define the characteristic 
function
\beq \chi_*(\Bx)=\prod_{k=1}^{p}(1-\Gn_k(\Bx)). \eeq{2.14}
If $p\leq 3$ this is usually a periodic function of $\Bx$ -- an exception being if $p=3$ and there are no nonzero integers $z_1$, $z_2$, and $z_3$
such that $z_1\Bn_1+z_2\Bn_2+z_3\Bn_3=0$. More generally, $\chi_*(\Bx)$ is a quasiperiodic function of $\Bx$. 
The walled structure is where $\chi_*(\Bx)$ takes the value $0$. In the case $p=2$ the wall structure is illustrated in \fig{2}(b).

Recall that a $p$-mode material is a material for which there are $p$ independent strains to which the
material is easily compliant, yet which are much more resistant to any strain in the $(6-p)$-dimensional 
orthogonal subspace. In this sense the microstructure of \fig{0} is a pentamode material. We consider a subclass of multimode materials
which can still support stresses in the limit $\Gd\to 0$. We say a composite with effective tensor $\BC_*$ built from the two materials $\BC_1$ and $\BC_2=\Gd\BC_0$
is easily compliant to a strain $\BGe^0_i$ if the elastic energy $\BGe^0_i:\BC_*\BGe^0_i$ goes to zero as
$\Gd\to 0$, and supports a stress $\BGs^0_j$ if the complementary energy $\BGs^0_j:\BC_*^{-1}\BGs^0_j$ has a nonzero limit as $\Gd\to 0$.
We desire $p$-mode materials for which there are $p$ independent strains to which the
material is easily compliant, yet which supports any stress in the $(6-p)$-dimensional 
orthogonal subspace. The pentamode structure of \fig{0} needs to be modified as all its elastic moduli go to zero as $\Gd\to 0$. The multimode structures we
will introduce have structure on multiple length scales and it is important that one takes the limit of an infinite separation of length scales 
(so one can apply homogenization theory) before taking the limit $\Gd\to 0$.

Inside the walled structure, where $\chi_*(\Bx)=0$ we put a $p$--mode material with effective tensor $\BC_*^{2}=\BC_*(\CV_p)$
that supports any applied stress $\BGs^0$ in the space orthogonal to $\CV_p$ and which is easily
compliant to any strain $\BGe^0$ in the space $\CV_p$. When we take the 6 matrices
\beq \Bv_1=\BGs_1^0/|\BGs_1^0|,\ldots,\Bv_{6-p}=\BGs_{6-p}^0/|\BGs_{6-p}^0|, \Bv_{7-p}=\BGe^0_1/|\BGe^0_1|,\ldots,\Bv_6=\BGe^0_p/|\BGe^0_p| \eeq{2.15}
as an orthonormal basis for the space of 6$\times$6 matrices, we need to find a $p$--mode material for which the elasticity tensor $\BC_*^{2}$ in this basis 
is such that 
\beq \lim_{\Gd\to 0}\BC_*^{2}=\bpm
\BA & 0 \\ 0 & 0
\epm,
\eeq{2.16}
where $\BA$ represents a (strictly) positive definite $(6-p)\times(6-p)$ matrix and the $0$ on the diagonal represents the $p\times p$ zero matrix. 

Outside the walled structure, where $\chi_*(\Bx)=1$ we put the complementary Avellaneda material with effective elasticity tensor
$\BC_*^{1}=\widetilde{\BC}_f^A(\BGs^0_1,\ldots,\BGs^0_{6-p},0,\ldots,0)$. In a variational principle similar to \eq{2.5} (i.e., treating the complementary Avellaneda material and the $p$--mode material both as homogeneous 
materials with effective tensors $\BC_*^{1}$ and $\BC_*^{2}$, respectively) we choose trial stress fields that are constant:
\beq \und{\BGs}_j(\Bx)=\BGs^0_j, \eeq{2.17}
thus trivially fulfilling the differential constraints, and trial strain fields of the form
\beq \und{\BGe}_i(\Bx)=\sum_{k=1}^p\BGe_{i,k}\Gn_k(\Bx)/\Ge, \eeq{2.18}
which are required to have the average values 
\beq \BGe^0_i=\lang \und{\BGe}_i\rang=\sum_{k=1}^p \BGe_{i,k}, \eeq{2.19}
and the matrices $\BGe_{i,k}$ have the form 
\beq  \BGe_{i,k}=a_{i,k}\BGe^{(k)}, \eeq{2.20} 
for some choice of constants $a_{i,k}$
which ensures they are symmetrized rank-1 matrices lying in the space $\CV_p$ (so they cost very little energy), and which ensure that the $\BGe^0_i$ given by
\eq{2.19} are orthogonal. This symmetrized rank-1 form ensures 
 that $\und{\BGe}_i(\Bx)$ derives from a displacement field. Specifically we have
\beq \und{\BGe}_i(\Bx)=[\Grad\und{\Bu}_i(\Bx)+(\Grad\und{\Bu}_i(\Bx))^T]/2, \eeq{2.21}
with
\beq
\und{\Bu}_i(\Bx)=\sum_{k=1}^p a_{i,k}\Ba_k\left\{(\Bn_k\cdot\Bx)\Gn_k(\Bx)/\Ge+([\Bn_k\cdot\Bx]+1)(1-\Gn_k(\Bx))\right\},
\eeq{2.22}
where, as before, $[\Bn_j\cdot\Bx]$ is the greatest integer less than $\Bn_j\cdot\Bx$. One can easily check that this displacement field is continuous at the wall interfaces.

To find upper bounds on the energy associated with this trial strain field, first consider those parts of the wall structure that are outside of any junction regions, i.e., where for some $k$, $\Gn_k(\Bx)=1$, while $\Gn_s(\Bx)=0$ for all $s\ne k$. An upper bound for the volume fraction occupied by the region where $\Gn_k(\Bx)=1$ while $\Gn_s(\Bx)=0$ for all $s\ne k$ is of course $\Ge$ as this represents the 
volume of the region where $\Gn_k(\Bx)=1$. The associated energy per unit volume of the trial strain field in those parts of the wall structure that are outside of any junction regions is bounded above by
\beq \sum_{k=1}^p \BGe_{i,k}:\BC_*(\CV_p)\BGe_{i,k}/\Ge. \eeq{2.22a}
We will see later in Section \sect{multi} that with an appropriate choice of multimode material $\BGe_{i,k}:\BC_*(\CV_p)\BGe_{i,k}$ is bounded above by a quantity proportional to $\Gd$, essentially because all the strain is concentrated in phase 2. So we
require that the limits $\Gd\to 0$ and $\Ge\to 0$ be taken so that $\Gd/\Ge\to 0$ to ensure that the quantity \eq{2.22a} goes to zero in this limit.

Next, consider those junction regions where only two walls meet, i.e., where for some $k_1$ and $k_2> k_1$, $\Bx$  is such that $\Gn_{k_1}(\Bx)=\Gn_{k_2}(\Bx)=1$ 
while $\Gn_s(\Bx)=0$ for all  $s$ not equal to $k_1$ or $k_2$. Provided $\Bn_{k_1}\ne\Bn_{k_2}$, an upper bound for the volume fraction occupied by each such junction region is $\Ge^2$. Then the associated energy per unit volume of the trial strain field in these junction regions where only two walls meet is bounded above by
\beq \sum_{k_1=1}^p \sum_{k_2=k_1+1}^p (\BGe_{i,k_1}+\BGe_{i,k_2}):\BC_*(\CV_p)(\BGe_{i,k_1}+\BGe_{i,k_2}). \eeq{2.22b}
Thus the powers of $\Ge$ cancel and this energy density will go to zero if the multimode material is easily compliant to the strains $\BGe_{i,k_1}+\BGe_{i,k_2}$ for all $k_1$ and $k_2$ with
$k_2>k_1$. 

Finally, consider those junction regions where three or more walls meet, i.e., for some $k_1$, $k_2> k_1$, and $k_3> k_2$, $\Bx$  is such that $\Gn_{k_i}(\Bx)=1$ for $i=1,2,3$. For a given
choice of $k_1$, $k_2> k_1$, and $k_3> k_2$ such that the
three vectors  $\Bn_{k_1}$, $\Bn_{k_2}$, and $\Bn_{k_3}$ are not coplanar an upper bound for the volume fraction occupied by this region is $\Ge^3$. In the case that the three vectors 
$\Bn_{k_1}$, $\Bn_{k_2}$, and $\Bn_{k_3}$ are coplanar, we can ensure that the  volume fraction occupied by this region is  $\Ge^3$ or less by appropriately translating one or two wall structures, i.e., by replacing $\Gn_{k_m}(\Bx)$ with $\Gn_{k_m}(\Bx+\Ga_i\Bn_{k_m})$ for $m=2,3$, for an appropriate choice of $\Ga_2$ and $\Ga_3$ between $0$ and $1$. Since the energy density of the trial field
in these regions scales as $\Ge^3/\Ge^2=\Ge$ we can ignore this contribution in the limit $\Ge\to 0$ as it goes to zero too.

From this analysis of the energy densities associated with the trial fields it follows that one does not necessarily need the pentamode, quadramode, trimode, bimode, and unimode materials as appropriate for the material inside the walled structure. Instead, by modifying the construction, it suffices to use only unimode and bimode materials. In the walled structure we now put unimode materials in those sections where for some $k$, $\Gn_{k}(\Bx)=1$ while $\Gn_{k'}(\Bx)=0$ for 
all $k'\ne k$. Each unimode material is easily compliant to the single strain $\BGe^{(k)}$ appropriate to the wall under consideration.
A prescription for constructing three-dimensional unimode materials, that are multiple rank laminates, and which are easily compliant under any desired single strain is given in Section 5.1 of
Milton and Cherkaev \cite{Milton:1995:WET} .
 In each junction region of the walled structure where $\Gn_{k_1}(\Bx)=\Gn_{k_2}(\Bx)=1$ for some $k_1\ne k_2$ 
while $\Gn_{k}(\Bx)=0$ for all $k$ not equal to $k_1$ or $k_2$, we put a bimode material which is easily compliant to any strain in the subspace spanned by 
$\BGe^{(k_1)}$ and $\BGe^{(k_2)}$ as appropriate to the 
junction region under consideration. At present we do not know of any recipe in three-dimensions for constructing bimode materials that have any desired pair of strains as their easy modes of deformation, other than to superimpose four pentamode structures as described in Section \sect{multi}.
 In the remaining junction regions of the walled
structure (where three or more walls intersect) we put phase 1. The contribution to the average energy of the fields in these regions 
vanishes as $\Ge\to 0$ as discussed above.  

By these constructions we effectively obtain materials with elasticity tensors $\BC_*$ such that
\beq \lim_{\Gd\to 0}\BC_*=(\BI-\GP_p)\widetilde{\BC}_f^A(\BI-\GP_p), \eeq{2.22aa}
where $\BI$ is the fourth-order identity matrix, $\GP_p$ is the fourth-order tensor that is the projection onto the space $\CV_p$, $\BI-\GP_p$ is the projection onto the orthogonal complement of $\CV_p$, and $\widetilde{\BC}_f^A$ is the relevant complementary Avellaneda material.
In the basis \eq{2.15} $\BI-\GP_p$ is represented by the 6$\times$6 matrix that has the block form, 
\beq \BI-\GP_p=\bpm
\BI_{6-p} & 0 \\ 0 & 0
\epm,
\eeq{2.22.0}
where $\BI_{6-p}$ represents the $(6-p)\times(6-p)$ identity matrix and the $0$ on the diagonal represents the $p\times p$ zero matrix. 

%%%%%%%%%%%%%%%%%%%%%%%%%%%%%%%%%%%%%%%%%%%%%%%%%%%%%%%%%%%%%%%%%%%%%%%%%%%%
\section{Simplifications for  $2$-dimensional printed materials}
\setcounter{equation}{0}
%%%%%%%%%%%%%%%%%%%%%%%%%%%%%%%%%%%%%%%%%%%%%%%%%%%%%%%%%%%%%%%%%%%%%%%%%%%%

For $2$-dimensional printed materials, or any two-dimensional two-phase composite with one phase being
void, the analysis simplifies as then the space of $2\times 2$ symmetric matrices has dimension $3$ so there are only 4 energy functions
to consider:
\beqa
W_f^0(\BGs^0_1,\BGs^0_2,\BGs^0_3,)
& = &\min_{\BC_*\in GU_f}\sum_{j=1}^3\BGs^0_j:\BC_*^{-1}\BGs^0_j, \nonum
 W_f^1(\BGs^0_1,\BGs^0_2,\BGe^0_1)
& = &\min_{\BC_*\in GU_f}\left[\BGe^0_1:\BC_*\BGe^0_1+\sum_{j=1}^2\BGs^0_j:\BC_*^{-1}\BGs^0_j\right],\nonum
 W_f^2(\BGs^0_1,\BGe^0_1,\BGe^0_2)
& = & \min_{\BC_*\in GU_f}\left[\left(\sum_{i=1}^2\BGe^0_i:\BC_*\BGe^0_i\right)+\BGs^0_1:\BC_*^{-1}\BGs^0_1 \right],\nonum
 W_f^3(\BGe^0_1,\BGe^0_2,\BGe^0_3)
 & = & \min_{\BC_*\in GU_f}\sum_{i=1}^3\BGe^0_i:\BC_*\BGe^0_i.
\eeqa{2.40}
Again $W_f^0(\BGs^0_1,\BGs^0_2,\BGs^0_3)$ is attained for an ``complementary Avellaneda material'' consisting of a sequentially layered laminate geometry
having an effective tensor $\BC_*=\widetilde{\BC}_f^A(\BGs^0_1,\BGs^0_2,\BGs^0_3)\in GU_f$ and we have the inequalities,
\beqa
\sum_{j=1}^2\BGs^0_j:[\widetilde{\BC}_f^A(\BGs^0_1,\BGs^0_2,0)]^{-1}\BGs^0_j & \leq & W_f^1(\BGs^0_1,\BGs^0_2,\BGe^0_1), \nonum
\BGs^0_1:[\widetilde{\BC}_f^A(\BGs^0_1,0,0)]^{-1}\BGs^0_1 & \leq & W_f^2(\BGs^0_1,\BGe^0_1,\BGe^0_2), \nonum
0&\leq & W_f^3(\BGe^0_1,\BGe^0_2,\BGe^0_3),
\eeqa{2.41}
where, as before, the last inequality is sharp in the limit $\Gd\to 0$ being attained when the material consists of islands of phase 1 surrounded by a phase 2.

The recipe for showing that the bound \eq{2.40} on $W_f^1(\BGs^0_1,\BGs^0_2,\BGe^0_1)$ is sharp for certain values of $\BGe^0_1$ and that the bound \eq{2.40} on 
$W_f^2(\BGs^0_1,\BGe^0_1,\BGe^0_2)$ is sharp for certain values of $\BGe^0_1$ and $\BGe^0_2$ is almost exactly the same as in the $3$-dimensional case:
insert into the complementary Avellaneda material a thin walled structure of respectively unimode and bimode materials so that slips can occur along these walls, allowing with
very little energetic cost the average strain $\BGe^0_1$ in the case of $W_f^1$, or any strain in the space spanned by $\BGe^0_1$ and $\BGe^0_2$ in the case of 
$W_f^2$.

%%%%%%%%%%%%%%%%%%%%%%%%%%%%%%%%%%%%%%%%%%%%%%%%%%%%%%%%%%%%%%%%%%%%%%%%%%%%%%%%%%%%%%%%%%%%%%%%%%%%%%%%%%%%%%%%%%%%%%%%%%%%
\section{The algebraic problem: characterizing those symmetric matrix pencils spanned by symmetrized rank-one matrices}
\labsect{symr1mat}
\setcounter{equation}{0}
%%%%%%%%%%%%%%%%%%%%%%%%%%%%%%%%%%%%%%%%%%%%%%%%%%%%%%%%%%%%%%%%%%%%%%%%%%%%%%%%%%%%%%%%%%%%%%%%%%%%%%%%%%%%%%%%%%%%%%%%%%%%%
We are interested in the following question: \textit{Given $k$ linearly independent symmetric $d\times d$ matrices $\BA_1,\BA_2,\dots,\BA_k$, find necessary and sufficient conditions such that there exist linearly independent matrices $\{\BB_i\}_{i=1}^k$ spanned by the basis elements $\BA_i$ so that each matrix $\BB_i$ is a symmetrized rank one matrix,
i.e., there exist vectors $\Ba_i$ and $\Bb_i$, with $|\Bb_i|=1$ such that $\BB_i=(\Bb_i\Ba_i^T+\Ba_i\Bb_i^T)/2$.
It is assumed that $d=2$ or $3$ and $1\leq k\leq k_d,$ where $k_2=2$ and $k_3=5.$}
Here, we are working in the generic situation, i.e., we prove the algebraic result for a dense set of matrices. The continuity result of Section 9 will allow us to conclude for the whole set of matrices. Actually, the proof below also shows that the algebraic result holds for the complementary of a zero measure set of matrices.

\begin{theorem}
The above problem is solvable if and only if the matrices $\BA_i,$ $i=1,\dots,k$ satisfy the following condition:
\begin{itemize}
\item[(i)]
\beqa
&~& \det(\BA_1)\leq 0,\quad\text{if}\quad k=1,\  d=2, \\&\,&\nonum
&~& \BA_1~ \text{ has two eigenvalues of opposite signs and one zero eigenvalue},\nonum
&~& \quad\quad \text{or has two zero eigenvalues}, \quad\text{if}\quad k=1,\  d=3.
\eeqa{3.3}
\item[(ii)] If $k=d=2$, 
\beqa
\det(\BA_1)<0~\quad\quad\quad\quad\quad &~ &\nonum \text{or}~f(t) = \det(\BA_1+t\BA_2)&~& \text{is quadratic and has two distinct roots for }t,  \nonum
 &~ & \text{or is  linear  in  }t ~\text{with a nonzero coefficient of}~t,~\nonum  &~ &
\eeqa{3.4}
\item[(iii)] If $k=2$ and $d=3$, defining $\BA(\Gn,\Gm)=\Gn\BA_1+\Gm\BA_2$, the numbers
\beq \det(\BA(\Gn,\Gm)),\quad\{\BA(\Gn,\Gm)\}_{11}\{\BA(\Gn,\Gm)\}_{22}-\{\BA(\Gn,\Gm)\}_{12}^2, \quad \{\BA(\Gn,\Gm)\}_{11}
\eeq{3.6}
are never simultaneously non-negative for any choice of $\Gn$ and $\Gm$ not both zero (equivalently $\BA(\Gn,\Gm)$ is never strictly positive definite for any values of $\Gn$ and $\Gm$),
and
\beq
\begin{array}{ll}
\triangle = & 18\det(\BA_1)\det(\BA_2)S_1S_2-4S_1^3\det(\BA_2)+S_1^2S_2^2-4S_2^3\det(\BA_1)
\\*[.2em]
& -\,27\det(\BA_1)^2\det(\BA_2)^2>0,
\end{array}
\eeq{3.7}
where $S_i=\sum_{j=1}^3{s_{ij}},$ $i=1,2$ and $s_{ij}$ is the determinant of the matrix obtained by replacing the $j$-th row of $\BA_i$ by the $j$-th row of $\BA_{i+1},$ where by convention we have $\BA_3=\BA_1$ (equivalently $\BA(\Gn,\Gm)$ has three distinct roots).

\item[(iv)]
\beq
\text{Always solvable if}\quad k\geq 3,\ d=3.
\eeq{3.8}
\end{itemize}
\end{theorem}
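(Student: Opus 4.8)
The plan is to first record the elementary fact that a symmetric matrix $\BB$ is a symmetrized rank-one matrix, $\BB=(\Ba\Bb^T+\Bb\Ba^T)/2$, if and only if it has at most one strictly positive and at most one strictly negative eigenvalue. Indeed, on $\mathrm{span}(\Ba,\Bb)^\perp$ the form vanishes, while $\Bx^T\BB\Bx=(\Ba\cdot\Bx)(\Bb\cdot\Bx)$ is a product of two linear forms and so has eigenvalue signs in $\{+,-,0\}$; conversely, writing $\BB=\lambda_+\Bu\Bu^T+\lambda_-\Bw\Bw^T$ with $\lambda_+\ge 0\ge\lambda_-$, $\Bu\perp\Bw$, and setting $\Ba=\sqrt{\lambda_+}\,\Bu+\sqrt{-\lambda_-}\,\Bw$, $\Bb=\sqrt{\lambda_+}\,\Bu-\sqrt{-\lambda_-}\,\Bw$ recovers $\BB$. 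For $d=2$ this condition reads $\det\BB\le 0$, and for $d=3$ it reads $\det\BB=0$ together with $e_2(\BB)\le 0$, where $e_2$ is the sum of the three $2\times2$ principal minors (the product of the two possibly nonzero eigenvalues). Since each $\BB_i$ must be a nonzero element of $V:=\mathrm{span}(\BA_1,\dots,\BA_k)$ and the condition is invariant under nonzero scaling, case (i) is immediate: for $k=1$ we need $\BA_1$ itself to satisfy the eigenvalue condition, which is exactly \eq{3.3}. In every case the task becomes: find $k$ linearly independent members of $V$ obeying the sign condition, i.e.\ show the cone $R$ of symmetrized rank-one matrices meets $V$ in a set spanning a $k$-dimensional space.

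\textbf{Case (ii), $k=d=2$.} Here I would set $Q(s,t)=\det(s\BA_1+t\BA_2)$, a binary quadratic form, and note that we must find two linearly independent directions $(s,t)$ with $Q\le 0$. The set $\{Q\le0\}$ fails to contain two independent directions precisely when $Q$ is positive definite (then it is $\{0\}$) or positive semidefinite of rank one (then it is a single line); in all other cases ($Q\equiv0$, $Q$ indefinite, or $Q$ negative semidefinite) it contains an open half-cone or the whole plane. Writing $Q(s,t)=\det(\BA_1)\,s^2+q\,st+\det(\BA_2)\,t^2$ and $f(t)=Q(1,t)$, this dichotomy translates directly into \eq{3.4}: $\det(\BA_1)<0$ makes $Q$ negative somewhere; discriminant $q^2-4\det(\BA_1)\det(\BA_2)>0$ with $\det(\BA_2)\ne0$ is the indefinite case recorded as ``$f$ quadratic with two distinct roots''; and the remaining indefinite possibility $\det(\BA_2)=0,\ q\ne0$ is ``$f$ linear with nonzero coefficient of $t$.'' The excluded positive definite and rank-one positive semidefinite forms are exactly the configurations omitted by \eq{3.4}.

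\textbf{Case (iii), $k=2,\ d=3$.} This is where I expect the real work. Set $f(t)=\det(\BA_1+t\BA_2)$, a cubic with coefficients $\det\BA_1,S_1,S_2,\det\BA_2$, so that its classical discriminant is exactly $\triangle$ in \eq{3.7}; generically $f$ has either three distinct real roots or one real and two complex roots. A symmetrized rank-one member of the pencil is a root of $f$ at which $e_2\le0$, and two linearly independent such members means two distinct real roots with this property. I would analyse the inertia of $\BA(\nu,\mu)=\nu\BA_1+\mu\BA_2$ along the circle of unit $(\nu,\mu)$: the (generically simple) roots cut the circle into open arcs of constant inertia in $\{(3,0),(2,1),(1,2),(0,3)\}$, antipodal arcs carry opposite inertia, and across each root exactly one eigenvalue crosses zero, so the inertia jumps by $\pm2$. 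A root lies in $R$ (is $(+,-,0)$) iff its two neighbouring arcs are $(2,1)$ and $(1,2)$; a root adjacent to a $(3,0)$ or $(0,3)$ arc is $(+,+,0)$ or $(-,-,0)$ and is \emph{not} in $R$. Hence if the pencil is never positive definite---equivalently, by Sylvester's criterion, the three minors in \eq{3.6} are never simultaneously positive, and by the $(\nu,\mu)\to-(\nu,\mu)$ symmetry also never negative definite---then every arc is $(2,1)$ or $(1,2)$ and \emph{all three} roots lie in $R$, giving two independent solutions. Conversely, a short enumeration of the admissible inertia sequences shows that if a $(3,0)$ arc occurs, the unique antipodally-symmetric sequence around the six arcs is $(3,0),(2,1),(1,2),(0,3),(1,2),(2,1)$, in which only one of the three projective roots lies in $R$; so solvability forces ``never positive definite,'' and it forces three distinct real roots since a single real root yields at most one solution. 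The main obstacle is making this inertia-crossing bookkeeping rigorous at the generic simple roots and matching it cleanly to \eq{3.6}--\eq{3.7}; the continuity result of Section~\sect{continuity} then removes the genericity assumption.

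\textbf{Case (iv), $k\ge3,\ d=3$.} Finally, I would reduce to $k=3$: if every $3$-dimensional subspace $W\subseteq V$ has $R\cap W$ spanning $W$, then since every vector of $V$ lies in such a $W$ (as $\dim V\ge3$), $R\cap V$ spans $V$ and we recover $k$ independent solutions. For $\dim V=3$, projectivize the cubic form $\det|_V$ to a plane cubic curve $\mathcal C\subset\mathbb P(V)$. If $V$ contains no definite matrix, the only inertias present are $(2,1)$ and $(1,2)$, so \emph{every} rank-two point of $\mathcal C$ lies in $R$; if $V$ does contain a positive definite $\BP$, then a path on the unit sphere of $V$ from $\BP$ to $-\BP$ changes signature from $+3$ to $-3$ and must pass through a $(2,1)$--$(1,2)$ crossing. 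In either case $R$ meets the rank-two locus of $\mathcal C$ in a nonempty open arc, and for generic $V$ the cubic $\det|_V$ is irreducible, hence not contained in any line; thus the arc contains three non-collinear points, i.e.\ three linearly independent symmetrized rank-one matrices. The reducible cubics and the degenerate $\det|_V\equiv0$ configurations are then disposed of by the same genericity-plus-continuity argument.
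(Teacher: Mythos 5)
Your proposal is correct at the same level of ``generic'' rigor the paper itself adopts, and for cases (i)--(iii) it is in substance the paper's own argument: the eigenvalue normal form for $(\Ba\Bb^T+\Bb\Ba^T)/2$ reproduces the computation behind \eq{3.9}--\eq{3.10}, your binary-quadratic-form dichotomy in case (ii) is the paper's continuity-of-$\det(\BA_1+t\BA_2)$ argument made slightly more systematic, and your inertia-arc bookkeeping on the circle of unit $(\Gn,\Gm)$ is exactly the eigenvalue-trajectory analysis of \fig{3} (the forced sequence $(3,0),(2,1),(1,2),(0,3),(1,2),(2,1)$ is the paper's ``scenario (a) is unfavorable'' claim, made precise by the one-sign-flip-per-simple-root rule and antipodal symmetry). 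Where you genuinely diverge is case (iv). The paper proves the $k=3$ core via Lemma~7.1, a four-subcase explicit perturbation construction carried out in the orthogonal complement $\{\BA,\BB,\BC\}^\perp$: it perturbs the matrices so that the cubic $F(\Bx)=\det(\BA\Bx,\BB\Bx,\BC\Bx)$ vanishes on a basis of $\R^3$, and then perturbs again to make the resulting three symmetrized rank-one matrices independent. You instead work directly in $V={\rm Span}\{\BA_1,\BA_2,\BA_3\}$, use a signature/connectivity argument (a path from $\BP$ to $-\BP$ must cross a $(2,1)$--$(1,2)$ transition) to show the cone of symmetrized rank-one matrices meets $V$ in a nonempty open arc of the determinantal cubic, and invoke irreducibility of the generic plane section to extract three non-collinear points. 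Your route is shorter and more conceptual, and your reduction of $k\ge 4$ to $k=3$ via the spanning argument is cleaner than the paper's explicit $\BB_i=\BA_i+\Gn_i\BA_1'+\Gm_i\BA_2'$ construction; the price is that you lean harder on genericity (transversality of the crossing, smoothness and irreducibility of the cubic) and hence on the Section~\sect{continuity} continuity result to cover degenerate configurations, whereas Lemma~7.1 exhibits the admissible perturbations explicitly. One small caveat, which applies equally to the paper's own statement and is acknowledged in its Remark: your claim that the configurations excluded by \eq{3.4} are \emph{exactly} the positive definite and rank-one positive semidefinite forms fails on a measure-zero set (e.g.\ $\det(s\BA_1+t\BA_2)=-ct^2$ is solvable but satisfies none of the three clauses), so necessity in case (ii) should be asserted only generically, as you do elsewhere.
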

\vskip 2mm
\noindent {\bf Remark.}
In fact the condition \eq{3.3} and the last condition in \eq{3.4}, that $f(t)$ is linear in $t$, could be withdrawn since we are considering the generic case. They are inserted because we can treat them explicitly.
\begin{proof}
\textbf{Case (i): $k=1,\ d=2$ or $3.$} In this case $\BA_1$ must be a multiple of $\BB_1$ and hence must be a symmetrized  rank-one matrix. 
To see more clearly the condition for a matrix $\BB$ to be a symmetrized  rank-one matrix, i.e., have the form $\BB=(\Bb\Ba^T+\Ba\Bb^T)/2$ 
let us, without loss of generality, choose our coordinates
so $\Bb=[1,0]^T$ when $d=2$ and $\Bb=[1,0,0]^T$ when $d=3$. Then $\BB$ has the representation
\beq \BB=\bpm
a_1 & a_2/2 \\ a_2/2 & 0
\epm, \quad {\rm when~} d=2, \quad 
\BB=\bpm
a_1 & a_2/2 & a_3/2 \\ a_2/2 & 0 & 0 \\  a_3/2 & 0 & 0
\epm, \quad {\rm when~} d=3.
\eeq{3.9}
These have eigenvalues 
\beqa \Gl  & = & \frac{a_1\pm \sqrt{a_1^2+a_2^2}}{2},\quad {\rm when~} d=2, \nonum
\Gl & = & \frac{a_1\pm \sqrt{a_1^2+a_2^2+a_3^2}}{2},\,\, {\rm and~}\Gl=0, \quad {\rm when~} d=3.
\eeqa{3.10}
So, clearly $\BB$ is a symmetrized rank-one matrix in two-dimensions if and only if $\det(\BB)\leq 0$, and is a symmetrized rank-one matrix in three-dimensions
if and only if it has two eigenvalues of opposite signs and one zero eigenvalue, or has two zero eigenvalues.

\textbf{Case (ii): $k=2,\ d=2.$} In this case there should be two distinct values of $t$ such that $\det(\BA_1+t\BA_2)<0$, which by continuity of this determinant as a function of $t$ is 
guaranteed if any of the conditions in \eq{3.4} are met. Note that the case where $\det(\BA_1+t\BA_2)=0$ for all $t$ can be ruled out from consideration since this can only happen when
$\BA_2$ is proportional to $\BA_1$, as can be easily seen by working in a basis where $\BA_2$ is diagonal.

\textbf{Case (iii): $k=2,\ d=3.$} Consider the matrix pencil (over reals $\Gn$ and $\Gm$) $\BA(\Gn,\Gm)=\Gn \BA_1+\Gm \BA_2$. Assuming that $\det\BA(\Gn,\Gm)$
is not zero for all $\Gn$ and $\Gm$, there are at least two matrices on the pencil which have non-zero determinant. Let us relabel them as $\BA_1$ and $\BA_2$.
Then the equation $\det(\BA(1,\Gm))=0$ must have either two or three roots 
$\Gm=z_i$, $i=1,2$ or $i=1,2,3$, where the $z_i$ are obtained by changing the sign of the generalized eigenvalues. This gives Cardan's condition:
\beq
\begin{array}{ll}
\triangle = & 18\det(\BA_1)\det(\BA_2)S_1S_2-4S_1^3\det(\BA_2)+S_1^2S_2^2-4S_2^3\det(\BA_1)
\\*[.2em]
& -27\,\det(\BA_1)^2\det(\BA_2)^2\geq 0.
\end{array}
\eeq{3.11}
Suppose that $\BA_1+\Gm \BA_2$ contains a symmetric matrix with two zero eigenvalues (a rank-one matrix) as $\Gm$ is varied. Then by redefining $\BA_2$ we can assume $\BA_2$ is this matrix,
now with zero determinant, and by using a basis where $\BA_2$ is diagonal, we see that $\det(\BA_1+\Gm \BA_2)$ depends 
linearly on $\Gm$ and $\det(\BA_1+\Gm \BA_2)$ can only have one root: \eq{3.11} must be violated.
So we can exclude this possibility: $\BA_1+\Gm \BA_2$ has at most one zero eigenvalue for any value of $\Gm$.
Now consider the eigenvalues of $\BA(\Gt)\equiv\BA(\cos\Gt,\sin\Gt)$ as $\Gt$ is varied. As $\BA(-\Gt)=-\BA(\Gt)$ it suffices to consider the interval of $\Gt$ between
0 and $\pi$. Some scenarios for the eigenvalue trajectories are plotted in \fig{3}. At the values $\Gt_i=\arctan^{-1}(z_i)$ at least one of the eigenvalues must be zero,
and the favorable situation is when there are two remaining eigenvalues of opposite signs or only one non-zero eigenvalue. Such angles $\Gt_i$ are marked by the vertical
dashed lines in the figure. The unfavorable situation is when there are two non-zero eigenvalues of the same sign, marked by the red vertical lines in Figure~(a). First suppose that $\BA(\Gt)$ is positive definite for some $\Gt=\Gt_0$. By refining $\theta$ as the old $\theta$ minus $\theta_0$, let us suppose
$\BA(0)$ is positive definite. Then the scenario is that in \fig{3}(a), or some variant of it in which eigenvalues cross, which is unfavorable. 
The only way to avoid this is for $\BA(\Gt)$ to have two zero eigenvalues at the smallest and largest values of $\Gt\in[0,\pi]$ for which $\det\BA(\Gt)=0$,
as in \fig{3}(b), but we have ruled out the possibility that $\BA(\Gt)$ has two zero eigenvalues for any value of $\Gt$.
We are left with \fig{3}(c) as being the only possible suitable scenario. In conclusion, we require that the matrix $\BA(\Gt)$ not be positive semidefinite 
for any choice of $\Gt$, i.e. the three quantities 
\beq \det(\BA(\Gn,\Gm)),\quad\{\BA(\Gn,\Gm)\}_{11}\{\BA(\Gn,\Gm)\}_{22}-\{\BA(\Gn,\Gm)\}_{12}^2, \quad \{\BA(\Gn,\Gm)\}_{11}
\eeq{3.12}
are never simultaneously non-negative for any choice of $\Gn$ and $\Gm$ not both zero. This condition could be made explicit by using the formula for the roots of a cubic
to determine the generalized eigenvalues $-z_i$.

\begin{figure}[!ht]
\centering
\includegraphics[width=0.85\textwidth]{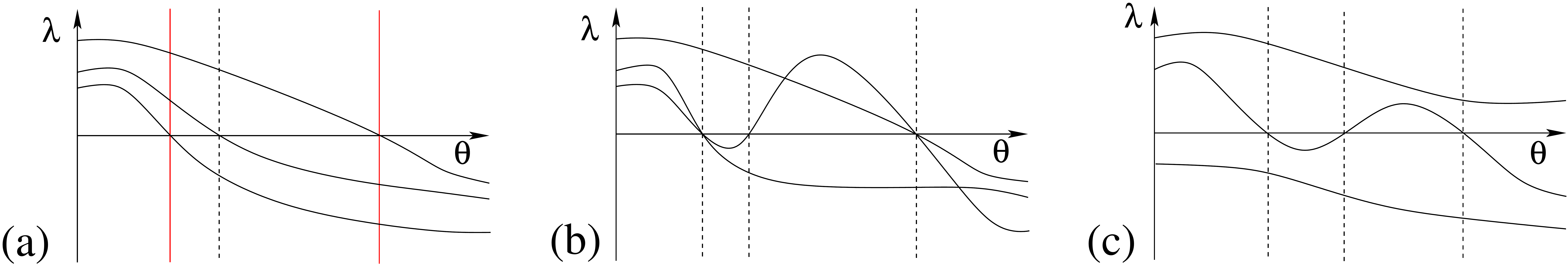}
\caption{Some scenarios for the eigenvalues $\Gl$ of $\BA(\Gt)=\cos\Gt \BA_1+\sin\Gt \BA_2$ as $\Gt$ is varied.}
\labfig{3}
\end{figure}

\textbf{Case (iv): $k\geq 3,\ d=3.$}
The case $k=3$ is a straightforward consequence of Lemma~7.1 below.
\par
It remains to consider $k\geq 4$ and $d=3.$ By the previous step, in the space spanned by $\BA_1$, $\BA_2$, and $\BA_3$ there are three matrices $\BA_1'$, $\BA_2'$ and 
$\BB_3=\BA_3+\Gn_3\BA_1'+\mu_3\BA_2'$ that are linearly independent symmetrized rank-one matrices. Then again by the previous step 
we can find linearly independent matrices $\BB_1,\ldots,\BB_k$ that have the form 
$\BB_1=\BA_1',$ $\BB_2=\BA_2'$ and $\BB_i=\BA_i+\Gn_i\BA_1'+\mu_i\BA_2'$ for $3\leq i\leq k$, that are linearly independent and are symmetrized rank-one matrices.
Thus the proof is finished.
\end{proof}
%%%%%%%%%%
In the sequel we denote
\beq
\Ba\otimes\Bb:=\Ba\Bb^T\quad\mbox{and}\quad \Ba\odot\Bb:=(\Ba\otimes\Bb+\Bb\otimes\Ba)/2
\quad\mbox{for }\Ba,\Bb\in\R^3.
\eeq{3.13}
\begin{lemma}
Let $\BA,\BB,\BC$ be three symmetric matrices of $\R^{3\times 3}$.
\par\medskip\noindent
$i)$ Up to small perturbations of the matrices $\BA,\BB,\BC$, there exist a basis $(\Bx,\By,\Bz)$ of $\R^3$ and three vectors $\Ba,\Bb,\Bc$ of $\R^3$ satisfying
\beq
\begin{cases}
\Ba\in\{\BA\Bx,\BB\Bx,\BC\Bx\}^\perp\setminus\{\b0\}
\\
\Bb\in\{\BA\By,\BB\By,\BC\By\}^\perp\setminus\{\b0\}
\\
\Bc\in\{\BA\Bz,\BB\Bz,\BC\Bz\}^\perp\setminus\{\b0\},
\end{cases}
\eeq{3.14}
or equivalently,
\beq
\Ba\odot \Bx,\,\Bb\odot \By,\,\Bc\odot \Bz\in\{\BA,\BB,\BC\}^\perp\setminus\{\b0\}.
\eeq{3.15}
\par\medskip\noindent
$ii)$ Up to small perturbations of $\BA,\BB,\BC$, there exist three independent symmetrized rank-one matrices in the space $\{\BA,\BB,\BC\}^\perp$.
\end{lemma}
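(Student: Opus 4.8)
The plan is to deduce $ii)$ from $i)$, and to prove $i)$ by reducing it to the existence of three non-collinear real points on an explicit plane cubic.

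First I would record the inner-product identity that links the two formulations. Using the pairing $(\BM,\BN)=\Tr(\BM\BN)$ on symmetric matrices, a direct computation gives, for symmetric $\BA$ and any $\Ba,\Bx$,
\beq
(\Ba\odot\Bx,\BA)=\Ba\cdot\BA\Bx,
\eeq{pf-ip}
and similarly for $\BB,\BC$. Hence $\Ba\odot\Bx\in\{\BA,\BB,\BC\}^\perp$ if and only if $\Ba\perp\{\BA\Bx,\BB\Bx,\BC\Bx\}$, which is exactly the equivalence of \eq{3.14} and \eq{3.15}; moreover $\Ba\odot\Bx\neq\b0$ whenever $\Ba,\Bx\neq\b0$. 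Thus $ii)$ is immediate from $i)$ provided the three symmetrized rank-one matrices $\Ba\odot\Bx,\Bb\odot\By,\Bc\odot\Bz$ produced by $i)$ are linearly independent.

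For $i)$, note that a nonzero $\Ba$ orthogonal to $\{\BA\Bx,\BB\Bx,\BC\Bx\}$ exists precisely when these three vectors are linearly dependent in $\R^3$, i.e. when $\Bx$ lies on the cubic cone
\beq
P(\Bx):=\det[\,\BA\Bx\mid\BB\Bx\mid\BC\Bx\,]=0.
\eeq{pf-cubic}
So $i)$ reduces to finding three linearly independent $\Bx,\By,\Bz$ with $P=0$, i.e. three non-collinear real points of the associated cubic in $\mathbb{P}^2$. For generic $\BA,\BB,\BC$ we have $P\not\equiv 0$; since $P$ has odd degree, its restriction to a generic projective line is a real binary cubic and hence has a real root, so $\{P=0\}$ meets infinitely many lines. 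Choosing a point $q_1$ on one line and $q_2$ on a second line, the line through $q_1,q_2$ meets the cubic in at most three points, so a third line avoiding those supplies a point $q_3\notin\overline{q_1q_2}$; the three points are non-collinear. The only way this fails is the non-generic degeneracy in which the whole real locus collapses onto a single line (for example $P=$ line $\times$ definite conic), and this is removed by an arbitrarily small perturbation of $\BA,\BB,\BC$, as permitted by the statement.

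With $\Bx,\By,\Bz$ chosen at rank-two points of the cone (the generic behaviour on $\{P=0\}$; the rank-$\le1$ locus is a strictly smaller subvariety we avoid), the span of $\{\BA\Bx,\BB\Bx,\BC\Bx\}$ is a plane, and $\Ba$ is a generator of its orthogonal line, e.g. $\Ba=\BA\Bx\times\BB\Bx$ when $\BA\Bx,\BB\Bx$ are independent; likewise for $\Bb,\Bc$. This establishes $i)$. For the independence in $ii)$, I would test against $\Bx\Bx^T,\By\By^T,\Bz\Bz^T$: since $(\Ba\odot\Bx,\Bu\Bu^T)=(\Ba\cdot\Bu)(\Bx\cdot\Bu)$, any relation $\alpha\,\Ba\odot\Bx+\beta\,\Bb\odot\By+\gamma\,\Bc\odot\Bz=\b0$ forces $G\,[\alpha,\beta,\gamma]^T=\b0$ with
\beq
G=\bpm
(\Ba\cdot\Bx)\,|\Bx|^2 & (\Ba\cdot\By)(\Bx\cdot\By) & (\Ba\cdot\Bz)(\Bx\cdot\Bz)\\
(\Bb\cdot\Bx)(\By\cdot\Bx) & (\Bb\cdot\By)\,|\By|^2 & (\Bb\cdot\Bz)(\By\cdot\Bz)\\
(\Bc\cdot\Bx)(\Bz\cdot\Bx) & (\Bc\cdot\By)(\Bz\cdot\By) & (\Bc\cdot\Bz)\,|\Bz|^2
\epm,
\eeq{pf-G}
so independence is equivalent to $\det G\neq 0$. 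This determinant is a nonzero algebraic function of the data — for instance, when $\Bx,\By,\Bz$ are mutually orthogonal it reduces to $(\Ba\cdot\Bx)(\Bb\cdot\By)(\Bc\cdot\Bz)$ — hence it is nonzero on a dense open set, again secured by the admissible small perturbation. The main obstacle is the real-geometry input: guaranteeing three non-collinear real points of the cubic while simultaneously keeping $\det G\neq 0$, and it is exactly here that the freedom to perturb $\BA,\BB,\BC$ is indispensable.
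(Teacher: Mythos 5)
Your reduction of part $i)$ to finding three non-collinear real zeros of the cubic $P(\Bx)=\det(\BA\Bx,\BB\Bx,\BC\Bx)$ is exactly the paper's starting point (its function $F$), and the odd-degree argument that produces two, indeed infinitely many, real zeros is fine. The gap is in the degenerate case where the whole real zero locus is contained in a single plane through the origin: you assert this is ``removed by an arbitrarily small perturbation'' but never exhibit one or prove that such triples $(\BA,\BB,\BC)$ are non-generic. This cannot be dismissed, because the map $(\BA,\BB,\BC)\mapsto P$ is not surjective onto cubics, so genericity of ``three non-collinear real points'' in the space of cubics does not transfer to genericity in the space of matrix triples. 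This degenerate case is precisely where the paper does its real work: it splits according to whether $\BB^{-1}\BC$ has complex eigenvalues, and in each sub-case builds an explicit rank-deficient perturbation (e.g.\ $\BA+\tau\,\BM$ with $\BM$ supported on $\{\Bx,\By\}$ and tuned so that $F_\tau(\Bx+\tau\Bu)=F_\tau(\By+\tau\Bu)=0$) that forces a third independent zero. Without some such construction your proof of $i)$ is incomplete.

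Part $ii)$ has a second, independent gap. You reduce linear independence of $\Ba\odot\Bx,\Bb\odot\By,\Bc\odot\Bz$ to $\det G\neq 0$ and then claim this determinant is ``nonzero on a dense open set.'' But $\Ba,\Bb,\Bc,\Bx,\By,\Bz$ are not free variables: they are pinned down (up to scale) by the orthogonality constraints coming from $\BA,\BB,\BC$, so you would have to show that for a dense set of matrix triples the \emph{induced} choices make $\det G\neq 0$, which is not addressed. The illustrative orthogonal case does not help, since even there the factor $\Ba\cdot\Bx$ can vanish — and indeed the situations $\Ba\perp\Bx$, or $\Bb\in\R\,\Ba$, are exactly the cases the paper must treat separately: it runs a four-case analysis on whether $\Ba,\Bb,\Bc$ lie in the coordinate planes of $(\Bx,\By,\Bz)$, and in the hardest case it abandons $\Bc\odot\Bz$ altogether and manufactures a new symmetrized rank-one matrix $\Bc_\tau\odot\Bz_\tau$ inside $\{\BA,\BB,\BC\}^\perp$ by tracking the middle eigenvalue of a pencil $\beta\,\BM_\tau-t\,\BN_\tau$ through zero. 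You correctly identify where the difficulty lies, but identifying it is not the same as resolving it; as written, both halves of the lemma rest on unproved genericity claims.
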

\begin{proof}
$i)$ Let $F$ be the cubic function defined by
\beq
F(\Bx):=\det\left(\BA\Bx,\BB\Bx,\BC\Bx\right)\quad\mbox{for }\Bx\in\R^3.
\eeq{3.16}
If $F\equiv 0$ in $\R^3$, then condition \eq{3.14} is immediately satisfied.
Otherwise, there exists a basis $(\Bx_0,\Bu_0,\Bv_0)$ of $\R^3$ in the non-empty open set $\{F\neq 0\}$.
Since we have
\beq
F(\Bx_0+s\Bu_0)\mathop{\sim}_{|s|\to\infty}s^3 F(\Bu_0)\mathop{\longrightarrow}_{|s|\to\infty}\pm\infty,
\eeq{3.17}
there exists $s,t\in\R\setminus\{0\}$ such that $\Bx:=\Bx_0+s\Bu_0$ and $\By:=\Bx_0+t\Bv_0$ are two independent vectors in the set $\{F=0\}$.
\par
First, assume that the set $\{F=0\}$ is not contained in the plane ${\rm Span}\left\{\Bx,\By\right\}$.
Then, there exists a basis $(\Bx,\By,\Bz)$ of $\R^3$ in the set $\{F=0\}$. Therefore, there exist three vectors $\Ba,\Bb,\Bc$ of~$\R^3$ satisfying~\eq{3.14}, or equivalently \eq{3.15}.
\par
Now, assume that $\{F=0\}\subset{\rm Span}\left\{\Bx,\By\right\}$. First of all, up to small perturbations we can assume that the matrices $\BA,\BB,\BC$ are invertible. Since $\BB^{-1}\BC$ is a $(3\times 3)$ real matrix, it has at least a real eigenvalue $\Gl$.
The perturbation procedure is now divided in two cases.
\par\medskip\noindent
{\it First case:} The matrix $\BB^{-1}\BC$ has two complex conjugate eigenvalues.
\par\smallskip\noindent
Then, the eigenspace ${\rm Ker}\left(\BB^{-1}\BC-\Gl\,\BI_3\right)$ is a line of $\R^3$ spanned by $\Be\in\R^3\setminus\{\b0\}$.
Consider a basis $(\Bx_0,\Bu_0,\Bv_0)$ of $\R^3$ in the set $\{F\neq 0\}$ such that $(\Be,\Bx_0,\Bu_0)$ and $(\Be,\Bx_0,\Bv_0)$ are also two bases of $\R^3$. As previously there exist $s,t\in\R\setminus\{0\}$ such that $\Bx:=\Bx_0+s\Bu_0$ and $\By:=\Bx_0+t\Bv_0$ are two independent vectors of the set $\{F=0\}$. Moreover, since $(\Be,\Bx)$ and $(\Be,\By)$ are two families of independent vectors and $\R\,\Be$ is the unique real eigenspace of the matrix $\BB^{-1}\BC$, we have
\beq
\BB\Bx\times \BC\Bx\neq\b0\quad\mbox{and}\quad\BB\By\times\BC\By\neq\b0.
\eeq{3.18}
Now, consider a vector $\Bu\in\{\Bx,\By\}^\perp\setminus\{\b0\}$, and the matrix $\BM\in\R^{3\times 3}$ defined by
\beq
\BM\Bx=\BGx,\quad \BM\By=\BGn,\quad\BM\Bu=\b0,
\eeq{3.19}
where the vectors $\BGx,\BGn$ will be chosen later.
Define for $\tau>0$, the perturbed function
\beq
F_\tau(\Bz):=\det\left(\BA\Bz+\tau\,\BM\Bz,\BB\Bz,\BC\Bz\right)\quad\mbox{for }\Bz\in\R^3.
\eeq{3.20}
We have
\beq
\begin{cases}
F_\tau(\Bx+\tau\Bu)=\tau\,\BGx\cdot\big(\BB\Bx\times \BC\Bx+\BO(\tau)\big)+O(\tau)
\\
F_\tau(\By+\tau\Bu)=\tau\,\BGn\cdot\big(\BB\By\times \BC\By+\BO(\tau)\big)+O(\tau),
\end{cases}
\eeq{3.21}
where $\BO(\tau)$ denote some first-order vectors in $\tau$ and $O(\tau)$ some first-order real numbers in $\tau$ which are independent of $\BGx,\BGn$.
Condition \ref{3.18} then allows us to choose $\BGx=\BGx_\tau$, $\BGn=\BGn_\tau$ such that $F_\tau(\Bx+\tau\Bu)=F_\tau(\By+\tau\Bu)=0$.
Therefore, since $(\Bx,\By,\Bu)$ is a basis of~$\R^3$, $(\Bx,\Bx+\tau\Bu,\By+\tau\Bu)$ is also a basis of~$\R^3$ which in addition lies in the set $\{F_\tau=0\}$. This leads us to condition \eq{3.14} with the matrices $\BA+\tau\,\BM,\BB,\BC$.
\par\medskip\noindent
{\it Second case:} The matrix $\BB^{-1}\BC$ has only real eigenvalues.
\par\smallskip\noindent
Then, there exists a small perturbation $\BC_\tau$ of $\BC$ such that the perturbed matrix $\BB^{-1}\BC_\tau$ has $3$ distinct real eigenvalues. Hence, the matrix $\BB^{-1}\BC_\tau$ admits a basis $(\Bx,\By,\Bz)$ of eigenvectors, which implies that
\beq
\BC_\tau\,\Bx-\Gl\,\BB\Bx=\BC_\tau\,\By-\Gl\,\BB\By=\BC_\tau\,\Bz-\Gl\,\BB\Bz=\b0.
\eeq{3.22}
Therefore, the perturbed function
\beq
F_\tau(\Bu):=\det\left(\BA\Bu,\BB\Bu,\BC_\tau\Bu\right)\quad\mbox{for }\Bu\in\R^3,
\eeq{3.23}
satisfies $F_\tau(\Bx)=F_\tau(\By)=F_\tau(\Bz)=0$, which again leads us to condition \eq{3.14} with the matrices $\BA,\BB,\BC_\tau$.
\par\bigskip\noindent
$ii)$ We will distinguish four cases according to the following conditions satisfied or not by the basis $(\Bx,\By,\Bz)$ of $\R^3$ and the vectors $\Ba,\Bb,\Bc\in\R^3\setminus\{0\}$ obtained in step~$i)$:
\beq
\begin{cases}
\Ba\in{\rm Span}\left\{\Bx,\By\right\}\cap{\rm Span}\left\{\Bx,\Bz\right\}
\\
\Bb\in{\rm Span}\left\{\By,\Bx\right\}\cap{\rm Span}\left\{\By,\Bz\right\}
\\
\Bc\in{\rm Span}\left\{\Bz,\Bx\right\}\cap{\rm Span}\left\{\Bz,\By\right\}.
\end{cases}
\eeq{3.24}
{\it First case:} The three vectors $\Ba,\Bb,\Bc$ satisfy conditions \eq{3.24}.
\par\smallskip\noindent
Then, since $(\Bx,\By,\Bz)$ is a basis of $\R^3$, we have necessarily $\Ba\in\R\,\Bx$, $\Bb\in\R\,\By$, $\Bc\in\R\,\Bz$.
Therefore, $\Bx\odot\Bx,\By\odot\By,\Bz\odot\Bz$ are clearly three independent matrices of $\{\BA,\BB,\BC\}^\perp$.
\par\medskip\noindent
{\it Second case:} $\Bb,\Bc$ satisfy conditions \eq{3.24} but not $\Ba$.
\par\smallskip\noindent
Then, for example $(\Ba,\Bx,\By)$ is a basis of $\R^3$, and $\Bb\in\R\,\By$, $\Bc\in\R\,\Bz$.
Let $\Bu\in\{\By,\Bz\}^\perp\setminus\{\b0\}$, and let $\Ga,\beta,\Gg$ such that $\Ga\,\Ba\odot\Bx+\beta\,\By\odot\By+\Gg\,\Bz\odot\Bz=\b0$. Multiplying by $\Bu$ we get that $\Ga(\Bx\cdot\Bu)\,\Ba+\Ga(\Ba\cdot\Bu)\,\Bx=\b0$, hence $\Ga=0$ since $\Bx\cdot\Bu\neq 0$. We deduce immediately that $\beta=\Gg=0$. Therefore, $\Ba\odot\Bx,\By\odot\By,\Bz\odot\Bz$ are three independent matrices of $\{\BA,\BB,\BC\}^\perp$.
\par\medskip\noindent
{\it Third case:} $\Ba,\Bb$ do not satisfy conditions \eq{3.24}, with $\Ba\notin{\rm Span}\left\{\Bx,\By\right\}$  and $\Bb\notin\R\,\Ba\cup\R\,\Bx$ (respectively $\Ba\notin{\rm Span}\left\{\Bx,\Bz\right\}$ and $\Bc\notin\R\,\Ba\cup\R\,\Bx$).
\par\smallskip\noindent
Then, $(\Ba,\Bx,\By)$ is a basis of $\R^3$. Let $\Bu\in\{\Bx,\By\}^\perp\setminus\{\b0\}$, and let $\Ga,\beta\in\R$ such that $\Ga\,\Ba\odot\Bx+\beta\,\Bb\odot\By=\b0$. Multiplying by $\Bu$ we get that $\Ga(\Ba\cdot\Bu)\,\Bx+\beta(\Bb\cdot\Bu)\,\By=\b0$, hence $\Ga=0$ since $\Ba\cdot\Bu\neq 0$, and thus $\beta=0$. Therefore, $\Ba\odot\Bx,\Bb\odot\By$ are two independent matrices of $\{\BA,\BB,\BC\}^\perp$, which have two eigenvalues of opposite signs and one $0$ eigenvalue.
\par
Let us prove by contradiction that
\beq
\exists\,t\in\R\setminus\{0\},\quad\det\left(\Ba\odot\Bx+t\,\Bb\odot\By\right)\neq 0.
\eeq{3.25}
Otherwise, for any $t\neq 0$, there exits $\Bz_t\in{\rm Ker}\left(\Ba\odot\Bx+t\,\Bb\odot\By\right)\setminus\{\b0\}$, hence
\beq
(\Bx\cdot\Bz_t)\,\Ba+(\Ba\cdot\Bz_t)\,\Bx+t(\By\cdot\Bz_t)\,\Bb+t(\Bb\cdot\Bz_t)\,\By=\b0.
\eeq
Since $(\Ba,\Bx,\By)$ is a basis of $\R^3$ and $\Bz_t\neq\b0$, we have necessarily $\By\cdot\Bz_t\neq 0$, which implies that
\beq
-\,\Bb={\Bx\cdot\Bz_t\over t(\By\cdot\Bz_t)}\,\Ba+{\Ba\cdot\Bz_t\over t(\By\cdot\Bz_t)}\,\Bx+{\Bb\cdot\Bz_t\over\By\cdot\Bz_t}\,\By
=\Ga\,\Ba+\beta\,\Bx+\Gg\,\By,
\eeq{3.26}
where $\Ga,\beta,\Gg$ are independent of $t$, and
\beq
(\Bx-\Ga t\,\By)\cdot\Bz_t=(\Ba-\beta t\,\By)\cdot\Bz_t=(\Bb-\Gg\,\By)\cdot\Bz_t=0.
\eeq{3.27}
Since $\Bz_t\neq\b0$ there exists $(p_t,q_t,r_t)\in\R^3\setminus\{0\}$ such that
\beq
\begin{array}{l}
p_t(\Bx-\Ga t\,\By)+q_t(\Ba-\beta t\,\By)+r_t(\Bb-\Gg\,\By)
\\*[.2em]
=(q_t-\Ga r_t)\,\Ba+(p_t-\beta r_t)\,\Bx-(\Ga t p_t+\beta t q_t+2\Gg r_t)\,\By=\b0,
\end{array}
\eeq{3.28}
which implies that $q_t=\Ga r_t$, $p_t=\beta r_t$ and $r_t(\Ga\beta t+\Gg)=0$. Since $(p_t,q_t,r_t)\neq\b0$, we have $r_t\neq 0$ and $\Ga\beta t+\Gg=0$ for any $t\neq 0$, hence $\Ga\beta=0$ and $\Gg=0$. This yields a contradiction between~\eq{3.26} and $\Bb\notin\R\,\Ba\cup\R\,\Bx$.
\par
By virtue of \eq{3.25} there exist two non-zero real numbers $\Ga\neq\beta$ such that the matrices
\beq
\BM:=\Ba\odot\Bx+\Ga\,\Bb\odot\By\quad\mbox{and}\quad \BN:=\Ba\odot\Bx+\beta\,\Bb\odot\By\quad\mbox{are invertible.}
\eeq
The function $p(t):=\det\left(\beta\,\BM-t\,\BN\right)$ is a polynomial of degree $3$ whose $\Ga,\beta$ are two distinct roots.
Then, the polynomial $p(t)$ must change sign by crossing $\Ga$ for example (the conclusion is similar for~$\beta$).
Let $\Gl_1(t)\leq\Gl_2(t)\leq\Gl_3(t)$ be the well-ordered eigenvalues of the symmetric matrix $\beta\,\BM-t\,\BN$.
Since the vectors $\Ba,\Bx$ are independent, $\Ba\odot\Bx$ has two eigenvalues of opposite signs and one $0$ eigenvalue, hence
$\Gl_1(\Ga)<\Gl_2(\Ga)=0<\Gl_3(\Ga)$.
\par
Now, let $\BP_\tau$ for a small $\tau>0$, be a symmetric matrix in the space $\{\BA,\BB,\BC\}^\perp$,
such that $|\BP_\tau-\Ba\odot\Bx|=O(\tau)$, and such that the three matrices $\Ba\odot\Bx,\Bb\odot\By,\BP_\tau$ are independent (note that the dimension of $\{\BA,\BB,\BC\}^\perp$ is $\geq 3$).
Define the two perturbed matrices
\beq
\BM_\tau:=\BP_\tau+\Ga\,\Bb\odot\By\quad\mbox{and}\quad \BN_\tau:=\BP_\tau+\beta\,\Bb\odot\By.
\eeq{3.29}
Since the well-ordered eigenvalues of a real symmetric matrix $\BS$ are Lipschitz-continuous with respect to $\BS$ (see, {\em e.g.}, \cite{Ciarlet:1989:INL}, Theorem~2.3-2), the eigenvalues $\Gl_1^\tau(t)\leq\Gl_2^\tau(t)\leq\Gl_3^\tau(t)$ of $\beta\,\BM_\tau-t\,\BN_\tau$ converge uniformly as $\tau\to 0$ to the eigenvalues $\Gl_1(t)\leq\Gl_2(t)\leq\Gl_3(t)$ of $\beta\,\BM-t\,\BN$, with respect to $t$ in a neighborhood of $\Ga$.
Hence, for $\tau>0$ small enough, there exist $\Ga_\tau$ close to $\Ga$ such that $\Ga_\tau\neq\beta$ and
$\Gl_1^\tau(\Ga_\tau)<\Gl_2^\tau(\Ga_\tau)=0<\Gl_3^\tau(\Ga_\tau)$.
Then, by \ref{3.29} there exists $\Bc_\tau,\Bz_\tau\in\R^3$ such that
\beq
\beta\,\BM_\tau-\Ga_\tau\,\BN_\tau=\Bc_\tau\odot\Bz_\tau=(\beta-\Ga_\tau)\,\BP_\tau+\beta(\Ga-\Ga_\tau)\,\Bb\odot\By,
\quad\mbox{with }\beta-\Ga_\tau\neq 0.
\eeq{3.30}
Therefore, $\Ba\odot\Bx,\Bb\odot\By,\Bc_\tau\odot\Bz_\tau$ are three independent symmetrized rank-one matrices in the space $\{\BA,\BB,\BC\}^\perp$.
\par\medskip\noindent
{\it Fourth case:} $\Ba,\Bb$ do not satisfy condition \eq{3.24}, with $\Ba\notin{\rm Span}\left\{\Bx,\By\right\}$ and $\Bb\in\R\,\Ba\cup\R\,\Bx$ (respectively $\Ba\notin{\rm Span}\left\{\Bx,\Bz\right\}$ and $\Bc\in\R\,\Ba\cup\R\,\Bx$).
\par\smallskip\noindent
For example we have $\Bb\in\R\,\Ba$. We thus start from the matrices $\Ba\odot\Bx,\Ba\odot\By$ in the space $\{\BA,\BB,\BC\}^\perp$, where $(\Ba,\Bx,\By)$ is a basis of $\R^3$. We will consider a perturbation of $\BA,\BB,\BC$ for leading us to the third case.
\par
Let $\Bt\in\{\Ba,\Bx\}^\perp\setminus\{\b0\}$, let $\Bd\in\R^3\setminus\big(\R\,\Ba+\R\,\Bx\big)$, and consider for a small $\tau>0$, the perturbed vector $\Bb_\tau:=\Ba+\tau\,\Bd\notin\R\,\Ba\cup\R\,\Bx$, and the perturbed matrices
\beq
\BA_\tau:=\BA+\tau\,\Bt\odot\Bu_\tau,\quad \BB_\tau:=\BB+\tau\,\Bt\odot\Bv_\tau,\quad \BC_\tau:=\BC+\tau\,\Bt\odot\Bw_\tau,
\eeq
where the vectors $\Bu_\tau,\Bv_\tau,\Bw_\tau$ will be chosen later.
We have clearly $\Ba\odot\Bx\in \{\BA_\tau,\BB_\tau,\BC_\tau\}^\perp$.
On the other hand, we have
\beq
\BA_\tau:\Bb_\tau\odot\By=\tau\,\big(\BA:\Bd\odot\By+\Bt\odot\Bu_\tau:\Ba\odot\By+\tau\,\Bt\odot\Bu_\tau:\Bd\odot\By\big).
\eeq{3.31}
Since $2\,\Bt\odot\Bu_\tau:\Ba\odot\By=(\Bt\cdot\By)\,\Ba\cdot\Bu_\tau$ with $\Bt\cdot\By\neq 0$, we can choose $\Bu_\tau=\BO(1)$ with respect to $\tau$, such that $\BA_\tau:\Bb_\tau\odot\By=0$. Hence, choosing similarly $\Bv_\tau,\Bw_\tau$, we get that $\Bb_\tau\odot\By\in \{\BA_\tau,\BB_\tau,\BC_\tau\}^\perp$.
Therefore, the vectors $\Ba,\Bb_\tau,\Bx,\By$ satisfy the conditions of the third case with the perturbed matrices $\BA_\tau,\BB_\tau,\BC_\tau$.
\end{proof}
%%%%%%%%%%

%%%%%%%%%%%%%%%%%%%%%%%%%%%%%%%%%%%%%%%%%%%%%%%%%%%%%%%%%%%%%%%%%%%%%%%%%%%%%%%%%%%
\section{Constructing suitable multimode materials for the wall microstructure}
\labsect{multi}
\setcounter{equation}{0}
%%%%%%%%%%%%%%%%%%%%%%%%%%%%%%%%%%%%%%%%%%%%%%%%%%%%%%%%%%%%%%%%%%%%%%%%%%%%%%%%%%%

Let us specify the construction of the desired multimode materials in 2 dimensions and then move to three dimensions. We begin by constructing 
bimode materials that can only support one stress. One could use the fourth-rank laminate structure described in detail in section 30.7 of
\cite{Milton:2002:TOC}. The analysis would then be essentially a repeat of that analysis which builds the appropriate trial stress and strain
fields at each length scale. The key feature is that these trial fields need to be chosen so the trial stress associated with the average stress 
 $\BGs^0$ we want to achieve at the macroscopic scale is concentrated entirely in phase 1 (apart from boundary layers that we ignore, whose contribution to the
energy vanishes in the homogenization limit), and so the trial strain associated with an average strain that is orthogonal to $\BGs^0$ is concentrated entirely in phase 2. 

Rather than doing this, it is more instructive to build trial stress and strain fields that are concentrated in phase 1 and phase 2 respectively 
for the honeycomb and inverted honeycomb bimode structures of \fig{4}, as the ideas here carry over to pentamode materials.
The trial stress is easy. It is taken to be macroscopically constant with a value $\Ga_i\Ba_i\Ba_i^T$ in each strut which is parallel to the unit vector $\Ba_i$
in \fig{5}.
Let $w_i$ denote the width of the strut parallel to $\Ba_i$, for $i=0,1,2$. Since the net ``force'' on the black junction regions in \fig{5}(a) or \fig{5}(b) must be zero we obtain
\beq 0=-\sum_{k=0}^2w_i(\Ga_i\Ba_i\Ba_i^T)\Ba_i=-\sum_{k=0}^2w_i\Ga_i\Ba_i. \eeq{4.1}
Since $w_1=w_2$ and $\Ba_0$ points in the horizontal direction, while $\Ba_1$ and $\Ba_2$ have the same horizontal component and equal but opposite vertical components we get
\beq \Ga_1=\Ga_2=-w_0\Ga_0/[2w_1(\Ba_1\cdot\Ba_2)]. \eeq{4.2}
The symmetry of the trial stress field implies there is no associated torque acting on the junction regions. The trial stress in the junction regions is really not that important.
One choice is the stress field that satisfies the elasticity equations appropriate to phase 1 filling the junction region when constant tractions 
act on the three sides. The average value of the trial stress does not depend on the choice of trial stress in the junctions. Indeed, since $\Div(\BGs)=0$ it follows by
integration by parts of $\Div(\BGs\Bx)$  (where $\BGs\Bx$ is a third order tensor) that 
\beq \int_{\GO}\BGs\,d\Bx=\int_{\Md\GO}\Bt\Bx^T\,dS,\quad\quad{\rm where}~\Bt=\BGs\Bn~{\rm is~the~surface~traction},
\eeq{4.2a}
in which $\GO$ is any region with boundary $\Md\GO$. For example, the boundary of $\GO$ could be the outermost boundary of the shapes in \fig{5}(a) and \fig{5}(b)
respectively, where we include the dashed lines as part of the boundary.

In passing, we remark that if $\BGs^0$ is proportional to the identity matrix, then the microstructure of 
\fig{5}(a) resembles a Sigmund microstructure (see the last subfigure in figure 2 in \cite{Sigmund:2000:NCE}).
However, we do not require the tuning of layer widths in the struts that makes his structure optimal. 
Suboptimal structures are perfectly fine in the walls, since the walls ultimately occupy a vanishingly small volume fraction
in the final material.

To obtain a trial easy strain it suffices to specify the trial displacement in the unit cell. We only choose motions so the junction regions
(triangular in \fig{5}(c) and quadrilateral in \fig{5}(d)) undergo rigid body translations so there is no strain inside them.
Thus associated with \fig{5}(c) one can clearly identify two
independent macroscopic modes of motion. The first is where the line RS moves vertically upwards while the line PU remains fixed, and Q and T move in such a way that the 
lengths QR, QP, TS, and TU remain equal and preserved in length. One can choose the displacement to be linear in each of the three regions $A$, $B$, and $C$ so that it matches the
displacement on the boundary. The second is where the line  RS moves horizontally while the line PU remains fixed, and Q and T move in such a way that the 
lengths QR, QP, TS, and TU remain equal and preserved in length. In either case inside the horizontal laminate arm there is no strain, while inside the inclined 
laminate arms there is an infinitesimal shear so the junction at $P$ remains fixed, while the junction at $Q$ moves perpendicular to $\Ba_1$ while
the junction at $V$ moves perpendicular to $\Ba_2$. We also note that there is also an easy microscopic motion which results in no macroscopic motion. Define
the center of each triangular junction to be the point which is at the junction of the perpendicular bisector of the three faces. Then if all the 
triangular junctions undergo the same infinitesimal rotation about these centers while the laminate material in the struts shears at the same time,
it will cost very little energy. The trial strain field is bounded and non-zero only in phase 2, and therefore the associated upper bound on the elastic energy
scales in proportion to $\Gd$. 

The situation in \fig{5}(d) is basically similar. The two black quadrilateral junction regions at the bottom of the figure can remain fixed. Then one mode is the symmetric one, where the region $A$ undergoes uniaxial compression in the horizontal direction and at the same time moves downwards. The second is where the region $A$ undergoes pure shear, so the junction on the left side of it moves up, while the right side moves down. The strain field can be taken constant in the regions A,B,C,D, and E, and in the inclined laminate strut arms is also constant and corresponds to pure shear. These strains are easily determined from the value of the trial displacement field at the boundaries of each region. Again, the trial strain field is bounded and non-zero only in phase 2, and therefore the associated upper bound on the elastic energy
scales in proportion to $\Gd$.

\begin{figure}[!ht]
\centering
\includegraphics[width=0.7\textwidth]{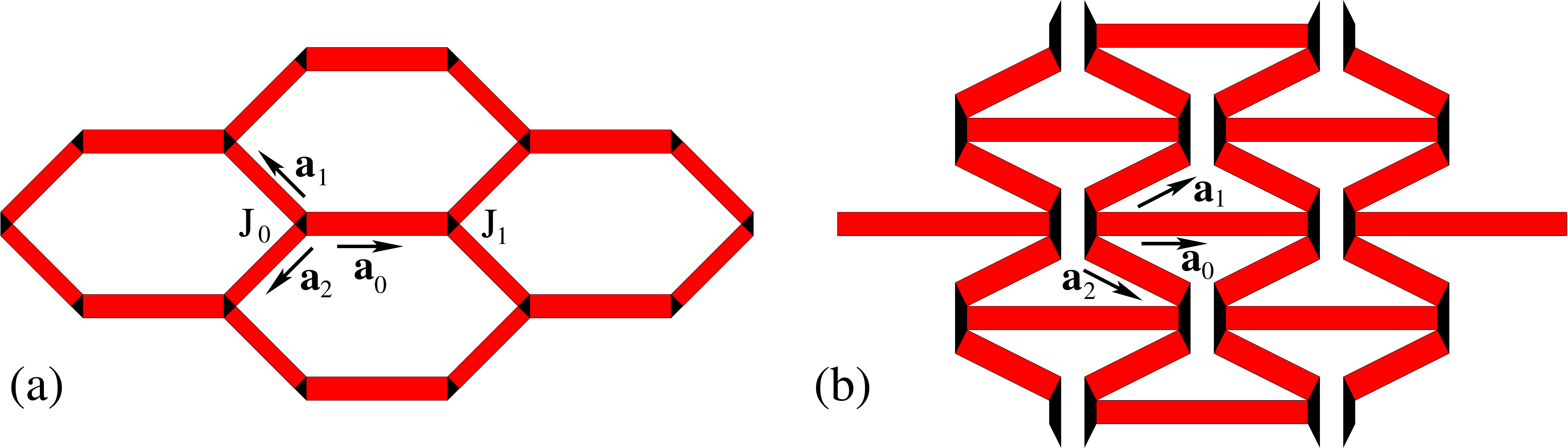}
\caption{Two-dimensional bimode materials that can only support one average stress field $\BGs^0$, and which are easily compliant to any strain orthogonal to $\BGs^0$. Here the
red struts are laminates of the two phases with the interfaces in the laminate parallel to the direction of the struts.
 The geometry of (a) is appropriate if $\det\BGs^0>0$, the geometry of (b) is appropriate if $\det\BGs^0<0$, and if $\det\BGs^0=0$ it suffices to use a simple laminate with the layer surfaces perpendicular to the null vector of $\BGs^0$.}
\labfig{4}
 \end{figure}

\begin{figure}[!ht]
\centering
\includegraphics[width=0.7\textwidth]{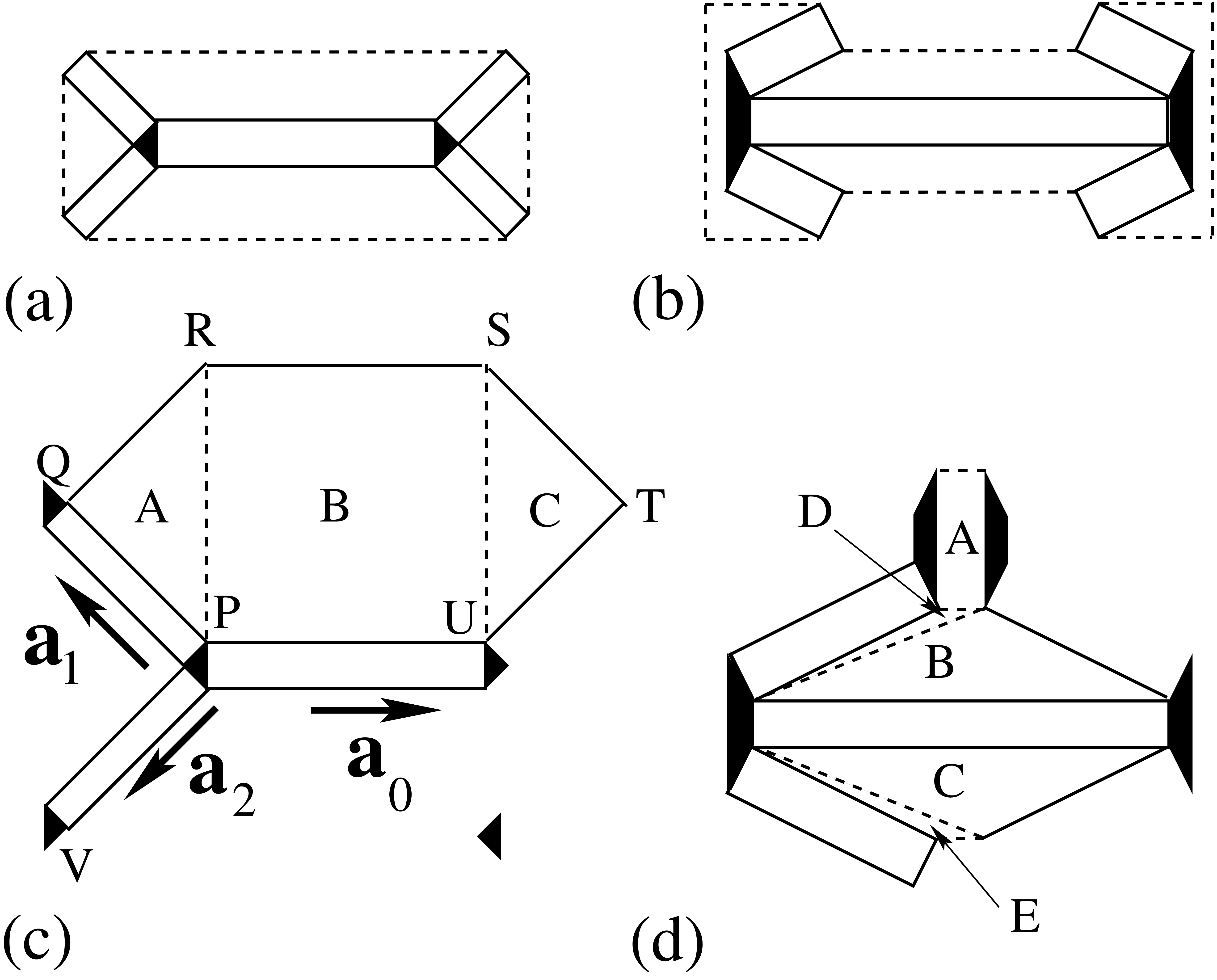}
\caption{The honeycomb structure of \fig{4}(a) can be taken to have the unit cell shown in figure (a) of this drawing. Similarly the inverted honeycomb structure of \fig{4}(a) 
can be taken to have the unit cell shown in figure (b) of this drawing. The space outside the struts and junction regions (that is occupied by phase 2) has been triangulated
with boundaries marked by the dashed lines to make the construction of the trial stress fields easy.}
\labfig{5}
\end{figure}

The structures of \fig{6} give suitable two-dimensional unimode materials. We will not specific the appropriate trial stress and strain fields which prove that these structures 
have the desired elastic behavior as they are exactly the same as given in section 30.6 of Milton \cite{Milton:2002:TOC}.  

\begin{figure}[!ht]
\centering
\includegraphics[width=0.9\textwidth]{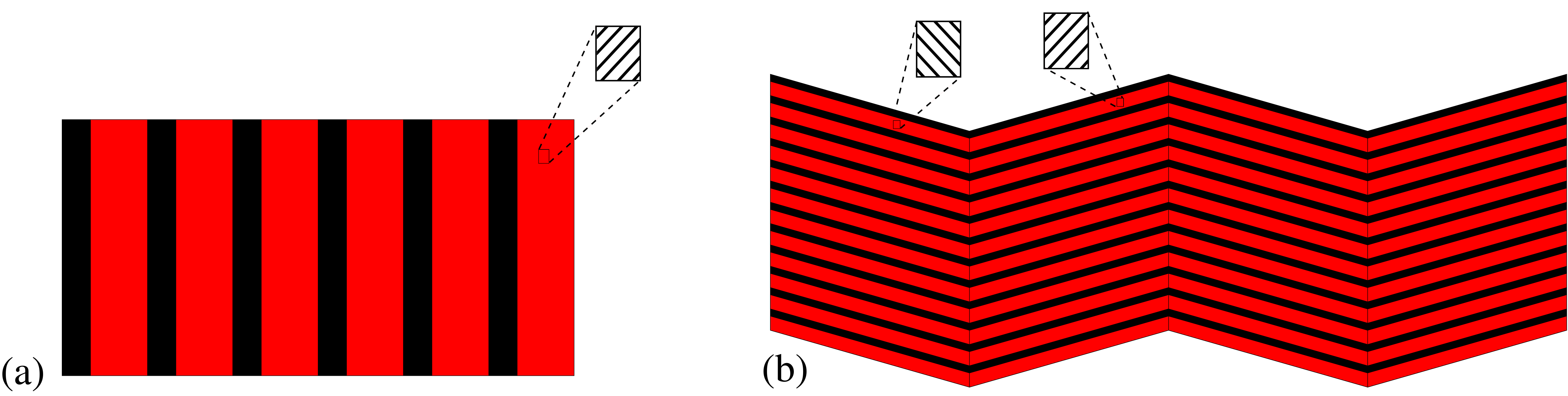}
\caption{Two-dimensional unimode materials that is easily compliant to one average strain field $\BGe^0$, and which can support any stress orthogonal to $\BGe^0$. In both
(a) and (b) the red region represents a laminate as indicated by the inserts.
The second-rank laminate geometry of (a) is
appropriate if $\det\BGe^0<0$ and the third-rank geometry of (b) is appropriate for any $\BGe^0$.}
\labfig{6}
\end{figure}

We now describe the pentamodes and the trial fields in them. Given four vectors $\Ba_0,\Ba_1,\Ba_2$, and $\Ba_4$ (no longer required to be unit vectors) we position a point $P$ at the origin, and join $P$
to the four points $\Bx=\Ba_i$, $i=0,1,2,3$, with four infinitesimally thin rods, as in \fig{7}(a). We then take as our unit cell of periodicity the parallelepiped with the eight points
$\Bx=\Ba_i$, $\Bx=\Ba_1+\Ba_2+\Ba_3-\Ba_i-\Ba_0$, $i=0,1,2,3$ (the three vectors $\Bv_i=\Ba_i-\Ba_0$, $i=1,2,3$ are the primitive lattice vectors). We require that $\Ba_0,\Ba_1,\Ba_2,\Ba_4$
be chosen so $P$ lies within this parallelepiped. After periodically extending the rod structure (with rods joining $k_1\Bv_1+k_2\Bv_1+k_3\Bv_1$ with the four points
$k_1\Bv_1+k_2\Bv_1+k_3\Bv_1+a_i$, $i=0,1,2,3$, for any integers $k_1$, $k_2$, and $k_3$, we then coat this periodic rod structure with phase 1, as illustrated in \fig{7}(b)
so that any point $\Bx$ is in phase 1 if and only if it 
is within a distance $r$ of the rod structure. 
Here $r$ should be chosen appropriately small so that the coatings of each rod contain a cylindrical section that we refer to as a strut. \fig{7}(b) is misleading as it suggests
that the unit cell only contains one junction region. The true structure which should be periodically repeated (by making copies shifted by vectors $k_1\Bv_1+k_2\Bv_1+k_3\Bv_1$
for all combinations of integers $k_1$, $k_2$, and $k_3$) is shown in \fig{7}(c), and contains the junction of \fig{7}(b) plus the one obtained by inverting it 
under the transformation $\Bx\to -\Bx$. The final
step, illustrated in \fig{7}(c), is to take a cylindrical subsection of each cylindrical section between junctions, and replace it with a pentamode material that
supports any stress proportional to $\Ba_i\Ba_i^T$. It is convenient to take end faces of the cylindrical subsection to be perpendicular to the cylinder axis, i.e. perpendicular
to the vector $\Ba_i$ that is parallel to the cylinder axis. Now we define the junction regions to be those connected regions of phase 1 that are bounded by the cylindrical
subsections. 

To obtain the trial stress field, we first solve for the tensions in the rods of \fig{7}(a) when the rods are completely rigid and supporting a stress. These are found just by
balance of forces at the junctions. If the rods parallel to $\Ba_i$ have a tension $T_i$ (that could be negative) then we take in the cylindrical subsection of the 
corresponding strut of the final pentamode a trial stress field $T_i\Ba_i\Ba_i^T/|\Ba_i|^2/(\pi r^2)$ giving rise to a net force $T_i$ pulling (pushing if $T_i$ is negative)
on the adjacent junction regions.
Inside the junction region we take a stress field that satisfies the elasticity equations appropriate to phase 1 filling the junction region when constant tractions 
$T_i/(\pi r^2)$ act on the four disks that border the cylindrical subsections, and there is no forces on the remaining surface of the junction regions.

Obtaining appropriate trial strain fields is also not too difficult. We first consider an infinitesimal motion that the rod model with \fig{7}(a) as the unit cell can undergo
when the rods are rigid, but the pin junctions are flexible. Then in the final pentamode the junction regions are taken to undergo a rigid body translation which is the same
as that of the corresponding pin junction in the rod model. The cylindrical subsections undergo appropriate shears to ensure continuity of the displacement. The trial displacement
in the remaining multiconnected region of phase 2 bordered by the junction regions and the cylindrical subsection can be somewhat arbitrary, and is not really important. One could
take it as the solution for the displacement field when phase 2 has some nonzero elastic moduli, and the displacement at the boundary of the junctions and cylindrical subsections matches 
that of the trial field just specified. The trial strain field is bounded and non-zero only in phase 2, and therefore the associated upper bound on the elastic energy
scales in proportion to $\Gd$. 

It is clear from the choice of these trial stress fields and trial strain fields, that the macroscopic stress the material supports and the easy motions it permits, are exactly the same
as for the ideal model with rods and pin-junctions that has the unit cell pictured in \fig{7}(a), and which provided the basis for our construction. That this structure can support
any desired average stress, and only that average stress, is then a direct consequence of the analysis in 5.2 of Milton and Cherkaev \cite{Milton:1995:WET}.

\begin{figure}[!ht]
\centering
\includegraphics[width=0.9\textwidth]{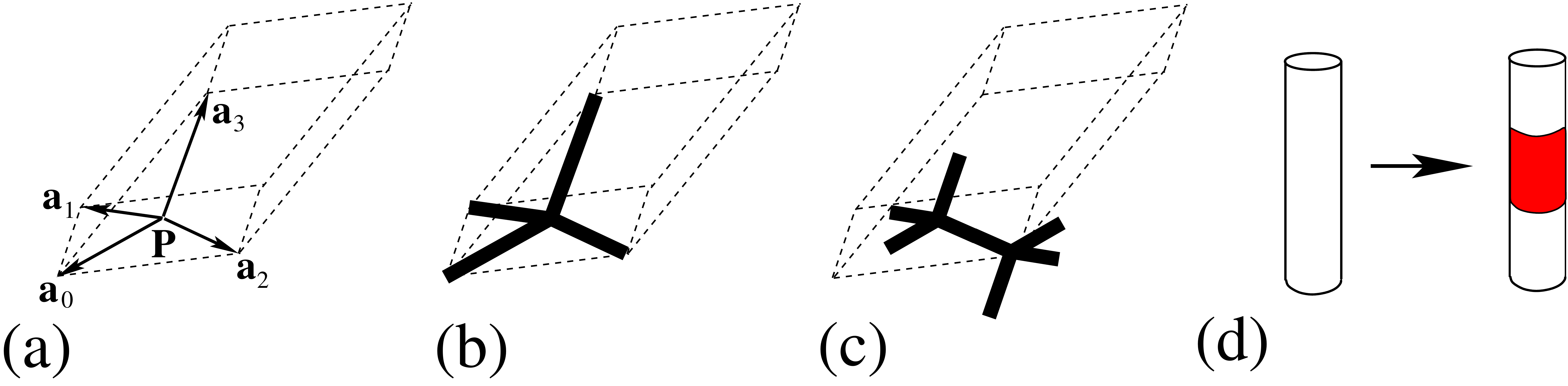}
\caption{The procedure for constructing the desired pentamodes. In (d) a shearable section is inserted into each strut. This section, denoted in red, has the structure of 
parallel square fibers, as illustrated in \fig{8}, with the fibers aligned parallel to the strut. } 
\labfig{7}
\end{figure}
\begin{figure}[!ht]
\vskip 5 mm
\centering
\includegraphics[width=0.7\textwidth]{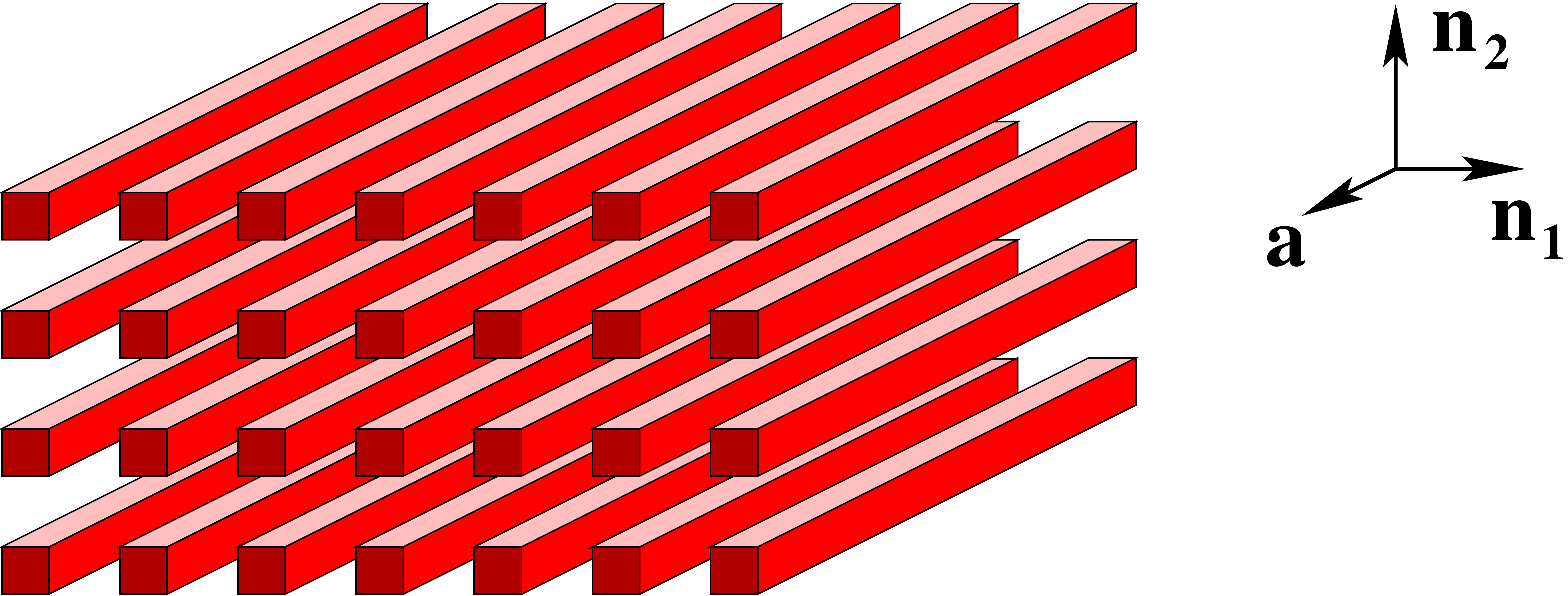}
\caption{A detailed view of the walled pentamode microstructure which is used as the easily shearable section in the pentamode cylindrical struts. The vector $\Ba$ is chosen to be one of the four
vectors $\Ba_k$, $k=0,1,2,3$, as appropriate to each pentamode strut orientated parallel to $\Ba_k$. The square beams can support tension (or compression) in the 
direction of the beam, and in particular it can support a constant macroscopic stress $\BGs_k^B=\Ga_k\Ba_k\Ba_k^T$. As we are working in the framework of linear elasticity
we ignore the very real possibility that the beams will buckle.}
\labfig{8}
\end{figure}
To obtain any desired unimode, bimode, trimode, or quadramode material, that have respectively $p=1,2,3,4$ independent easy modes of deformation, and that
support respectively $6-p$ applied stresses $\BGs_j^0$, $j=1,\ldots,6-p$, we follow the prescription given by Milton and Cherkaev \protect\cite{Milton:1995:WET}. 
That is we superimpose, one at a time, $6-p$ pentamode structures, each supporting one of the stresses $\BGs_j^0$, with struts which are sufficiently thin to ensure that one can (with appropriate modification specified below) superimpose the structures without collision.
When doing this superimposition we first remove phase 2 and 
shift the lattice structures to try to avoid unwanted intersections of phase 1. This may not always be possible, so in the event two vertices clash we make the replacement
in \fig{9}(a) in one of the structures (which may of course then cause additional unwanted intersections of the struts). Then if two (or more) struts intersect we make the replacement \fig{9}(b)
in all but one of the struts (that then passes through each hole). The remaining possibility we want to avoid is that two pentamode struts are parallel and intersect
when we superimpose the structures. Due to the freedom in the choice of the $\Ba_k$, that give a desired $\BGs_j^0$, we can always choose our $6-p$ pentamode structures
to avoid such clashes. Finally, the shearable section in each pentamode strut should be placed in a section that has not been modified,
so it still is parallel to one of the $\Ba_k$. At the very end any remaining space that is not filled by phase 1 should be filled by the extremely compliant phase 2.
\begin{figure}[!ht]
\centering
\includegraphics[width=0.7\textwidth]{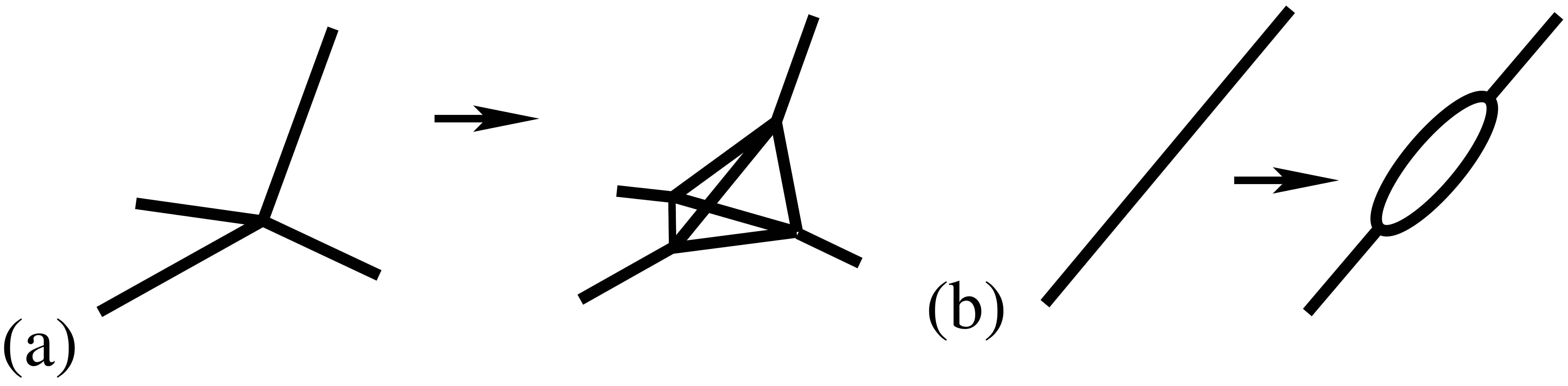}
\caption{Some of the replacements that are needed to obtain desired unimode, bimode, trimode, or quadramode materials.}
\labfig{9}
\end{figure}
%%%%%%%%%%%%%%%%%%%%%%%%%%%%%%%%%%%%
\section{Continuity of the energy functions}
\labsect{continuity}
\setcounter{equation}{0}
%%%%%%%%%%%%%%%%%%%%%%%%%%%%%%%%%
It follows from the preceding analysis that we can determine the three energy functions
$W_f^3(\BGs^0_1,\BGs^0_2,\BGs^0_3,\BGe^0_1,\BGe^0_2,\BGe^0_3)$, $W_f^4(\BGs^0_1,\BGs^0_2,\BGe^0_1,\BGe^0_2,\BGe^0_3,\BGe^0_4)$, and
$W_f^5(\BGs^0_1,\BGe^0_1,\BGe^0_2,\BGe^0_3,\BGe^0_4,\BGe^0_5)$ in the limit
$\Gd\to 0$ for almost all combinations of applied fields. Here we establish that these energy functions are continuous functions of the
applied fields in the limit
$\Gd\to 0$, and therefore we obtain expressions for the energy functions for all combinations of applied fields in this limit.

Recall that the set $G_fU$ is characterized by its $W$-transform. For example, part of it is described by the function
\beq W_f^4(\BGs^0_1,\BGs^0_2,\BGe^0_1,\BGe^0_2,\BGe^0_3,\BGe^0_4)
 = \min_{\BC_*\in GU_f}\left[\sum_{i=1}^4\BGe^0_i:\BC_*\BGe^0_i+\sum_{j=1}^2\BGs^0_j:\BC_*^{-1}\BGs^0_j\right],
\eeq{6.1}
Here we want to show that such energy functions are continuous in their arguments.
Let the tensor $\BC_*(\BGs^0_1,\BGs^0_2,\BGe^0_1,\BGe^0_2,\BGe^0_3,\BGe^0_4)$ 
be a minimizer of \eq{6.1}, and suppose we perturb the applied stress fields $\BGs^0_j$ by $\Gd\BGs^0_j$,
and the applied strain fields $\BGe^0_i$ by $\Gd\BGe^0_i$. Now consider the following
walled material, with a geometry described by the characteristic function
\beq \chi_w(\Bx)=\prod_{k=1}^{3}(1-H_{\Ge'}(\Bx\cdot\Bn_k)), \eeq{6.2}
where $\Bn_1$, $\Bn_2$, and $\Bn_3$ are the three orthogonal unit vectors,
\beq \Bn_1=\bpm 1 \\ 0 \\ 0 \epm,\quad \Bn_2=\bpm 0 \\ 1 \\ 0 \epm,\quad\Bn_3=\bpm 0 \\ 0 \\ 1 \epm, \eeq{6.3}
and $\Ge'$ is a small parameter that gives the thickness of the walls. Inside the walls, where $\chi_w(\Bx)=0$, we put an
isotropic composite of phase 1 and phase 2, mixed in the proportions $f$ and $1-f$ with isotropic effective
elasticity tensor $\BC(\Gk_0,\Gm_0)$, where $\Gk_0$ is the effective bulk modulus and $\Gm_0$ is the effective 
shear modulus, that are assumed to have non-zero limits as $\Gd\to 0$. (The isotropic composite could consist of islands of void
surrounded by phase 1). Outside the walls, where $\chi_w(\Bx)=1$, we put the material that has an effective tensor 
$\BC_*^{1}=\BC_*(\BGs^0_1,\BGs^0_2,\BGe^0_1,\BGe^0_2,\BGe^0_3,\BGe^0_4)$. Let $\BC_*'$ be the effective tensor of the composite. We have the variational
principle 
\beqa &~&\sum_{i=1}^4(\BGe^0_i+\Gd\BGe^0_i):\BC_*'(\BGe^0_i+\Gd\BGe^0_i)+\sum_{j=1}^2(\BGs^0_j+\Gd\BGs^0_j):(\BC_*')^{-1}(\BGs^0_j+\Gd\BGs^0_j)=\nonum
&~&\min_{\und{\BGe}_1,\und{\BGe}_2,\und{\BGe}_3,\und{\BGe}_4,\und{\BGs}_1,\und{\BGs}_2}
\Big\langle\sum_{i=1}^4\und{\BGe}_i(\Bx):[\chi_w(\Bx)\BC_*^{1}+(1-\chi_w(\Bx))\BC(\Gk_0,\Gm_0)]\und{\BGe}_i(\Bx)\nonum
&~&+\sum_{j=1}^2\und{\BGs}_j(\Bx):[\chi_w(\Bx)\BC_*^{1}+(1-\chi_w(\Bx))\BC(\Gk_0,\Gm_0)]^{-1}\und{\BGs}_j(\Bx)\Big\rangle,
\eeqa{6.4}
where the minimum is over fields subject to the appropriate average values and differential constraints. We choose 
constant trial strain fields
\beq \und{\BGe}_i(\Bx)=\BGe^0_i+\Gd\BGe^0_i,\quad i=1,2,3,4, \eeq{6.5}
and trial stress fields
\beq \und{\BGs}_j(\Bx)=\BGs^0_j+\Gd\und{\BGs}_j(\Bx),\quad j=1,2, \eeq{6.6}
where $\Gd\und{\BGs}_j(\Bx)$ has average value $\Gd\BGs^0_j$ and is concentrated in the walls. Specifically, if $\{\Gd\BGs^0_j\}_{k\ell}$
denote the matrix elements of $\Gd\BGs^0_j$, and letting
\beqa \Gd\BGs_j^1& = &\bpm 0 & 0 & 0 \\ 0 & 0 & \{\Gd\BGs^0_j\}_{23} \\
0 & \{\Gd\BGs^0_j\}_{32} & \{\Gd\BGs^0_j\}_{33} \epm,\nonum
\Gd\BGs_j^2 & = &\bpm \{\Gd\BGs^0_j\}_{11} & 0 & \{\Gd\BGs^0_j\}_{13} \\ 0 & 0 & 0 \\
\{\Gd\BGs^0_j\}_{31} & 0 & 0 \epm, \nonum
\Gd\BGs_j^3 & = &\bpm 0 & \{\Gd\BGs^0_j\}_{12} & 0 \\
\{\Gd\BGs^0_j\}_{21} & \{\Gd\BGs^0_j\}_{22} & 0  \\ 0 & 0 & 0\epm, 
\eeqa{6.7}
then we choose
\beq \Gd\und{\BGs}_j(\Bx)=\sum_{k=1}^{3}\Gd\BGs_j^k H_{\Ge'}(\Bx\cdot\Bn_k)/\Ge', \eeq{6.8}
which has the required average value $\Gd\BGs^0_j$ and satisfies the differential constraints appropriate to a strain field because $\Gd\BGs_j^k\Bn_k=0$. Hence there
exist positive constants $\Ga$ and $\Gb$ such that for sufficiently small $\Ge'$ and for sufficiently small variations $\Gd\BGs^0_j$ and $\Gd\BGe^0_i$ in the applied fields,
we have
\beqa &~&\Big\langle\sum_{i=1}^4\und{\BGe}_i(\Bx):[\chi_w(\Bx)\BC_*^{1}+(1-\chi_w(\Bx))\BC(\Gk_0,\Gm_0)]\und{\BGe}_i(\Bx)\nonum
&~&+\sum_{j=1}^2\und{\BGs}_j(\Bx):[\chi_w(\Bx)\BC_*^{1}+(1-\chi_w(\Bx))\BC(\Gk_0,\Gm_0)]^{-1}\und{\BGs}_j(\Bx)\Big\rangle \nonum
&~&\leq W_f^4(\BGs^0_1,\BGs^0_2,\BGe^0_1,\BGe^0_2,\BGe^0_3,\BGe^0_4)+\Ga\Ge'+\Gb K/\Ge'
\eeqa{6.8a}
where $K$ represents the norm 
\beq K=\sqrt{\sum_{i=1}^4\Gd\BGe^0_i:\Gd\BGe^0_i+\sum_{j=1}^2\Gd\BGs^0_j:\Gd\BGs^0_j}, \eeq{6.9}
of the field variations. Choosing $\Ge'=\sqrt{\Gb K/\Ga}$ to minimize the right hand side of \eq{6.8a} we obtain
\beqa &~&W_f^4(\BGs^0_1+\Gd\BGs^0_1,\BGs^0_2+\Gd\BGs^0_2,\BGe^0_1+\Gd\BGe^0_1,\BGe^0_2+\Gd\BGe^0_2,\BGe^0_3+\Gd\BGe^0_3,\BGe^0_4+\Gd\BGe^0_4)\nonum
&~&\quad\quad\leq W_f^4(\BGs^0_1,\BGs^0_2,\BGe^0_1,\BGe^0_2,\BGe^0_3,\BGe^0_4)+2\sqrt{\Ga\Gb K}.
\eeqa{6.10}
In obtaining the bound \eq{6.8a} we have used the fact that $K^2$ is less than $K$ for sufficiently small $K$, specifically $K<1$.
Clearly the right hand side of \eq{6.10}
approaches $W_f^4(\BGs^0_1,\BGs^0_2,\BGe^0_1,\BGe^0_2,\BGe^0_3,\BGe^0_4)$ as $K\to 0$. On the other hand
by repeating the same argument with the roles of $W_f^4(\BGs^0_1,\BGs^0_2,\BGe^0_1,\BGe^0_2,\BGe^0_3,\BGe^0_4)$ and
$W_f^4(\BGs^0_1+\Gd\BGs^0_1,\BGs^0_2+\Gd\BGs^0_2,\BGe^0_1+\Gd\BGe^0_1,\BGe^0_2+\Gd\BGe^0_2,\BGe^0_3+\Gd\BGe^0_3,\BGe^0_4+\Gd\BGe^0_4)$ reversed,
and with $\BC_*(\BGs^0_1+\Gd\BGs^0_1,\BGs^0_2+\Gd\BGs^0_2,\BGe^0_1+\Gd\BGe^0_1,\BGe^0_2+\Gd\BGe^0_2,\BGe^0_3+\Gd\BGe^0_3,\BGe^0_4+\Gd\BGe^0_4)$
replacing $\BC_*(\BGs^0_1,\BGs^0_2,\BGe^0_1,\BGe^0_2,\BGe^0_3,\BGe^0_4)$ we deduce that
\beqa &~& W_f^4(\BGs^0_1,\BGs^0_2,\BGe^0_1,\BGe^0_2,\BGe^0_3,\BGe^0_4) \nonum
&~&\quad\leq W_f^4(\BGs^0_1+\Gd\BGs^0_1,\BGs^0_2+\Gd\BGs^0_2,\BGe^0_1+\Gd\BGe^0_1,\BGe^0_2+\Gd\BGe^0_2,\BGe^0_3+\Gd\BGe^0_3,\BGe^0_4+\Gd\BGe^0_4)
+2\sqrt{\Ga\Gb K}.\nonum &~&
\eeqa{6.11}
This with \eq{6.10} establishes the continuity of $W_f^4(\BGs^0_1,\BGs^0_2,\BGe^0_1,\BGe^0_2,\BGe^0_3,\BGe^0_4)$. The continuity of the other energy functions follows by
the same argument. 

%%%%%%%%%%%%%%%%%%%%%%%%%%%%%%%%%%%%%%%%%%%%%%%%%%%%%%%%%%%%%%%%%%%%%%%%%%%%%%%%%%%%%%%%%%%%%%%%%%%%%%%%%%%%%%%%%%%%%%%%%%%
\section{Conclusion}
\setcounter{equation}{0}
%%%%%%%%%%%%%%%%%%%%%%%%%%%%%%%%%%%%%%%%%%%%%%%%%%%%%%%%%%%%%%%%%%%%%%%%
%%%%%%%%%%%%%%%%%%%%%%%%%%%%%%%%%%%%%%%%%%%%%%%%%%%%%%%%%%%%%%%%%%%%%%%%%%%%%%%%%%%%%%%%%%%%%%%%%%%%%%%%%%%%%%%%%%%%%%%%%%%%
To conclude, we have established the following two Theorems:
\begin{theorem}
Consider composites in three dimensions of two materials with positive definite elasticity tensors $\BC_1$ and $\BC_2=\Gd\BC_0$ mixed in proportions $f$ and $1-f$. Let the seven energy functions $W_f^k$, $k=0,1,\ldots,6$, that characterize the set $G_fU$ (with $U=(\BC_1,\Gd\BC_0)$) of possible elastic tensors
be defined by \eq{2.1}. These energy functions involve a set of applied strains $\BGe^0_i$ and applied stresses $\BGs^0_j$ meeting the orthogonality condition \eq{2.1aa}.
The energy function $W_f^0$ is given by
\beq W_f^0( \BGs^0_1, \BGs^0_2, \BGs^0_3, \BGs^0_4, \BGs^0_5, \BGs^0_6)=\sum_{j=1}^6\BGs^0_j:\widetilde{\BC}_f^A(\BGs^0_1,\BGs^0_2,\BGs^0_3,\BGs^0_4,\BGs^0_5,\BGs^0_6)\BGs^0_j,
\eeq{7.0}
as proved by Avellaneda \cite{Avellaneda:1987:OBM}. Here
$\widetilde{\BC}_f^A(\BGs^0_1,\BGs^0_2,\BGs^0_3,\BGs^0_4,\BGs^0_5,\BGs^0_6)$ is the effective elasticity tensor of a complementary Avellaneda material, that is a sequentially layered laminate with the minimum value of the sum of complementary energies 
\beq \sum_{j=1}^6\BGs^0_j:\BC_*^{-1}\BGs^0_j.
\eeq{7.0a}
Additionally, we now have
\beqa
\lim_{\Gd\to 0}W_f^3(\BGs^0_1,\BGs^0_2,\BGs^0_3,\BGe^0_1,\BGe^0_2,\BGe^0_3) & = & \sum_{j=1}^3\BGs^0_j:[\widetilde{\BC}_f^A(\BGs^0_1,\BGs^0_2,\BGs^0_3,0,0,0)]^{-1}\BGs^0_j ,\nonum
\lim_{\Gd\to 0}W_f^4(\BGs^0_1,\BGs^0_2,\BGe^0_1,\BGe^0_2,\BGe^0_3,\BGe^0_4) & = & \sum_{j=1}^2\BGs^0_j:[\widetilde{\BC}_f^A(\BGs^0_1,\BGs^0_2,0,0,0,0)]^{-1}\BGs^0_j , \nonum
\lim_{\Gd\to 0}W_f^5(\BGs^0_1,\BGe^0_1,\BGe^0_2,\BGe^0_3,\BGe^0_4,\BGe^0_5) & = &  \BGs^0_1:[\widetilde{\BC}_f^A(\BGs^0_1,0,0,0,0,0)]^{-1}\BGs^0_1 , \nonum
\lim_{\Gd\to 0}W_f^6(\BGe^0_1,\BGe^0_2,\BGe^0_3,\BGe^0_4,\BGe^0_5,\BGe^0_6) & = & 0.
\eeqa{7.1}
for all combinations of applied stresses $\BGs_j^0$ and applied strains $\BGe^0_i$. When  $\det\BGe^0_1=0$ but $\BGe^0_1$ is not positive or negative semidefinite, we have
\beq
\lim_{\Gd\to 0}W_f^1(\BGs^0_1,\BGs^0_2,\BGs^0_3,\BGs^0_4,\BGs^0_5,\BGe^0_1)=\sum_{j=1}^5\BGs^0_j:[\widetilde{\BC}_f^A(\BGs^0_1,\BGs^0_2,\BGs^0_3,\BGs^0_4,\BGs^0_5,0)]^{-1}\BGs^0_j,
\eeq{7.2}
while when the equation $\det(\BGe^0_1+t\BGe^0_2)$ has at least two distinct roots for $t$ (the condition for which is given by \eq{3.7}), and additionally, the matrix pencil $\BGe(t)=\BGe^0_1+t\BGe^0_2$ does not contain any positive definite or negative definite matrices as $t$ is varied (which requires that the quantities in
\eq{3.6} are never all positive, or all negative), we have 
\beq 
\lim_{\Gd\to 0} W_f^2(\BGs^0_1,\BGs^0_2,\BGs^0_3,\BGs^0_4,\BGe^0_1,\BGe^0_2)=\sum_{j=1}^4\BGs^0_j:[\widetilde{\BC}_f^A(\BGs^0_1,\BGs^0_2,\BGs^0_3,\BGs^0_4,0,0)]^{-1}\BGs^0_j.
\eeq{7.3}
\end{theorem}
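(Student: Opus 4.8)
The plan is to sandwich $W_f^2(\BGs^0_1,\BGs^0_2,\BGs^0_3,\BGs^0_4,\BGe^0_1,\BGe^0_2)$ between the lower bound already recorded in \eq{2.12} and a matching upper bound produced by the walled construction. Note first that the right-hand side of \eq{7.3} is exactly $W_f^0(\BGs^0_1,\BGs^0_2,\BGs^0_3,\BGs^0_4,0,0)$, since $\widetilde{\BC}_f^A(\BGs^0_1,\BGs^0_2,\BGs^0_3,\BGs^0_4,0,0)$ minimizes $\sum_{j=1}^4\BGs^0_j:\BC_*^{-1}\BGs^0_j$ over $GU_f$ by definition. The second line of \eq{2.12} gives the lower bound $\sum_{j=1}^4\BGs^0_j:[\widetilde{\BC}_f^A(\BGs^0_1,\BGs^0_2,\BGs^0_3,\BGs^0_4,0,0)]^{-1}\BGs^0_j\leq W_f^2$ unconditionally, because the two strain energies $\BGe^0_i:\BC_*\BGe^0_i$ are nonnegative. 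Thus everything reduces to exhibiting, in the limit $\Gd\to 0$, composites whose sum of energies and complementary energies does not exceed this value.

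For the upper bound I would first treat generic applied strains. The hypotheses of \eq{7.3} --- that $\det(\BGe^0_1+t\BGe^0_2)$ has at least two distinct roots (controlled by \eq{3.7}) and that the pencil $\BGe^0_1+t\BGe^0_2$ contains no definite matrix (controlled by \eq{3.6}) --- are precisely those of Case (iii) ($k=2,\ d=3$) of the algebraic theorem in Section~\sect{symr1mat}. That theorem supplies, for generic $(\BGe^0_1,\BGe^0_2)$, a basis $\BGe^{(1)}=\Ba_1\odot\Bn_1,\ \BGe^{(2)}=\Ba_2\odot\Bn_2$ of $\CV_2=\mathrm{span}(\BGe^0_1,\BGe^0_2)$ made of symmetrized rank-one matrices, whose normals $\Bn_1,\Bn_2$ orient two families of slip walls as in \eq{2.12a}--\eq{2.14}. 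Outside the walls I place the complementary Avellaneda material $\BC_*^{1}=\widetilde{\BC}_f^A(\BGs^0_1,\BGs^0_2,\BGs^0_3,\BGs^0_4,0,0)$; along each single family of walls I insert the unimode material compliant to the relevant $\BGe^{(k)}$, and in the two-wall junction regions the bimode material $\BC_*^{2}$ compliant to all of $\CV_2$, both furnished by Section~\sect{multi} (the three-dimensional bimode being assembled from four pentamodes). Since $p=2$ no triple junctions occur.

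Next I would evaluate the variational principle of the homogenization-at-intermediate-scales step (the analogue of \eq{2.5}--\eq{2.6} for the two materials $\BC_*^{1}$ and $\BC_*^{2}$) on the constant trial stresses $\und{\BGs}_j(\Bx)=\BGs^0_j$ of \eq{2.17}, which satisfy $\Div\und{\BGs}_j=0$ trivially, and on the wall-concentrated trial strains \eq{2.18}--\eq{2.22}. The strain-energy bookkeeping is the volume-fraction estimate behind \eq{2.22a} and \eq{2.22b}: the single-wall contribution scales like $\Gd/\Ge$ and the two-wall junction contribution like $\Gd$, so taking the limits in the order homogenization, then $\Gd\to 0$, then $\Ge\to 0$ with $\Gd/\Ge\to 0$ annihilates all strain-energy terms. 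For the complementary energies the trial stress is constant and, since $\BC_*^{2}$ has the block form \eq{2.16} and $\BGs^0_j$ is orthogonal to $\CV_2$, the density $\BGs^0_j:[\chi_*(\BC_*^{1})^{-1}+(1-\chi_*)(\BC_*^{2})^{-1}]\BGs^0_j$ stays bounded; its average equals $\langle\chi_*\rangle\,\BGs^0_j:[\widetilde{\BC}_f^A]^{-1}\BGs^0_j+\langle 1-\chi_*\rangle\,\BGs^0_j:(\BC_*^{2})^{-1}\BGs^0_j$, which tends to $\BGs^0_j:[\widetilde{\BC}_f^A]^{-1}\BGs^0_j$ as the wall volume fraction $\langle 1-\chi_*\rangle\to 0$. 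Summing over $j$ delivers the upper bound, equivalently the limiting behaviour \eq{2.22aa} with $p=2$.

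Finally the generic result must be upgraded to every $(\BGe^0_1,\BGe^0_2)$ obeying \eq{3.6}--\eq{3.7}. These conditions cut out an open set, inside which the matrices for which the algebraic theorem applies are dense, while the right-hand side of \eq{7.3} is independent of $(\BGe^0_1,\BGe^0_2)$; hence the continuity of $W_f^2$ proved in Section~\sect{continuity} forces equality throughout that open set. I expect the genuine obstacle to be the algebraic step: converting the geometric requirement that $\CV_2$ be spanned by slippable symmetrized rank-one matrices into the determinantal pencil conditions \eq{3.6}--\eq{3.7} is exactly what makes Case (iii) delicate, and the lack of any direct three-dimensional bimode recipe --- forcing the four-pentamode superposition of Section~\sect{multi} --- is the principal structural cost.
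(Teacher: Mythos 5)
Your proposal is correct and follows essentially the same route as the paper: the unconditional lower bound from \eq{2.12}, the matching upper bound obtained by embedding slip walls of unimode/bimode material (oriented by the symmetrized rank-one basis from Case (iii) of the algebraic theorem) into the complementary Avellaneda material, the $\Gd/\Ge$ and $\Gd$ energy scalings with the limits taken in the order homogenization, then $\Gd\to 0$, then $\Ge\to 0$, and finally the continuity argument of Section~\sect{continuity} to pass from generic to all admissible applied fields. You detail only the $W_f^2$ case, but that is the representative (and hardest) one, and the identical machinery with $p=1,3,4,5$ and Cases (i) and (iv) of the algebraic theorem yields \eq{7.1} and \eq{7.2} exactly as in the paper.
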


\begin{theorem}
For two-dimensional composites the four energy functions $W_f^k$, $k=0,1,2,3$ are defined by \eq{2.40} and these characterize the set $G_fU$, 
with $U=(\BC_1,\Gd\BC_0)$, of possible elastic tensors $\BC_*$ of composites of two phases with positive definite elasticity tensors $\BC_1$ and
$\BC_2=\Gd\BC_0$. These energy functions involve a set of applied strains $\BGe^0_i$ and applied stresses $\BGs^0_j$ meeting the orthogonality condition \eq{2.1aa}. The energy function $W_f^0$ is given by
\beq W_f^0( \BGs^0_1, \BGs^0_2, \BGs^0_3)=\sum_{j=1}^3\BGs^0_j:\widetilde{\BC}_f^A(\BGs^0_1,\BGs^0_2,\BGs^0_3)\BGs^0_j,
\eeq{7.4}
as proved by Avellaneda \cite{Avellaneda:1987:OBM}, where $\widetilde{\BC}_f^A(\BGs^0_1,\BGs^0_2,\BGs^0_3)$ is the effective elasticity tensor of a 
complementary Avellaneda material, that is a sequentially layered laminate with the minimum value of the sum of complementary energies 
\beq \sum_{j=1}^3\BGs^0_j:\BC_*^{-1}\BGs^0_j.
\eeq{7.4a}
We also have the trivial result that
\beq
\lim_{\Gd\to 0}W_f^3(\BGe^0_1,\BGe^0_2,\BGe^0_3)  =  0.
\eeq{7.5}
When  $\det\BGe^0_1\leq 0$ we have
\beq \lim_{\Gd\to 0}W_f^1(\BGs^0_1,\BGs^0_2,\BGe^0_1)=\sum_{j=1}^2\BGs^0_j:[\widetilde{\BC}_f^A(\BGs^0_1,\BGs^0_2,0)]^{-1}\BGs^0_j,
\eeq{7.6}
while when $\det\BGe^0_1<0$ or when $f(t)=\det(\BGe^0_1+t\BGe^0_2)$  is quadratic in $t$ with two distinct roots, or when $f(t)$ is linear in $t$ with a nonzero $t$ coefficient we have
\beq \lim_{\Gd\to 0}W_f^2(\BGs^0_1,\BGe^0_1,\BGe^0_2)  =   \BGs^0_1:[\widetilde{\BC}_f^A(\BGs^0_1,0,0)]^{-1}\BGs^0_1.
\eeq{7.7}
\end{theorem}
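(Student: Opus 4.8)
The plan is to establish each claim by sandwiching $W_f^k$ between the lower bounds already recorded in \eq{2.41} and matching upper bounds produced by explicit walled microgeometries. The formula \eq{7.4} for $W_f^0$ requires nothing further, being Avellaneda's theorem, and \eq{7.5} is immediate: taking the composite to be isolated islands of phase~1 in a sea of phase~2 forces $\BC_*\to 0$ as $\Gd\to 0$, so $\sum_{i=1}^3\BGe^0_i:\BC_*\BGe^0_i$ vanishes and meets the trivial bound $0\leq W_f^3$. For \eq{7.6} and \eq{7.7} the lower bounds in \eq{2.41} follow by discarding the nonnegative energy terms $\BGe^0_i:\BC_*\BGe^0_i$ from the minimands in \eq{2.40} and invoking the optimality of the complementary Avellaneda material for the remaining sum of complementary energies. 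The whole content of the theorem therefore reduces to constructing, for each stated configuration of applied fields, composites whose total energies approach these lower bounds as $\Gd\to 0$.

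The construction is the walled one of Sections~5 and~6. First I would observe that the hypotheses on the strains are exactly the two-dimensional algebraic conditions of the theorem in Section~\sect{symr1mat}: for $W_f^1$ the condition $\det\BGe^0_1\leq 0$ is case~(i) with $k=1$, $d=2$, guaranteeing that $\BGe^0_1$ is itself a symmetrized rank-one matrix $\BGe^{(1)}=(\Ba_1\Bn_1^T+\Bn_1\Ba_1^T)/2$; for $W_f^2$ the trichotomy on $f(t)=\det(\BGe^0_1+t\BGe^0_2)$ is case~(ii), guaranteeing that the plane $\CV_2=\mbox{span}(\BGe^0_1,\BGe^0_2)$ is spanned by two independent symmetrized rank-one matrices $\BGe^{(1)},\BGe^{(2)}$ with wall normals $\Bn_1,\Bn_2$. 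Into the complementary Avellaneda material $\widetilde{\BC}_f^A(\BGs^0_1,\BGs^0_2,0)$ (respectively $\widetilde{\BC}_f^A(\BGs^0_1,0,0)$) I would insert the thin walled structure \eq{2.14}, filling each single wall with a unimode material easily compliant to the corresponding $\BGe^{(k)}$ and, for $W_f^2$, each two-wall junction with a bimode material easily compliant to all of $\CV_2$; suitable two-dimensional such materials are exhibited in \fig{4} and \fig{6}. Taking constant trial stresses $\und{\BGs}_j=\BGs^0_j$ and the slip strain field \eq{2.18}--\eq{2.22}, the symmetrized rank-one form ensures the strain derives from a continuous displacement, so the estimates \eq{2.22a}--\eq{2.22b} apply.

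The limit analysis then follows Section~5. For the strain energy term the slip field \eq{2.18} gives contributions scaling like $\Gd/\Ge$ in the bulk of each wall and a junction contribution governed by the compliance of the bimode material, so all such terms vanish provided the limits are taken with $\Gd\to 0$ first and then $\Ge\to 0$ with $\Gd/\Ge\to 0$. For the complementary energy of a supported stress I would use the constant trial field $\und{\BGs}_j=\BGs^0_j$; since the walls occupy volume fraction $O(\Ge)$ and carry bounded complementary energy, the average converges as $\Ge\to 0$ to the Avellaneda value $\BGs^0_j:[\widetilde{\BC}_f^A]^{-1}\BGs^0_j$. Adding the two contributions gives $\lim_{\Gd\to 0}W_f^k\leq\sum_j\BGs^0_j:[\widetilde{\BC}_f^A]^{-1}\BGs^0_j$, which with \eq{2.41} proves \eq{7.6} and \eq{7.7} for the generic fields supplied by Section~\sect{symr1mat}; the underlying effective tensor is the degenerate limit \eq{2.22aa}. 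Finally, since the algebraic result holds only off a set of measure zero, I would invoke the continuity of the energy functions established in Section~\sect{continuity} to pass to all fields meeting the stated determinant conditions.

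The main obstacle I anticipate is not any single estimate but the delicate coupling and ordering of the two limits. One must verify that inserting the walls does not raise the complementary energy of the supported stresses in the limit---this is where the local orthogonality \eq{2.9a} between the trial strains (living in $\CV_p$) and the trial stresses (living in its orthogonal complement) is essential---while simultaneously ensuring the slip strains cost vanishing energy; these competing requirements force the regime $\Gd\ll\Ge\ll 1$, and the junction regions, where several wall materials and phase~1 coexist, are exactly where the bookkeeping of volume fractions against energy densities is most easily mishandled.
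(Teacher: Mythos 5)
Your proposal is correct and follows essentially the same route as the paper: lower bounds from \eq{2.41}, matching upper bounds via the walled complementary Avellaneda construction with the two-dimensional unimode and bimode fillers of \fig{4} and \fig{6}, the identification of the determinant hypotheses with cases (i) and (ii) of the algebraic theorem of Section~\sect{symr1mat}, and the continuity result of Section~\sect{continuity} to extend from generic to all admissible fields. The limit ordering and the accounting of wall and junction energy densities are handled as in Section 5, so nothing essential is missing.
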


This theorem, and the accompanying microstructures, help define what sort of elastic behaviors are theoretically possible in 2-d and 3-d printed materials.
They should serve as benchmarks for the construction of more realistic microstructures that can be manufactured. He have found the minimum over
all microstructures of various sums of energies and complementary energies. More realistic designs can be obtained by adding to this sum a term
that penalizes the surface area as done for a single energy minimization by Kohn and Wirth \cite{Kohn:2014:OFS,Kohn:2015:OFS}.

It remains an open problem to find expressions for the energy functions in the cases not covered by this theorem. Even for an isotropic composite with
a bulk modulus $\Gk_*$ and a shear modulus $\Gm_*$, the set of all possible pairs $(\Gk_*, \Gm_*)$ is still not completely characterized either in the limit
$\Gd\to 0$ or in the limit $\Gd\to\infty$. In these limits the bounds of Berryman and Milton \cite{Berryman:1988:MRC} 
and Cherkaev and Gibiansky \cite{Cherkaev:1993:CEB} decouple and provide no extra information 
beyond that provided by the Hashin-Shtrikman-Hill bounds \cite{Hashin:1963:VAT, Hashin:1965:EBF, Hill:1963:EPR, Hill:1964:TMP}.  While the results of this paper show that in the limit $\Gd\to 0$  one can obtain two or three-dimensional structures attaining the Hashin-Shtrikman-Hill upper bound on $\Gk_*$, while having $\Gm_*=0$, it is not clear what the maximum value for $\Gm_*$ is, given that $\Gk_*=0$. 

One important corollary of this work is that it gives a complete characterization of the possible triplets $(\BGe^0, \BGs^0, f)$ of average strain, $\BGe^0$, average 
stress $\BGs^0$, and volume fraction $f$, that can occur in 2-d and 3-d printed materials in the limit $\Gd\to 0$. This will be discussed in a separate paper \cite{Milton:2016:CCP}.

%%%%%%%%%%%%%%%%%%%%%%%%%%%%%%%%%%%%%%%%%%%%%%%%%%%%%%%%%%%%%%%%%%%%%%%%%
\section*{Acknowledgements}
The authors thank the National Science Foundation for support through grant DMS-1211359. M. Briane wishes to thank the Department of Mathematics of the University of Utah for his stay during March 25-April 3 2016. Mohamed Camar-Eddine is thanked for initial collaborations that marked the beginning of this work. The authors are grateful to Martin Wegener and his group for allowing them to use one of their electron micrographs of the three-dimensional pentamode structure, and to Muamer Kadic for help with processing the micrograph.
\bibliographystyle{siamplain}
\bibliography{/Users/milton/tcbook,/Users/milton/newref}
%\bibliography{/home/milton/tcbook,/home/milton/newref}
%\begin{thebibliography}{5}
%\bibitem{Cia}{\sc P.G.~Ciarlet:} {\em Introduction \`a l'analyse num\'erique matricielle et \`a l'optimisation} %(French) [Introduction to matrix numerical analysis and optimization], Math\'ematiques Appliqu\'ees pour la %Ma\^{\i}trise [Applied Mathematics for the Master's Degree] Masson, Paris 1982, 279~pp.
%\end{thebibliography}
\end{document}